\newtheorem{theorem}{Theorem}[section]
\newtheorem{corollary}[theorem]{Corollary}
\newtheorem{proposition}[theorem]{Proposition}
\newtheorem{lemma}[theorem]{Lemma}
\newtheorem{claim}[theorem]{Claim}
\newcommand{\R}{\mathbb{R}}
\newcommand{\Z}{\mathbb{Z}}
\newcommand{\ind}[1]{\ensuremath{d_{#1}}}
\newcommand{\ketd}[1]{\ensuremath{4(d_{#1}+2)}}
\newcommand{\negd}[1]{\ensuremath{(d_{#1}+1)}}
\newcommand{\egyd}[1]{\ensuremath{(d_{#1}+2)}}
\newcommand{\itr}[1]{^{(#1)}}
\newcommand{\estep}{\textsc{Elementary step}}
\newcommand{\threshold}{\ensuremath{1/{(17m\bar B^{3})}}}
\newcommand{\Deltabound}{\ensuremath{n\bar B^{2}}}
\newcommand{\ole}{\overleftrightarrow}
\newcommand{\obe}{\overleftarrow}
\newcommand{\tf}{\ensuremath{{\tilde f}}}
\newcommand{\arcbound}{\ensuremath{17m\Delta}}
\newcommand{\sink}{\ensuremath{t}}
\newcommand{\lengthtotal}{\ensuremath{390n^3m}}
\newcommand{\ka}{\ensuremath{k}}
\title{A strongly polynomial algorithm for generalized flow maximization}
\author{L\'aszl\'o A. V\'egh
\\
\emph{Department of Mathematics}\\
\emph{London School of Economics and Political Science}\\
\texttt{L.Vegh@lse.ac.uk}
}
\date{}
\begin{document}
\maketitle
\begin{abstract}
A strongly polynomial algorithm is given for the generalized flow
maximization problem. It uses a  new variant of the scaling technique,
called continuous scaling.
The main measure of progress is that within a strongly polynomial number of steps, an arc can be identified 
that must be tight in every dual optimal solution, and thus can be
contracted. As a consequence of the result, we also obtain a strongly polynomial algorithm for the linear feasibility problem with at most two nonzero entries per column in the constraint matrix.
\end{abstract}

\section{Introduction}
The generalized flow model is a classical extension of network flows. Besides the capacity constraints, for every arc $e$ there is a gain factor $\gamma_e>0$, such that flow amount gets multiplied by $\gamma_e$ while traversing  arc $e$. We study the flow maximization problem, where the objective is to send the maximum amount of flow to a sink node $\sink$.
The model was already formulated by Kantorovich \cite{Kantorovich}, as
one of the first examples of Linear Programming; it has several
applications in Operations Research \cite[Chapter 15]{amo}. Gain
factors can be used to model physical changes
such as leakage or theft. 
Other common 
applications use the nodes to represent different types of
entities, e.g. different currencies,
and the gain factors correspond to the exchange rates.

The existence of a strongly polynomial algorithm for Linear Programming is a major open question in the theory of computation. 
 This refers to an algorithm with
 the number of arithmetic operations  polynomially bounded in the number  of variables and constraints, and the size of the numbers during the computations polynomially bounded in the input size.
 A landmark result by Tardos \cite{Tardos86} is an algorithm with
 the running time dependent only on the size of numbers in the
 constraint matrix, but independent from the right-hand side and the
 objective vector. This gives strongly polynomial algorithms for
 several combinatorial problems such as minimum cost flows (see also
 Tardos \cite{Tardos85strong}) and multicommodity flows.

Instead of bounding the sizes of numbers, one might impose structural restrictions on the
constraint matrix. A natural question arises whether there
exists a  strongly polynomial algorithm for linear programs (LPs)
with at most two nonzero entries per column (that can be arbitrary numbers).
This question is still open; as shown by Hochbaum \cite{hochbaum04}, all such LPs can be 
efficiently transformed to equivalent instances of the minimum cost generalized flow problem.
(Note also that every LP can be efficiently transformed to an equivalent one
with at most three nonzero entries per column.)
In 1983, Megiddo \cite{Megiddo83} gave a strongly polynomial algorithm for solving the dual feasibility problem for such LPs; he introduced the concept of strongly polynomial algorithms in the same paper.
A corollary of our result is the first strongly polynomial algorithm for the primal feasibility problem.

Generalized flow maximization is probably the simplest natural class of LPs where no strongly polynomial algorithm was known.
The existence of such an algorithm has been a well-studied and longstanding open problem (see e.g. \cite{Goldberg91,Cohen94,Wayne02,Radzik04,Shigeno07}).
 A strongly polynomial algorithm for a restricted class was given by Adler and Cosares \cite{Adler91}.

\medskip

In this paper, we exhibit a strongly polynomial algorithm for generalized flow maximization.
Let $n$ denote the number of nodes and $m$ the number of arcs in the network, and let $B$ denote the largest integer used in the description of the input (see Section~\ref{sec:probdef} for the precise problem definition).
A strongly polynomial algorithm for the problem entails the following (see \cite{gls}):
 {\em(i)} it uses only elementary arithmetic operations (addition, subtraction, multiplication, division), and comparisons; {\em (ii)} the
number of these operations is bounded by a polynomial of $n$ and $m$; {\em
  (iii)} all numbers occurring in
the computations are rational numbers of encoding size polynomially bounded in 
$n$, $m$ and $\log B$ -- or equivalently, it is a polynomial space algorithm.
Here,  the encoding size of a positive rational number $p/q$ is defined as
$\lceil \log_2 (p+1)\rceil+\lceil \log_2 (q+1)\rceil$. By the running time of a strongly polynomial algorithm we mean the total number of elementary arithmetic operations and comparisons.

\medskip
Combinatorial approaches have been  applied to generalized flows already in the sixties by  Dantzig \cite{Dantzig63} and Jewell \cite{Jewell62}.
However, the first polynomial-time combinatorial algorithm was only given in 1991 by Goldberg, Plotkin and Tardos \cite{Goldberg91}. This was followed by a multitude of further combinatorial algorithms e.g.\ \cite{Cohen94,Goldfarb96,Goldfarb97,Tardos98,Fleischer02,Goldfarb02a,Goldfarb02,Wayne02,Radzik04,Restrepo09,Vegh11}; a central motivation of this line of research was to develop a strongly polynomial algorithm. 
The algorithms of Cohen and Megiddo \cite{Cohen94}, Wayne \cite{Wayne02}, and Restrepo and Williamson \cite{Restrepo09}  present fully polynomial time approximation schemes, that is,  for every $\varepsilon>0$, they can find a solution within   $\varepsilon$ from the optimum value in running time polynomial in $n$, $m$ and $\log(1/\varepsilon)$. This can be transformed to an optimal solution for a sufficiently small $\varepsilon$; however, this value does depend on $B$ and hence the overall running time will also depend on $\log B$.
The current most efficient weakly polynomial algorithms are the interior point approach of Kapoor an Vaidya \cite{Kapoor96} with running time 
$O(m^{1.5}n^2\log B)$, and  the  combinatorial algorithm
 by
Radzik \cite{Radzik04} with running time $\tilde O(m^2n\log B)$.\footnote{The $\tilde O()$ notation hides a polylogarithmic factor.} For a survey on combinatorial generalized flow algorithms,   see Shigeno \cite{Shigeno07}.

\medskip

The generalized flow maximization problem exhibits deep structural similarities to the minimum cost circulation problem, as first pointed out by Truemper \cite{Truemper77}. 
Most combinatorial algorithms for generalized flows, including both algorithms by Goldberg et al. \cite{Goldberg91}, exploit this analogy and adapt existing efficient techniques from minimum cost circulations.
For the latter problem, several strongly polynomial algorithms are known, the first one given by Tardos \cite{Tardos85strong}; others relevant to our discussion are those by Goldberg and Tarjan \cite{Goldberg89}, and by Orlin \cite{Orlin93}; see also \cite[Chapters 9-11]{amo}.
Whereas these algorithms serve as starting points for most generalized flow algorithms, the applicability of the techniques is by no means obvious, and different methods have to be combined.
As a consequence, the strongly polynomial analysis cannot be carried over when adapting minimum cost circulation approaches to generalized flows,  although weakly polynomial bounds can be shown. To achieve a strongly polynomial guarantee, further new algorithmic ideas are required that are specific to the structure of generalized flows. The new ingredients of our algorithm are highlighted in Section~\ref{sec:overview}.

Let us now outline the scaling method for minimum cost circulations, a motivation of our generalized flow algorithm.
The first (weakly) polynomial time algorithm for minimum cost circulations was given by Edmonds and Karp \cite{Edmonds72}, introducing the simple yet powerful idea of scaling (see also \cite[Chapter 9.7]{amo}). The algorithm consists of $\Delta$-phases, with the value of $\Delta>0$ decreasing by a factor of at least two between every two phases, yielding an optimal solution for sufficiently small $\Delta$. In the $\Delta$-phase, the flow is transported in units of $\Delta$ from nodes with excess to nodes with deficiency using shortest paths in the graph of arcs with residual capacity at least $\Delta$. Orlin \cite{Orlin93}, (see also \cite[Chapters 10.6-7]{amo}) devised a strongly polynomial version of this algorithm. The key notion is that of ``abundant arcs''.
In the $\Delta$-phase of the scaling algorithm \cite{Edmonds72}, the arc $e$ is called {\em abundant} if it carries  $>4n\Delta$ units of flow. For such an arc $e$, it can be shown that $x^*_e>0$ must hold for some optimal solution $x^*$. By primal-dual slackness, the corresponding constraint must be tight in every dual optimal solution. Based on this observation, Orlin \cite{Orlin93} showed that such an arc can be contracted; the scaling algorithm is then restarted on the smaller graph. This leads to  a dual optimal solution in strongly polynomial time; that provided, a primal optimal solution can be found via a single maximum flow computation. Orlin \cite{Orlin93} also presents a more efficient but also more sophisticated  implementation of this idea. 

\medskip

Let us now turn to generalized flows. 
The analogue of the scaling method was an important component of the \textsc{Fat-Path} algorithm of \cite{Goldberg91}; the algorithm of Goldfarb, Jin and Orlin \cite{Goldfarb97} and the one in \cite{Vegh11} also use this technique. The notion of ``abundant arcs'' can be easily extended to these frameworks: if an arc $e$ carries a ``large'' amount of flow as compared to $\Delta$, then it must be tight in every dual optimal solution, and hence can be contracted. This idea was already used by Radzik \cite{Radzik04}, to boost the running time of \cite{Goldfarb97}. Nevertheless, it is not known whether an ``abundant arc'' would always appear in any of the above algorithms within a strongly polynomial number of steps.

Our contribution is a new type of scaling algorithm that suits better the dual structure of the generalized flow problem, and thereby the quick appearance of an ``abundant arc'' will be guaranteed.
 Whereas in all previous methods, the scaling factor $\Delta$ remains constant for a linear number of path augmentations, our {\em continuous scaling method} keeps it decreasing in every elementary iteration of the algorithm, even in those that lead to finding the next augmenting path. 

\medskip

The rest of the paper is structured as follows. Section~\ref{sec:probdef} first defines the problem setting,  introduces  relabelings,  gives the characterization of optimality, and defines the notion of $\Delta$-feasibility.
Section~\ref{sec:overview} then gives a more detailed account of the main algorithmic ideas. 

The algorithm is presented in three different versions. First, Section~\ref{sec:weak} describes a relatively simple scaling algorithm called \textsc{Continuous Scaling}, with a weakly polynomial running time guarantee proved in Section~\ref{sec:anal-weak}. Our strongly polynomial algorithm \textsc{Enhanced Continuous Scaling} in Section~\ref{sec:strong} builds on this, by including one additional subroutine, and a framework for contracting arcs. The running time analysis is given in Section~\ref{sec:strong-analysis}. This achieves a strongly polynomial  bound on the number of steps. A strongly polynomial algorith must also satisfy requirement {\em(iii)} on bounded number sizes. This requires further modifications of the algorithm in Section~\ref{sec:rounding} by introducing certain rounding steps.

 Section~\ref{sec:further}
shows reductions between different formulations; in particular, the corollary on LP feasibility problems with at most two nonzeros per column is shown here. Section~\ref{sec:further} is independent from the preceding sections and can be read directly after Section~\ref{sec:probdef}.
Section~\ref{sec:final} concludes with some additional remarks and open questions.

\section{Preliminaries}\label{sec:probdef}
We start by introducing the most general formulation our approach is applicable to. Consider the linear feasibility problem
\begin{align}
Ax&=b\tag{LP2}\label{lp2}\\
0\le x&\le u,\notag
\end{align}
such that every column of $A$ contains at most two nonzero entries. By making use of a reduction by Hochbaum \cite{hochbaum04}, in Section~\ref{sec:lp2} we show that every problem of this form can be reduced to the generalized flow maximization problem as defined next.

Let $G=(V,E)$ be a directed graph with a designated sink node $\sink\in V$.
Let $n=|V|$, $m=|E|$, and for each node $i\in V$, let $\ind{i}$ 
denote total number of arcs incident to $i$ (both entering and leaving).
We will always assume $n\le m$. We do not allow
parallel arcs and hence we may use $ij$ to denote the arc from $i$ to
$j$. This is for notational convenience only, and all results straightforwardly extend to the 
setting with parallel arcs. All paths and cycles in the paper will refer to directed paths and directed cycles.

The following is the {\em standard formulation} of the problem. Let us be
given arc capacities $u:E\rightarrow\mathbb{Q}_{>0}\cup \{\infty\}$ and
gain factors $\gamma:E\rightarrow \mathbb{Q}_{>0}$. 
\begin{align}
\max \sum_{j:j\sink\in E}\gamma_{j\sink}f_{j\sink}-\sum_{j:\sink j\in
  E}f_{\sink j}& \notag\\
\sum_{j:ji\in E}\gamma_{ji}f_{ji}-\sum_{j:ij\in E}f_{ij}&\ge 0\quad 
\forall i\in V-\sink{} \tag{$P_u$}\label{primal-standard}\\
0\le f&\le u\notag
\end{align}
It is common in the literature to define the problem with equalities in the node constraints.
The two forms are essentially equivalent, see e.g. \cite{Shigeno07}; moreover, the form with equality is often
solved via a reduction to (\ref{primal-standard}).
In this paper, we prefer to use yet another equivalent formulation,
where the arcs have no upper capacities,
but there are node demands instead.
A problem given in the standard
formulation can be easily transformed to an equivalent instance in
this form; the transformation is described in
Section~\ref{sec:transform}.
Given a node demand vector $b:V\rightarrow \mathbb{Q}$ and gain factors
$\gamma:E\rightarrow \mathbb{Q}_{>0}$, the {\em uncapacitated formulation} 
is defined as
\begin{align}
\max \sum_{j:j\sink\in E}\gamma_{j\sink}f_{j\sink}-\sum_{j:\sink j\in
  E}f_{\sink j}& \notag\\
\sum_{j:ji\in E}\gamma_{ji}f_{ji}-\sum_{j:ij\in E}f_{ij}&\ge b_i\quad
\forall i\in V-\sink{} \tag{$P$}\label{primal}\\
0\le f&\notag
\end{align}
Note the value of $b_t$ is irrelevant as it is not present in the
formulation; we may e.g. assume $b_t=0$.
For a  vector $f\in \R^{|E|}_{\ge 0}$, let us define the {\em excess} of a node $i\in V$ by
\[
e_i(f):=\sum_{j:ji\in E}\gamma_{ji}f_{ji}-\sum_{j:ij\in E}f_{ij}-b_i.
\]
The node constraints in (\ref{primal}) can be written as $e_i(f)\ge
0$, and the objective is equivalent to maximizing $e_\sink(f)$. When $f$ is clear from the context, we will denote the
excess simply by $e_i:=e_i(f)$. 
By a {\em generalized flow} we mean a feasible solution to (\ref{primal}),
that is, a nonnegative vector $f\in \R^{|E|}_{\ge 0}$ with
$e_i(f)\ge 0$ for all $i\in V- \sink$.
Let us define the {\em surplus} of $f$
as 
\[
Ex(f):=\sum_{i\in V-\sink{}}e_i(f).
\]

It will be convenient to make the following assumptions.
\begin{align}
&\mbox{There is an arc }{i\sink{}\in E\mbox{ for every }i\in
  V-\sink{}};
\tag{$\star$}\label{cond:root}\\
&\mbox{The problem (\ref{primal}) is feasible, and an initial feasible solution $\bar f$ is provided.}\tag{$\star\star$}\label{cond:init}\\
&\mbox{The objective value in (\ref{primal}) is bounded.}\tag{$\star\star\star$}\label{cond:bounded}
\end{align}
 These assumptions are without loss of generality; it is shown in Section~\ref{sec:transform} that any problem in the standard form can be transformed to an equivalent one in the uncapacitated form that also satisfies assumptions \eqref{cond:root} and \eqref{cond:init}. Condition \eqref{cond:root} can be easily achieved by adding new arcs to the sink with gain factors small enough not to influence the solution. To obtain \eqref{cond:init}, observe that $f\equiv 0$ is feasible to (\ref{primal-standard}); $\bar f$ in (\ref{cond:init}) will be the image of $0$ under
the transformation. To justify \eqref{cond:bounded}, in the same Section~\ref{sec:transform} we show how unboundedness can be detected.
Furthermore, in Section~\ref{sec:lp2} we show how an arbitrary
instance of \eqref{lp2} can be reduced to solving two instances of \eqref{primal} satisfying these assumptions.

Let us introduce some further notation. For an arc set $H\subseteq E$,
let $\obe H$ denote the set of reverse arcs, that is, $\obe H:=\{ji:
ij\in H\}$; let $\ole H:=H\cup \obe H$.
We define the gain factor of a reverse arc $ji\in \obe H$ by $\gamma_{ji}:=1/\gamma_{ij}$.
For an arc set $F\subseteq E$ and  node sets $S,T\subseteq V$, let
$F[S,T]:=\{ij\in F: i\in S, j\in T\}$. We also use $F[S]:=F[S,S]$ to
denote the set of arcs in $F$ spanned by $S$. For a node $i\in V$, let
$\delta^{in}(i)$ and $\delta^{out}(i)$ denote the set of arcs entering
and leaving $i$, respectively.  We will use the vector norms $||x||_1=\sum_i|x_i|$ and $||x||_\infty=\max_i|x_i|$.
For integers $a\le b$, let $[a,b]:=\{a,a+1,\ldots,b\}$.

A vector $f:\ole E\rightarrow \mathbb{R}_{\ge 0}$ is called a {\em path flow}, if its support is a path $P=w_1w_2\ldots w_t\subseteq \ole E$, 
and $\gamma_{w_\ell}f_{w_\ell}=f_{w_{\ell+1}}$ for every $1\le \ell\le t-1$.
In other words, the incoming flow equals the outgoing flow in every
internal node of the path. We say that a path flow $f$ sends $\alpha$
units of flow from $p$ to $q$, if the support of $f$ is a $p-q$ path, and the flow value arriving at $q$ 
 equals $\alpha$. Note however, that the amount of flow leaving $p$ is typically different from $\alpha$.

\subsection{Encoding size}\label{sec:encoding}
In the weakly polynomial algorithm, the running time will be dependent
on the encoding size of the input, that consists of rational
numbers. In a strongly polynomial algorithm, all numbers appearing during the computations must 
be rational of encoding size polynomially bounded in the input size.
(We remark that the notion of strongly polynomial algorithms is also applicable to problems with
arbitrary real numbers in the input; this model assumes that every basic arithmetic operation can be carried out in
$O(1)$ time.) 
\medskip

\noindent {\bf Standard formulation.}
We are given an integer $B$ such that all capacities $u$ and gain factors $\gamma$ are
rational numbers, given as quotients of two integers $\le B$.

\medskip

\noindent {\bf Uncapacitated formulation.} 
We give more complicated conditions
on the encoding size of the different quantities. This is in order to maintain good bounds on the encoding size when transforming an
instance from the standard  to the uncapacitated
formulation in Section~\ref{sec:transform}.

Assume the instance satisfies
conditions (\ref{cond:root}), (\ref{cond:init}) and (\ref{cond:bounded}).
We use the integer $\bar B$ to bound the encoding size of the input as follows.
\begin{itemize}
\item 
The arcs can be classified into two types, {\em regular} and {\em auxiliary}, with $\sink$ being the endpoint of every auxiliary arc.
For a regular arc $ij$, the gain factor $\gamma_{ij}$ is given as a rational number, such that $\bar B$ is an integer multiple of the product of the numerators and denominators of all $\gamma_{ij}$ values for regular arcs. For every auxiliary arc $i\sink$, $\gamma_{i\sink}
={1}/\bar B$.
\item 
For every $i\in V-\sink$, $|b_i|\le \bar B$, and is an integer multiple of $1/\bar B$. 
\item
For the initial solution $\bar f$, and for every  $ij\in E$,
$\bar f_{ij}\le \bar B$ and $\bar f_{ij}$ is an
integer multiple of $1/\bar B$.
\end{itemize}

\smallskip

The reduction in Section~\ref{sec:transform} will transform an
instance in the standard formulation with $n$ nodes and $m$ arcs and parameter $B$ to an
uncapacitated instance with $\le m+n$ nodes, $\le 2m$ arcs and $\bar B\le 2B^{4m}$.

\medskip

Our main result is the following.
\begin{theorem}\label{thm:main}
There exists an $O(n^3m^2)$ time strongly polynomial algorithm for the uncapacitated
formulation \eqref{primal} with assumptions (\ref{cond:root}),
(\ref{cond:init}) and (\ref{cond:bounded}). 
\end{theorem}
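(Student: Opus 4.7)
The proof would follow the three-layer structure laid out in the paper's outline. First, I would construct \textsc{Continuous Scaling}, a weakly polynomial algorithm that maintains a generalized flow $f$ together with node labels (relabelings) and a scaling parameter $\Delta > 0$ which, unlike previous scaling methods, decreases continuously throughout the entire execution rather than remaining fixed during each phase. Each elementary step either relabels a node (adjusting the gain factors on incident arcs in the standard way) or executes a $\Delta$-augmentation along a suitable residual path from an excess node to the sink $t$, lowering $\Delta$ slightly as it does so. The weakly polynomial analysis in Section~\ref{sec:anal-weak} would then bound the total number of elementary steps by a polynomial in $n$, $m$, and $\log \bar B$, by exhibiting a potential function that drops geometrically with each elementary step.

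The strongly polynomial step count is then obtained by enhancing the algorithm with an ``abundant arc'' contraction framework, as in \textsc{Enhanced Continuous Scaling} of Section~\ref{sec:strong}. The key structural lemma states that if an arc $e$ carries flow exceeding a suitable threshold proportional to the current $\Delta$ (roughly of order $m \bar B^3 \Delta$, matching the macros \threshold\ and \arcbound\ defined in the preamble), then $e$ must be tight in every dual optimal solution and can be contracted. The main progress measure shows that within a strongly polynomial number of elementary steps—on the order of $O(n^2 m)$ between contraction events—some arc must become abundant. Since each contraction reduces $|V|$, at most $n-1$ contractions occur before the problem becomes trivial, giving roughly $O(n^3 m)$ total elementary steps, each implementable in $O(m)$ arithmetic operations (path searches, relabelings, flow updates), for an aggregate $O(n^3 m^2)$ arithmetic operation bound.

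The hard part is proving the abundant-arc appearance lemma: one must argue that the continuous decrease of $\Delta$, combined with how relabelings interact with flow accumulation on individual arcs, forces some arc's flow-to-$\Delta$ ratio to grow past the threshold in strongly polynomial time. This is precisely where the new continuous scaling technique is indispensable—prior discrete scaling frameworks fail to guarantee this. I would prove it by tracking a carefully designed dual/primal potential along the trajectory of $\Delta$, using the fact that the sink has a relatively small number of incoming paths that can absorb continuously decreasing increments, and that amortization across relabelings forces concentration of flow on some particular arc.

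The final ingredient, Section~\ref{sec:rounding}, addresses requirement (iii) on number sizes: even though the step count is strongly polynomial, naive execution could produce rational labels with exponentially growing denominators because relabelings multiply gain factors. I would introduce periodic rounding steps that project the current node labels and flow values to nearby rationals with denominators bounded by a polynomial in $n$, $m$, and $\bar B$, and then show that the algorithmic invariants (such as $\Delta$-feasibility and the bound on surplus $Ex(f)$) are robust to the induced perturbations, so that rounding does not destroy progress. Combining the weakly polynomial construction, the contraction framework, and the rounding module yields Theorem~\ref{thm:main} with running time $O(n^3 m^2)$.
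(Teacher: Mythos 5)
Your three-tier structure (weakly polynomial \textsc{Continuous Scaling}, contraction via abundant arcs, rounding to control encoding sizes) matches the paper's architecture, but two specifics are off and one essential ingredient is missing.

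First, you state that the abundance threshold is ``roughly of order $m\bar B^3\Delta$, matching the macros \threshold{} and \arcbound.'' These two macros play completely different roles: $\arcbound=17m\Delta$ is the abundance threshold, while $\threshold=1/(17m\bar B^3)$ is the \emph{termination} criterion for the weakly polynomial algorithm. Conflating them gives a threshold depending on $\bar B$, which would defeat the whole point: the abundant-arc criterion must be a fixed multiple of $\Delta$ \emph{independent of $\bar B$}, since whether a given arc is abundant is precisely the combinatorial event that cannot be allowed to wander off into $\log\bar B$ territory.

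Second, and more fundamentally, you omit the \textsc{Filtration} subroutine entirely. The paper is explicit that continuous scaling by itself does \emph{not} guarantee that an abundant arc appears in strongly polynomial time: the ratios $|b_i^\mu|/\Delta$ are nondecreasing, but the elementary steps might always increase these ratios only at nodes where they are negligibly small, so that no ratio ever reaches the threshold $32mn$ at which an abundant incident arc is forced. The entire weakly-versus-strongly polynomial gap rests on this obstruction, and \textsc{Filtration} (a single maximum-flow computation on $V\setminus T$ when all node demands there are tiny relative to $\Delta$) is the subroutine that resolves it. Your proposed mechanism for the abundant-arc appearance lemma --- ``the sink has a relatively small number of incoming paths that can absorb continuously decreasing increments'' --- is not the argument the paper makes and, as stated, does not explain why an $\Omega(m\Delta)$ concentration must occur. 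The actual argument tracks the quantities $\Gamma_i=\log_2\bigl(32mn\Delta/|b_i^\mu|\bigr)$ and the set $D$ of nodes with $|b_i^\mu|\ge\Delta/(16^{k}n)$, and uses \textsc{Filtration} to ensure that almost every decrease of $\Delta$ charges against some node already in $D$, so that $\Gamma_i$ reaches $0$ (forcing an abundant arc) after $O(n^2)$ total halvings. Without supplying something equivalent to \textsc{Filtration} and the $D$-set bookkeeping, your proposal has a genuine gap at precisely the point where the strong polynomiality is established.

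The rest of your outline (running-time accounting of $O(n^2m)$ shrinking iterations, Fibonacci-heap aggregation between them, and periodic rounding to keep labels and flows at bounded encoding size while preserving $\Delta$-feasibility) is consistent with the paper, though the per-block cost is dominated by the $O(nm)$ max-flow inside \textsc{Filtration}, not by $O(m)$ per elementary step.
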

Using the transformation in  Section~\ref{sec:transform}, this gives
an $O(m^5)$ time strongly polynomial algorithm for the standard
formulation \eqref{primal-standard}. Finally, using the reduction in
Section~\ref{sec:lp2}, we get an $O(m^5)$ algorithm for the linear
feasibility problem \eqref{lp2} with $n$ constraints and $m$ variables.

\subsection{Labelings and optimality conditions}\label{sec:genflow-opt}
Dual solutions to (\ref{primal}) play a crucial role in the entire
generalized flow literature. Let $y:V\rightarrow \R_{\ge0}$ be a
solution to the dual of (\ref{primal}). 
Following Glover and Klingman \cite{Glover73}, the literature standard
is not to consider the $y$ values but their inverses instead.
With $\mu_i:=1/y_i$, we can write the dual of (\ref{primal}) in the following form.
\begin{align}
\max& \sum_{i\in V}\frac{b_i}{\mu_i}\notag\\
\gamma_{ij}{\mu_i}&\le \mu_j\quad \forall ij\in E\tag{$D$}\label{dual}\\
\mu_i&>0\quad \forall i\in V-\sink\notag\\
\mu_t&=1\notag
\end{align}
A feasible solution $\mu$ to this program will be called a {\em relabeling} or {\em labeling}. An {\em optimal labeling} is an optimal solution to (\ref{dual}).
Whereas there could be values $\mu_i=\infty$ corresponding to $y_i=0$, assumption (\ref{cond:root}) guarantees that
all $\mu_i$ values must be finite.
A useful and well-known property is the following. 
\begin{proposition}
Given an optimal solution to (\ref{dual}), an optimal solution to (\ref{primal}) can be obtained in strongly polynomial time, and conversely,
given an optimal solution to (\ref{primal}), an optimal solution to (\ref{dual}) can be obtained in strongly polynomial time.
\end{proposition}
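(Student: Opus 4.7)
The plan is to invoke LP duality and complementary slackness between (\ref{primal}) and (\ref{dual}), reducing each direction to an auxiliary problem with a known strongly polynomial algorithm.

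For the direction (\ref{dual}) $\to$ (\ref{primal}), given an optimal labeling $\mu^*$, I would let $E^* := \{ij \in E : \gamma_{ij}\mu_i^* = \mu_j^*\}$ be the set of tight arcs. Complementary slackness forces every optimal $f^*$ to vanish outside $E^*$, and conversely every feasible flow supported on $E^*$ achieves the dual objective. The rescaling $\hat f_{ij} := f_{ij}/\mu_i^*$ turns each tight-arc gain into $1$ and transforms the primal restricted to $E^*$ into an ordinary (unit-gain) uncapacitated transshipment problem with rescaled rational demands $b_i/\mu_i^*$; indeed, $e_i(f) = \mu_i^*\bigl(\sum_{ji \in E^*}\hat f_{ji} - \sum_{ij \in E^*}\hat f_{ij}\bigr)$ on tight arcs. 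This auxiliary problem is a classical min-cost-flow/feasibility problem solvable in strongly polynomial time, e.g.\ by Orlin's algorithm \cite{Orlin93}; undoing the rescaling yields the required $f^*$.

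For the direction (\ref{primal}) $\to$ (\ref{dual}), I would construct $\mu^*$ from the residual graph $\ole E$ associated to $f^*$ (with reverse gains $1/\gamma_{ij}$ on reverse arcs of positive $f^*$): define $\mu_i^*$ to be the reciprocal of the maximum gain of any directed path from $i$ to $\sink$ in $\ole E$. Since $f^*$ is optimal, the residual graph contains no flow-generating cycle and no generalized augmenting structure reaching $\sink$, so these maxima are well-defined and finite; assumption (\ref{cond:root}) supplies a baseline path $i\sink$ for every $i$. One verifies $\mu_\sink^* = 1$ and $\gamma_{ij}\mu_i^* \le \mu_j^*$ on all of $E$, with equality along $\text{supp}(f^*)$. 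Strong duality then confirms optimality of $\mu^*$. Computing these labels is a single-source longest-product path routine on $\ole E$; setting $w_{ij} := -\log\gamma_{ij}$ converts it to a shortest path problem in a graph with no negative cycles, solved by Bellman--Ford in $O(nm)$ arithmetic operations (the logarithms are only conceptual—one runs the routine natively on products to keep the algorithm strongly polynomial).

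The main obstacle I anticipate lies in the second direction: one has to show simultaneously that the residual-path labels (i) match the complementary-slackness values forced along $\text{supp}(f^*)$, and (ii) remain consistent at nodes that $f^*$ fails to connect to $\sink$ through its own support. Both checks are standard consequences of the no-GAP characterization of optimal generalized flows combined with assumption (\ref{cond:root}), which guarantees that no label is pushed to $\infty$.
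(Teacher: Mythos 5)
Your proposal is correct and follows essentially the same route as the paper: the paper proves the first direction via \textsc{Tight-Flow}$(V,\mu)$ (Theorem~\ref{thm:tight-flow}(i)), which is exactly your ``restrict to tight arcs, rescale, and solve a classical flow feasibility problem'' step (the paper uses a single max flow rather than invoking a min-cost-flow solver, but the idea is the same), and the second direction is indicated as the argument in Lemma~\ref{lem:init}, which is precisely your ``labels as reciprocals of maximum residual path gains to the sink'' construction. The verification you sketch (no flow-generating cycles in $E_{f^*}$, finiteness via (\ref{cond:root}), tightness on $\mathrm{supp}(f^*)$ from the shortest-path inequality applied to both $ij$ and its reverse) matches the paper's intent.
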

In fact, our strongly polynomial algorithm proceeds via finding an optimal solution to (\ref{dual}), and computing the primal optimal solution via a single maximum flow computation. 
The first part of the above proposition is proved in
Theorem~\ref{thm:tight-flow}(i), whereas the second part (which is not
needed for our algorithm) can be shown using an argument similar to
the proof of Lemma~\ref{lem:init}.

Relabelings will be used in all parts of the algorithm and proofs. For
a generalized flow $f$ and a labeling $\mu$, we define the relabeled
flow $f^\mu$ by 
\[f^\mu_{ij}:=\frac{f_{ij}}{\mu_i}\]
for all $ij\in E$. This can be interpreted as changing the base unit of measure at the nodes 
(i.e. in the example of the currency exchange network, it corresponds to changing the unit from pounds to pennies).
 To get a problem setting equivalent to the original one, we have to relabel all other quantities accordingly. That is, we define
relabeled gains, demands, excesses and surplus by
\[ \gamma_{ij}^\mu:=\gamma_{ij}\frac{\mu_i}{\mu_j},\quad
 b^\mu_i:=\frac{b_i}{\mu_i},\quad e^\mu_i:=\frac{e_i}{\mu_i}, \mbox{ and }\quad Ex^\mu(f):=\sum_{i\in V-\sink}e_i^\mu,
\]
 respectively.
Another standard notion is the {\em residual network} $G_f=(V,E_f)$ of a generalized flow $f$, defined as
\[
E_f:=E\cup \{ij: ji\in E, f_{ji}>0\}.
\]
Arcs in $E$  are called \textsl{forward arcs}, while arcs in the second set are
\textsl{reverse arcs}. 
Recall that for a reverse arc $ji$ we defined $\gamma_{ji}=1/\gamma_{ij}$.
Also, we define $f_{ji}:=-\gamma_{ij}f_{ij}$ for every reverse arc $ji\in E_f$.
By increasing (decreasing) $f_{ji}$ by $\alpha$ on a reverse arc $ji\in E_f$, we mean decreasing (increasing) $f_{ij}$ by $\alpha/\gamma_{ij}$.
The input graph $G=(V,E)$ is allowed to have pairs of oppositely
directed arcs $ij$ and $ji$, making
our notation slightly ambiguous: for an arc $ij$, we will denote its
reverse arc by $ji$, which might be an arc parallel to the original
arc from $j$ to $i$ in the input. However, this should not be a source of confusion: whenever the arc $ji$ is mentioned in the context of $ij$, it will always refer to the reverse arc.

The crucial notion of conservative labelings is motivated by primal-dual slackness.
Let $f$ be a generalized flow (that is, a feasible solution to
(\ref{primal})), and let $\mu: V\rightarrow \R_{>0}$.
We say that $\mu$ is a {\em conservative labeling} for $f$, if
$\mu$ is a feasible solution to (\ref{dual}) with the further requirement that 
$\gamma^{\mu}_{ij}= 1$ whenever $f_{ij}>0$ for $ij\in E$.
The following characterization of optimality is a straightforward consequence of primal-dual slackness in Linear Programming.
We state the optimality conditions both for the uncapacitated
formulation (\ref{primal}), and for the standard formulation
(\ref{primal-standard}). In the latter part we do not assume
(\ref{cond:root}), and therefore $\mu_i=\infty$ is also allowed.

\begin{theorem}\label{thm:genflow-opt}
\begin{enumerate}[(i)]
\item
 Assume  (\ref{cond:root}) holds. 
A generalized flow $f$ is an optimal solution to (\ref{primal}) if and only if
there exists a finite conservative labeling $\mu$, and  $e_i=0$ for
all $i\in V-\sink$.
\item
A feasible solution $f$ to the standard form
(\ref{primal-standard}) is optimal if and only if there exists a
function $\mu: V\rightarrow \R_{>0}\cup\{\infty\}$
such that $\mu_\sink=1$, and 
$\gamma_{ij}\mu_i\le \mu_j$ if $f_{ij}=0$, $\gamma_{ij}\mu_i= \mu_j$ if
$0<f_{ij}<u_{ij}$, and $\gamma_{ij}\mu_i\ge \mu_j$ if $f_{ij}=u_{ij}$; 
further, $e_i=0$ whenever 
 $\mu_i<\infty$.
\end{enumerate}
\end{theorem}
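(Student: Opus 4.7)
The plan is to derive both parts as a direct application of LP duality and complementary slackness, using the substitution $\mu_i = 1/y_i$ to translate between the two parametrizations of the dual. I would first write (\ref{primal}) in standard form, introduce nonnegative multipliers $y_i$ for each constraint $e_i(f) \geq 0$ (with $i \in V - \sink$), and set $y_\sink := 1$ so the objective can be rewritten as $b_\sink + \sum_i y_i e_i(f) - \sum_i y_i b_i$. Collecting the coefficient of each $f_{ij}$ gives dual constraints $\gamma_{ij} y_j \leq y_i$ for all $ij \in E$. After substituting $\mu_i = 1/y_i$ (with $\mu_i = \infty$ when $y_i = 0$), this is exactly (\ref{dual}); feasible solutions of (\ref{dual}) are thus in bijection with feasible duals of (\ref{primal}).

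For part (i), assumption (\ref{cond:root}) gives an arc $i\sink \in E$ for each $i \neq \sink$, so the dual constraint $\gamma_{i\sink} y_\sink \leq y_i$ together with $y_\sink = 1$ and $\gamma_{i\sink} > 0$ forces $y_i > 0$, i.e.\ $\mu_i$ is finite. By strong LP duality, $f$ is optimal iff a feasible dual $y$ exists satisfying complementary slackness. The primal slackness $y_i e_i(f) = 0$ combined with $y_i > 0$ yields $e_i = 0$ for every $i \neq \sink$. The dual slackness $(y_i - \gamma_{ij} y_j) f_{ij} = 0$ yields $\gamma_{ij} y_j = y_i$ whenever $f_{ij} > 0$, i.e.\ $\gamma^\mu_{ij} = 1$, which is exactly the definition of a finite conservative labeling. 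The converse implication follows from weak duality once both sides satisfy complementary slackness.

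For part (ii), the same derivation is applied to (\ref{primal-standard}). Now each finite upper bound $f_{ij} \leq u_{ij}$ contributes a nonnegative dual variable $z_{ij}$, so the dual constraint becomes $\gamma_{ij} y_j \leq y_i + z_{ij}$. Combining the two complementary slackness relations $z_{ij}(u_{ij} - f_{ij}) = 0$ and $(y_i + z_{ij} - \gamma_{ij} y_j) f_{ij} = 0$ and then eliminating $z_{ij}$ produces the three cases stated: $f_{ij} = 0$ forces $z_{ij} = 0$ and gives $\gamma_{ij} y_j \leq y_i$, i.e.\ $\gamma_{ij} \mu_i \leq \mu_j$; $0 < f_{ij} < u_{ij}$ forces $z_{ij} = 0$ and $\gamma_{ij} y_j = y_i$; and $f_{ij} = u_{ij}$ gives $\gamma_{ij} y_j \geq y_i$, i.e.\ $\gamma_{ij} \mu_i \geq \mu_j$. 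Since (\ref{cond:root}) is not assumed, $y_i = 0$ (equivalently $\mu_i = \infty$) is allowed, in which case $y_i e_i(f) = 0$ is automatic and imposes no constraint on $e_i$; when $\mu_i < \infty$ we still obtain $e_i = 0$.

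I do not expect any serious obstacle. The only care needed is bookkeeping around the convention $\mu_i = 1/y_i$: I would carry out all complementary slackness reasoning in $y$-coordinates and translate only at the very end, so that the case $\mu_i = \infty$ and the associated inequalities (read off uniformly from $\gamma_{ij} y_j \leq y_i$) need no separate case analysis.
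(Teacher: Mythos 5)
Your proof is correct and follows exactly the route the paper intends: the paper states this theorem without proof, remarking only that it is ``a straightforward consequence of primal-dual slackness in Linear Programming,'' and your write-up simply spells out that standard duality argument (introduce multipliers $y_i$, substitute $\mu_i = 1/y_i$, read off complementary slackness, with the capacitated upper bounds contributing the extra multipliers $z_{ij}$ in part (ii)). Nothing is missing; you have merely made explicit what the paper leaves to the reader.
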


Given a labeling $\mu$, we say that an arc $ij\in E_f$ is {\em tight}
if $\gamma_{ij}^\mu=1$. A directed path in $E_f$ is called {\em tight} if it
consists of tight arcs.

\subsection[Delta-feasible labels]{$\Delta$-feasible
  labels}\label{sec:cons-label}
Let us now introduce a relaxation of conservativity crucial in the algorithm.
This is new notion, although similar concepts have been used in previous scaling algorithms \cite{Goldfarb96,Vegh11}.
Section~\ref{sec:overview} explains the background and motivation of this notion.
Given a labeling $\mu$, let us call arcs in $E$ with $\gamma_{ij}^\mu<1$
{\em non-tight}, and denote their set by
\[
F^\mu:=\{ij\in E: \gamma_{ij}^\mu<1\}. 
\]
For every $i\in V$, let
\[
R_i:=\sum_{j: ji\in F^\mu}\gamma_{ji}f_{ji}
\]
denote the total flow incoming on non-tight arcs; let $R_i^\mu:=\frac{R_i}{\mu_i}=\sum_{j:ji\in F^\mu}\gamma_{ji}^\mu f_{ji}^\mu$.
For some $\Delta\ge 0$, let us
define the {\em $\Delta$-fat graph} as 
\[
E^\mu_f(\Delta)=E\cup \{ij: ji\in E, f^\mu_{ji}> \Delta\}.
\]
We say that $\mu$ is a {\em $\Delta$-conservative labeling} for $f$,
or that $(f,\mu)$ is a {\em $\Delta$-feasible pair}, if
\begin{itemize}
\item $\gamma_{ij}^\mu\le 1$ holds for all $ij\in E^\mu_f(\Delta)$, and 
\item $\mu_\sink=1$, and $\mu_i>0$, $e_i\ge R_i$  for every $i\in V-\sink{}$.
\end{itemize}

Note that in particular, $\mu$ must be  a feasible solution to (\ref{dual}).
The first condition is equivalent to requiring $f_{ij}^\mu\le \Delta$
for every non-tight arc.
Note that $0$-conservativeness is identical
to conservativeness: $E^\mu_f(0)=E^\mu_f$, and therefore 
every arc carrying positive flow must be tight;
the second condition simply gives $e_i\ge 0$ whenever $\mu_i>0$. 
The next lemma can be seen as the converse of this observation.

\begin{lemma}\label{lem:make-conservative}
Let $(f,\mu)$ be a $\Delta$-feasible pair for some $\Delta>0$. Let us
define the generalized flow $\tilde f$ with $\tilde f_{ij}=0$ if
$ij\in F^\mu$ and $\tilde f_{ij}=f_{ij}$ otherwise. Then $\tilde f$ is
a feasible generalized flow,  $\mu$ is a conservative labeling for $\tilde f$, and $Ex^\mu(\tilde f)\le Ex^\mu(f)+|F^\mu|\Delta$.
\end{lemma}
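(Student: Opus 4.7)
The plan is to track how the excess at each node changes when we zero out the flow on each non-tight arc, using the $\Delta$-feasibility conditions to control these changes. First I would observe that the difference $\tilde f - f$ is supported on $F^\mu$, and writing $e_i(\tilde f)$ in terms of $e_i(f)$ gives
\[
e_i(\tilde f) \;=\; e_i(f) \;-\; \sum_{j:ji\in F^\mu}\gamma_{ji}f_{ji} \;+\; \sum_{j:ij\in F^\mu}f_{ij} \;=\; e_i(f) - R_i + \sum_{j:ij\in F^\mu}f_{ij}.
\]
Since $(f,\mu)$ is $\Delta$-feasible we have $e_i \ge R_i$ for every $i\in V-\sink$, and the last sum is nonnegative, so $e_i(\tilde f)\ge 0$ and $\tilde f$ is a feasible generalized flow.

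Next, for conservativeness I would use that the first condition of $\Delta$-feasibility applied to the sub-network $E\subseteq E_f^\mu(\Delta)$ forces $\gamma_{ij}^\mu\le 1$ for every arc $ij\in E$. If $\tilde f_{ij}>0$ then by construction $ij\notin F^\mu$, so $\gamma_{ij}^\mu\ge 1$; combined with the previous inequality we get $\gamma_{ij}^\mu=1$. Since $\mu$ is already a feasible solution of \eqref{dual} by $\Delta$-feasibility, this shows $\mu$ is conservative for $\tilde f$.

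For the surplus bound I would rewrite the change in relabeled surplus by summing the per-arc contributions of the flow removal. Each arc $ij\in F^\mu$ contributes $+f_{ij}^\mu$ to $e_i^\mu(\tilde f)-e_i^\mu(f)$ and $-\gamma_{ij}^\mu f_{ij}^\mu$ to $e_j^\mu(\tilde f)-e_j^\mu(f)$. Summing over $i\in V-\sink$ and separating arcs according to whether $\sink$ is an endpoint yields
\[
Ex^\mu(\tilde f) - Ex^\mu(f) \;=\; \sum_{ij\in F^\mu[V-\sink]}(1-\gamma_{ij}^\mu)f_{ij}^\mu \;+\; \sum_{i\sink\in F^\mu,\ i\ne \sink} f_{i\sink}^\mu \;-\; \sum_{\sink j\in F^\mu}\gamma_{\sink j}^\mu f_{\sink j}^\mu.
\]
The third sum is nonnegative and only helps. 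In each of the first two sums the summand is at most $f_{ij}^\mu\le \Delta$, where the inequality is precisely the reformulation of the first $\Delta$-feasibility condition for non-tight arcs noted in the text. Since the number of arcs appearing in the first two sums is at most $|F^\mu|$, we conclude $Ex^\mu(\tilde f)\le Ex^\mu(f)+|F^\mu|\Delta$.

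The only mildly delicate point will be the bookkeeping around the sink $\sink$ in the surplus computation: because $\sink$ is excluded from the surplus, non-tight arcs incident to $\sink$ contribute asymmetrically, and one must verify that the negative terms (from arcs leaving $\sink$) do not need to be bounded while the positive terms (from arcs entering $\sink$ or spanning $V-\sink$) are all individually at most $\Delta$. Everything else is direct algebra using the definitions of relabeling, $F^\mu$, and $\Delta$-feasibility.
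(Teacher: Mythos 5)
Your proof is correct and follows essentially the same route as the paper: express the change in excess per node via the removed non\mbox{-}tight incoming and outgoing flow, invoke $e_i\ge R_i$ for feasibility, read off conservativeness from $\gamma_{ij}^\mu\le 1$ together with $ij\notin F^\mu\Rightarrow\gamma_{ij}^\mu\ge 1$, and bound the surplus increase by charging at most $\Delta$ to each non\mbox{-}tight arc. Your per-arc decomposition of $Ex^\mu(\tilde f)-Ex^\mu(f)$ and the explicit handling of arcs incident to $\sink$ simply make precise the paper's one-line observation that zeroing a non\mbox{-}tight arc $ij$ creates at most $f^\mu_{ij}\le\Delta$ units of relabeled excess at $i$ while the decrease at $j$ can only help.
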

\begin{proof}
 For feasibility, we 
need to verify $e_i(\tilde f)\ge 0$ for all $i\in V-\sink$. This follows since
\[
e_i(\tilde f)\ge e_i(f)-\sum_{j: ji\in F^\mu}\gamma_{ji}f_{ji}=e_i(f)-R_i\ge 0.
\]
It is straightforward by the construction that  $\gamma_{ij}^\mu\le 1$
for every $ij\in E$ with equality whenever $\tilde f_{ij}>0$. This
shows that $\mu$ is a conservative labeling.
For the last part, observe that decreasing the flow value to 0 on a non-tight arc $ij$ may create $f^\mu_{ij}\le \Delta$ units of relabeled excess at $i$.
\end{proof}

\begin{claim}\label{cl:R-i-bound}
In a $\Delta$-conservative labeling, $R^\mu_i<\ind{i}\Delta$ holds for every $i\in V$.
\end{claim}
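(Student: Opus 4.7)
The plan is to bound each term in the sum $R_i^\mu=\sum_{j:\,ji\in F^\mu}\gamma_{ji}^\mu f_{ji}^\mu$ separately, using the definition of $\Delta$-conservativity applied to the \emph{reverse} arcs. The key lemma I would first establish is that for every $ji\in F^\mu$ one has $f_{ji}^\mu\le\Delta$. This is immediate by contraposition: if $f_{ji}^\mu>\Delta$, then since $ji\in E$ the reverse arc $ij$ lies in $E^\mu_f(\Delta)$ by definition of the $\Delta$-fat graph, so $\Delta$-conservativity forces $\gamma_{ij}^\mu\le 1$; but $\gamma_{ij}^\mu=1/\gamma_{ji}^\mu$, giving $\gamma_{ji}^\mu\ge 1$, which contradicts $ji\in F^\mu$.

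Once this observation is in place, the claim becomes a one-line computation: for each $ji\in F^\mu$ we have $\gamma_{ji}^\mu<1$ and $f_{ji}^\mu\le\Delta$, so each summand satisfies $\gamma_{ji}^\mu f_{ji}^\mu<\Delta$ (strictly, thanks to $\gamma_{ji}^\mu<1$ whenever the summand is nonzero). Summing over the at most $|\delta^{in}(i)|\le d_i$ incoming arcs in $F^\mu$ yields $R_i^\mu<d_i\Delta$. In the degenerate case where no non-tight arc enters $i$, one simply has $R_i^\mu=0$, which is still strictly less than $d_i\Delta$ in the intended regime $d_i\Delta>0$.

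I do not anticipate any real obstacle. The entire content of the claim is a direct translation of the $\Delta$-conservativity condition, read on the reverse-arc side: requiring $\gamma_{ij}^\mu\le 1$ on residual reverse arcs of weight exceeding $\Delta$ is exactly the same as saying that on every non-tight forward arc the relabeled flow is capped by $\Delta$. After that, it is pure counting.
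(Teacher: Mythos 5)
Your proof is correct and follows essentially the same route as the paper's: bound each summand $\gamma_{ji}^\mu f_{ji}^\mu<f_{ji}^\mu\le\Delta$ using $\gamma_{ji}^\mu<1$ on non-tight arcs and the cap $f_{ji}^\mu\le\Delta$, then count. The only difference is that you re-derive the fact that $\Delta$-conservativity caps $f_{ij}^\mu\le\Delta$ on every non-tight arc, whereas the paper already records this equivalence as a remark immediately after the definition in Section~\ref{sec:cons-label} and simply invokes it.
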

\begin{proof}
If $\mu$ is a $\Delta$-conservative labeling, then $f^\mu_{ji}\le
\Delta$ holds for every non-tight arc $ji$; also note that the relabeled flow arriving  from $j$ on a non-tight arc is 
$\gamma_{ji}^\mu f_{ji}^\mu<f_{ji}^\mu\le\Delta$, and hence $R^\mu_i<\ind{i}\Delta$. 
\end{proof}

\subsection{Overview of the algorithms}\label{sec:overview}
We now informally describe some fundamental ideas of our algorithms
\textsc{Continuous Scaling} and \textsc{Enhanced Continuous Scaling},
and explain their relations to previous generalized flow
algorithms. The precise algorithms and arguments will be given in the
later sections.

\subsubsection*{Basic features of the algorithms}
Given a generalized flow $f$, a  cycle $C$ in the residual graph $E_f$
is called {\em flow-generating}, if $\gamma(C)=\prod_{e\in
  C}\gamma_e>1$. If there exists a flow-generating cycle, then some
positive amount of flow can be sent around it to create positive
excess in an arbitrary node $i$ incident to $C$.

The notion of conservative labellings is closely related to flow
generating cycles. Notice that for an arbitrary labeling $\mu$,
$\gamma(C)=\gamma^\mu(C)$. Therefore, if $\mu$ is a finite
conservative labeling, then $E_f$ cannot contain any flow-generating cycles. It is also easy to verify the converse: if there are no flow-generating cycles, then there exists a conservative labeling (see also Lemma~\ref{lem:init}). 

The {\sc Maximum-mean-gain cycle-canceling} procedure, introduced by
Goldberg et al. \cite{Goldberg91},
can be used to eliminate all flow-generating cycles efficiently.
The subroutine proceeds by choosing a cycle $C\subseteq E_f$ maximizing $\gamma(C)^{1/|C|}$, and from an arbitrary node $i$ incident to $C$, 
sending the maximum possible amount of flow around $C$ admitted by the capacity constraints, thereby increasing the excess $e_i$. 
It terminates once there are no more flow-generating cycles left in $E_f$.
 This is a natural analogue of
the minimum mean cycle cancellation algorithm of Goldberg and Tarjan
\cite{Goldberg89} for minimum cost circulations. Radzik
\cite{Radzik93}  (see also \cite{Shigeno07}) gave a strongly polynomial running time bound
$O(m^2 n \log^2 n)$ for the {\sc Maximum-mean-gain cycle-canceling} algorithm.

Our algorithm also starts with performing this algorithm, with the
input being the initial solution $\bar f$ provided by
(\ref{cond:init}). Hence one can obtain a feasible solution $f$ along with a conservative labeling $\mu$ in strongly polynomial time.

Such an $f$ can be transformed to an optimal solution using Onaga's
algorithm \cite{Onaga67}: while there exists a node $i\in V-\sink$
with $e_i>0$, find a {\em highest gain augmenting path} from $i$ to
$\sink$, that is, a path $P$ in the residual graph $E_f$ with the product
of the gains maximum.  Send the maximum amount of flow on this
augmenting path enabled by the capacity constraints.
A conservative labeling can be used to identify
such paths: we can transform a conservative labeling to a {\em
  canonical labeling} (see \cite{Goldberg91}), where every node $i$ is
connected to the sink via a tight  path. Such a canonical labeling can
be  found via a Dijkstra-type algorithm, increasing the labels of certain nodes.
The correctness of Onaga's algorithm follows by the observation  that sending flow on a tight path maintains the conservativeness of the labeling, hence no new flow-generating cycles may appear.

Unfortunately, Onaga's algorithm may run in exponentially many steps,
and  might not even terminate
if the input is irrational.
The \textsc{Fat-Path} algorithm \cite{Goldberg91} introduces a scaling technique to overcome this difficulty.
The algorithm maintains a scaling factor $\Delta$ that decreases
geometrically. In the $\Delta$-phase, flow is sent on a highest gain
``$\Delta$-fat'' augmenting path, that is, a highest gain path among
those that have sufficient capacity to send $\Delta$ units of flow to
the sink. In our notation, these are paths in $E_f^\mu(\Delta)$. 
However, path augmentations might create new flow-generating cycles,
which have to be repeatedly cancelled by calling the cycle-canceling subroutine at the beginning of every phase.

Our notion of $\Delta$-feasible pairs in Section~\ref{sec:cons-label}
is motivated by the idea of $\Delta$-fat paths: note that every arc in
the $\Delta$-fat graph  $E_f^\mu(\Delta)$  has sufficient capacity to send $\Delta$ units of relabeled flow.
A main step in our algorithm will be sending $\Delta$ units of
relabeled flow on a tight path in $E_f(\Delta)$ from a node with
``high'' excess to the sink $\sink$ or  another node with ``low'' excess.
This is in contrast to \textsc{Fat-Path} and most other algorithms,
where these augmenting paths always terminate in the sink $\sink$.
We allow other nodes as well in order to maintain $e_i\ge R_i$ are
througout the algorithm. The purpose of these conditions is to
make sure that we always stay ``close'' to a conservative labeling: recall Lemma~\ref{lem:make-conservative} asserting that
 if $(f,\mu)$ is a
$\Delta$-feasible pair, then if we set the flow values to 0 on every
non-tight arc, the resulting $\tilde f$ is a feasible solution to
(\ref{primal}) not containing any flow-generating cycles. That is the
reason why we
  need to call the cycle-canceling algorithm only once, at the
initialization, in contrast to \textsc{Fat-Path}.

Similar ideas have been already used previously. The algorithm of Goldfarb, Jin and Orlin \cite{Goldfarb96} also uses a single initial cycle-canceling and then performs highest-gain augmentations in a scaling framework, combined with a clever bookkeeping on the arcs.
The algorithm in \cite{Vegh11} does not perform any cycle
cancellations and uses a homonymous notion of
$\Delta$-conservativeness that is closely related to ours; however, it
uses a different problem setup (called ``symmetric formulation''), and includes a condition stronger than $e_i\ge R_i$.

\subsubsection*{The way to the strongly polynomial bound}
The basic principle of our strongly polynomial algorithm is motivated
by Orlin's strongly polynomial algorithm for minimum cost circulations
(\cite{Orlin93}, see also \cite[Chapters 10.6-7]{amo}). The true
purpose of the algorithm is to compute a dual optimal solution to
(\ref{dual}). Provided a dual optimal solution, we can compute a
primal optimal solution to (\ref{primal}) by a single maximum flow
computation on the network of tight arcs (see
Theorem~\ref{thm:tight-flow}(i)).

The main measure of progress is identifying an arc $ij\in E$ that
must be tight in every dual optimal solution. Such an arc can be
contracted, and an optimal dual solution to the contracted instance
can be easily extended to an optimal dual solution on the original
instance (see Sections~\ref{sec:abundant}, \ref{sec:dual-prop}).
The algorithm could be simply restarted from scratch in the contracted
instance. Our algorithm \textsc{Enhanced Continuous Scaling} is
somewhat more complicated and keeps the previous primal solution 
to achieve better running time bounds by a
global analysis of all contraction phases.

We use a scaling-type algorithm to identify such arcs tight in every dual
optimal solution.
Our algorithm maintains a scaling parameter $\Delta$, and a
$\Delta$-feasible pair $(f,\mu)$ such that $Ex^\mu(f)\le 16m\Delta$.
Using standard flow decomposition techniques,
it can be shown that an arc $ij$ with $f^\mu_{ij}\ge\arcbound$ must be
positive in some optimal solution $f^*$ to (\ref{primal}) (see Theorem~\ref{thm:close-opt}). Then by
primal-dual slackness it follows that this arc is tight in every dual
optimal solution. Arcs with $f^\mu_{ij}\ge\arcbound$ will be called {\em
  abundant}.

A simple calculation (see the proof of Claim~\ref{cl:whenabundant-mod}) shows that once 
$|b_i^\mu|\ge 32mn\Delta$ for a node $i\in V-\sink$, there must be an
abundant arc leaving or entering $i$. Hence our goal is to design
an algorithm where such a node appears within a strongly polynomial
number of iterations.

A basic step in the scaling approaches
(e.g. \cite{Goldberg91,Goldfarb96,Vegh11}) is sending $\Delta$ units
of relabeled flow on a tight path; we shall call this a path augmentation.
In all previous approaches, the scaling factor $\Delta$ remained fixed
for a number of  path augmentations, and reduces by a substantial
amount (by at least a factor of two) for the next $\Delta$-phase.
Our main idea is what we call {\em continuous scaling}: the
boundaries between $\Delta$-phases are dissolved, and the scaling
factor decreases continuously, even during the iterations that lead to
finding the next path for augmentation. The precise description will
be given in Section~\ref{sec:weak}; in what follows, we give a
high-level overview of some key features.

We shall have a set $T_0$ with nodes of ``high'' relabelled excess;
 another set $N$ will
be the set consisting of the sink $\sink$ and further nodes with ``low'' relabelled excess.
 We will look for tight paths connecting a node in
$T_0$ to one in $N$; we will send $\Delta$ units of relabeled flow along
such a path.
In an intermediate elementary step, we let $T$ to denote the set of
nodes reachable from $T_0$ on a tight path; if it does not intersect
$N$, then we increase the labels $\mu_i$ for all $i\in T$ by the
same factor $\alpha$ hoping
that a
new tight arc appears between $T$ and $V\setminus T$, and thus $T$ can
be extended. We simultaneously
decrease the value of $\Delta$ by the same factor $\alpha$. Thus the relabeled
excess of nodes in $V\setminus T$ increases relative to $\Delta$. 
This might lead to
changes in the sets $T_0$ and $N$; hence an elementary step does not
necessarily terminate when a new tight arc appears, and therefore the value of $\alpha$ has to be carefully chosen.

This framework is undoubtedly more complicated than the traditional
scaling algorithms. The main reason for this approach is the phenomenon one might call ``inflation'' in the previous scaling-type algorithms.
There it might happen that the relabeling steps used for identifying
the next augmenting paths increase some labels by very high amounts,
and thus the relabeled flow remains small compared to $\Delta$ on
every arc of the network - therefore a new abundant arc can never be identified.
 It could even be the case that most $\Delta$-scaling phases do not perform any path augmentations at all, but only label updates: the relabeled excess at every node becomes smaller than $\Delta$ during the relabeling steps.\footnote{However,  to the extent of the author's knowledge, no actual examples are known for these phenomena in any of the algorithms.} 

The advantage of changing $\Delta$ continuously in our algorithm is
that the ratios $|b_i^\mu|/\Delta$ are nondecreasing for every
$i\in V-\sink$ during the entire algorithm. In the above described
situation, these ratios are unchanged for $i\in T$ and increase for
$i\in V\setminus T$. As remarked above, there must be an abundant arc
incident
to $i$ once $|b_i^\mu|/\Delta\ge 32mn$.

We first present a simpler version of this algorithm,
\textsc{Continuous Scaling} in Section~\ref{sec:weak}, where we can
only prove a weakly polynomial running time bound.
Whereas the ratios
$|b_i^\mu|/\Delta$ are nondecreasing, we are not able to
prove that one of them eventually reaches the level $32mn$ in a
strongly polynomial number of steps.
This is since the set $V\setminus T$
where the ratio increases might always consist only of nodes where
$|b_i^\mu|/\Delta$ is very small. The algorithm \textsc{Enhanced
  Continuous Scaling} in Section~\ref{sec:strong} therefore introduces one additional subroutine,
called \textsc{Filtration}. In case $|b_i^\mu|<\Delta/n$ for every
$i\in (V\setminus T)-\sink$,  we ``tidy-up'' the flow inside
$V\setminus T$, by performing a maximum flow computation here. 
This drastically reduces all relabeled excesses in $V\setminus T$, and
thereby guarantees that most iterations of the algorithm will have to
increase certain $|b_i^\mu|/\Delta$ values that are already at least $1/n$.

In summary, the strongly polynomiality of our algorithm is based on the following three main new ideas.
\begin{itemize}
\item The definition of $\Delta$-feasible pairs, in particular, the condition on maintaining a security reserve $R_i$. It is a cleaner and more efficient framework than similar ones in \cite{Goldfarb96} and \cite{Vegh11}; we believe this is the ``real'' condition a scaling type algorithm has to maintain.
\item Continuous scaling, that guarantees that the ratios $|b_i^\mu|/\Delta$ are nondecreasing during the algorithm. This is achieved by doing the exact opposite of \cite{Goldberg91,Goldfarb96,Vegh11} that use the natural analogue of the scaling technique for minimum cost circulations.
\item The \textsc{Filtration} subroutine that intervenes in the algorithm whenever the nodes on a certain, relatively isolated part of the network
have ``unreasonably high'' excesses as compared to the small node demands in this part.
\end{itemize}

\subsection{The maximum flow subroutine}\label{sec:tight-flow}
Standard maximum flow computation (see e.g. \cite[Chapters 6-7]{amo})
will be a crucial subroutine in our algorithm. First and foremost, if
an optimal labeling is provided, then an optimal solution to (\ref{primal}) can be obtained by computing a maximum flow. 
We now describe the subroutine  \textsc{Tight-Flow}$(S,\mu)$, to
perform such computations. In the weakly polynomial algorithm
(Section~\ref{sec:weak}), it will be used only twice: at the
initialization and at  the termination of the algorithm, in both cases
with $S=V$. However, it will also be the key part of the subroutine  \textsc{Filtration}  in the strongly polynomial algorithm (Section~\ref{sec:strong}), also applied to subsets $S\subsetneq V$.


The input of \textsc{Tight-Flow}$(S,\mu)$ is a node
set $S\subseteq V$ with $\sink\in S$,  and a  labeling $\mu$, that is a feasible solution to (\ref{dual}) when restricted to $S$.
The subroutine returns a generalized flow $f'$ supported on $E[S]$, such that
$\mu$ restricted to $S$ is a  conservative  labeling for $f'$.
Let us define the arc set $\tilde E\subseteq E[S]$ as the set
of tight arcs for $\mu$:
\[
\tilde E:=\{ij\in E[S]:  \gamma^\mu_{ij}=1\}.
\]
Let us extend $S$ by a new source node $s$, and
add an arc $si$ from $s$ to  every $i\in S-\sink$; let $\tilde E'$ denote the union of $\tilde E$ and these new arcs.
Let us set lower and upper arc capacities $\ell_{ij}:=0$, $u_{ij}:=\infty$
on all arcs of $\tilde E$; for $i\in S-\sink$,
let $\ell_{si}:=-\infty$ and $u_{si}:=-b_i^\mu$.

\textsc{Tight-Flow}$(S,\mu)$ computes a maximum flow $x$ from $s$ to $\sink$ on the
network $(S\cup\{s\},\tilde E')$ with capacities $\ell$ and $u$. Let us define
$f':E[S]\rightarrow \R_{\ge0}$ by $f'_{ij}:=x_{ij}\mu_i$ if $ij\in \tilde E$
and $f'_{ij}:=0$ otherwise. This completes the description of the subroutine
\textsc{Tight-Flow}.
Because of the possibly negative upper capacities on the $si$ arcs, the maximum flow problem might be infeasible; in this case, the subroutine returns an error.

\begin{theorem}\label{thm:tight-flow}
\begin{enumerate}[(i)]
\item If $\mu$ is an optimal solution to (\ref{dual}), then \textsc{Tight-flow}$(V,\mu)$ returns an optimal solution to (\ref{primal}).
\item 
Assume that the maximum flow problem in \textsc{Tight-flow}$(S,\mu)$ is
feasible, and returns a vector $f'$. Then $f'$ is a feasible solution to (\ref{primal}) on $S$, and
\[
e^\mu_i(f')\le n\max_{j\in S-t}|b_j^\mu|\quad \forall i\in S.
\]
\item  Assume that the flow problem in \textsc{Tight-flow}$(V,\mu)$ is feasible  and returns a generalized flow $f'$ with $Ex(f')<1/{\bar B^3}$. Then $Ex(f')=0$ must hold, that is, $f'$ is an optimal solution to (\ref{primal}).
\end{enumerate}
\end{theorem}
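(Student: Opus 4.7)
All three parts share the same starting point: by writing flow conservation at $i \in S - \sink$ in the auxiliary max-flow network $(\tilde E', s, \sink)$ and using that $\gamma^\mu_{ij} = 1$ on every tight arc, a direct calculation yields
\[
e^\mu_i(f') = -x_{si} - b^\mu_i.
\]
Since $f'$ is supported on tight arcs, $\mu$ restricted to $S$ is automatically conservative for $f'$; together with the upper capacity $x_{si} \le -b^\mu_i$ this gives $e^\mu_i(f') \ge 0$, i.e., feasibility of $f'$ as a generalized flow on $S$, as used in both (i) and (ii).

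For part (i), Theorem~\ref{thm:genflow-opt}(i) reduces optimality of $f'$ to $e_i(f') = 0$ for every $i \in V - \sink$, which by the identity above is equivalent to the max-flow saturating every arc $si$. I would argue by contradiction using max-flow/min-cut on $(\tilde E', s, \sink)$: a non-saturating max-flow furnishes a finite-value cut $A \ni s$, $\sink \notin A$, of value strictly less than $\sum_i (-b^\mu_i)$. The finiteness of the cut forces no tight arc to cross from $A \cap V$ to $V \setminus A$ (since such arcs have upper capacity $\infty$), so multiplying the labels $\mu_i$ for $i \in A \cap V$ by a common factor $1 + \varepsilon$ for sufficiently small $\varepsilon > 0$ preserves dual feasibility; the cut-value inequality then translates into a strict increase of $\sum_i b_i / \mu_i$, contradicting the optimality of $\mu$.

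For part (ii), I would combine the identity $e^\mu_i(f') = -x_{si} - b^\mu_i$ with a flow-decomposition bound on the max-flow: the total flow value $\sum_i x_{si}$ has range controlled by the total capacity $\sum_i |{-b^\mu_i}| \le n \max_{j \in S - \sink} |b^\mu_j|$, and in a maximum flow each individual $x_{si}$ is no smaller than its cap minus the overall flow deficit (otherwise one could reroute via an $s$--$\sink$ augmenting path to improve the objective). This gives $-x_{si} \le -b^\mu_i + n \max_j |b^\mu_j|$, hence the desired node-wise bound on $e^\mu_i(f')$.

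For part (iii), the plan is a denominator argument. Under the encoding assumptions of Section~\ref{sec:encoding} the input data $\gamma$ and $b$ are integer multiples of $1 / \bar B$, and the label $\mu$ at this invocation is rational with denominator dividing a bounded power of $\bar B$; consequently the max-flow capacities $-b^\mu_i$ have denominator at most $\bar B^2$, a standard max-flow on such a rational network returns $x$ whose entries have denominator at most $\bar B^2$, and $Ex(f') = \sum_i e_i(f')$, being a linear combination of these entries with coefficients in the $\gamma$'s and $b$'s, has denominator at most $\bar B^3$. Therefore $Ex(f') > 0$ forces $Ex(f') \ge 1/\bar B^3$; the contrapositive yields $Ex(f') = 0$, hence $e_i(f') = 0$ for every $i \in V - \sink$, and optimality of $f'$ follows from Theorem~\ref{thm:genflow-opt}(i) applied with the conservative labeling $\mu$. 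The main obstacle I expect here is the careful bookkeeping of the denominator of $\mu$, which ultimately relies on invariants maintained by the outer algorithm rather than being immediate from the input encoding.
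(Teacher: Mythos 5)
Your part~(i) argument is correct and genuinely different from the paper's: you use min-cut duality plus a label-perturbation contradiction, while the paper simply constructs a saturating feasible $x$ directly from an optimal primal solution $g$ (using Theorem~\ref{thm:genflow-opt}(i) to get $e_i(g)=0$). Your route is valid, but note it silently presumes the auxiliary max-flow problem is feasible, which is exactly what the paper's direct construction provides for free; you would need to also rule out infeasibility (e.g.\ via a Hoffman-type certificate) rather than only a non-saturating max flow.

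Parts~(ii) and~(iii), however, have genuine gaps. For~(ii), the identity $e^\mu_i(f')=-x_{si}-b^\mu_i=u_{si}-x_{si}$ is right, but the chain of inferences after it does not close. The observation that ``each $x_{si}$ is no smaller than its cap minus the overall flow deficit'' is just $u_{si}-x_{si}\le\sum_j(u_{sj}-x_{sj})$, which is trivially true; you then still need $\sum_j(u_{sj}-x_{sj})\le n\max_j|b^\mu_j|$, and the remark that the flow value is ``controlled by the total capacity'' does not establish this (the terms $u_{sj}=-b^\mu_j$ can have mixed signs, the maximum flow value need not be nonnegative, and $-\infty$ lower bounds are present). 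Your displayed intermediate bound also has a sign slip: $e^\mu_i\le n\max$ requires $x_{si}\ge -b^\mu_i-n\max$, not $-x_{si}\le -b^\mu_i+n\max$. The missing idea is localization: the paper picks a node $r$ with $e^\mu_r(f')>0$, takes $Z$ to be the set reachable from $r$ in the residual graph of $\tilde E$, shows $t\notin Z$ by maximality of the flow (else $sr$ followed by the residual path is augmenting), and then observes that the $\tilde E$-flow across the boundary of $Z$ is zero, which yields $0<e^\mu_r(f')\le\sum_{i\in Z}e^\mu_i(f')=-\sum_{i\in Z}b^\mu_i\le n\max_j|b^\mu_j|$. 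Without restricting the sum to the reachable set $Z$, the bound does not follow.

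For~(iii), your denominator-tracking plan relies on $\mu$ itself having a polynomially bounded denominator at the time \textsc{Tight-flow} is called, but this fails precisely where the claim is needed: in Lemma~\ref{lem:final-opt} inside \textsc{Continuous Scaling}, the labels are products of many ratios $\alpha$ with no rounding, so their denominators are not controlled (rounding is introduced only later, in Section~\ref{sec:rounding}). The paper's proof side-steps this entirely by continuing the computation of~(ii): it rewrites $-\sum_{i\in Z}b^\mu_i = -\frac{1}{\mu_r}\sum_{i\in Z} b_i\,\frac{\mu_r}{\mu_i}$, and then for each $i\in Z$ the ratio $\mu_r/\mu_i$ is the gain product $\prod_{e\in P}1/\gamma_e$ along a tight residual path $P$ from $r$ to $i$, hence an integer multiple of $1/\bar B$; since each $b_i$ is an integer multiple of $1/\bar B$ and $1/\mu_r\ge\gamma_{rt}\ge 1/\bar B$ by conservativeness and~(\ref{cond:root}), the whole expression is an integer multiple of $1/\bar B^3$. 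Thus the $1/\bar B^3$ threshold comes from gain products along tight paths, not from any bound on the denominator of $\mu$ itself.
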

\begin{proof}
To prove part {\em (i)}, assume $\mu$ is an optimal labeling. Let $g$ be an optimal solution to (\ref{primal}). Let us define
$x_{ij}:=g^\mu_{ij}$ if $ij\in E$ and $x_{si}:=\sum_{j:ij\in  E} g^\mu_{ij}-\sum_{j:ji\in E} g^\mu_{ji}$ for every $i\in V-\sink$.
By Theorem~\ref{thm:genflow-opt}(i), $x_{si}=-b_i^\mu$ for all $i\in
V-\sink$, and therefore $x$
is a maximum flow, with $(\{s\},V)$ forming a minimum cut. Conversely, an arbitrary maximum flow must saturate
every arc leaving $s$, and therefore we get $e_i(f')=0$ for every
$i\in V-\sink$ for the $f'$ returned by \textsc{Tight
  Flow}$(V,\mu)$. It is straightforward that all conditions in
Theorem~\ref{thm:genflow-opt}(i) are satisfied.

For part {\em (ii)}, first observe that if there is a feasible
solution $x$ to the flow problem, then $e_i(f')\ge 0$ must hold for
every $i\in V-\sink$, due to the constraint $x_{si}\le -b_i^\mu$; further,
$\mu$ is a conservative labeling for $f'$.
Let us pick a node  $r\in S-\sink$ with  $e_r(f')>0$, and
let $Z\subseteq S$ denote the set of nodes that can be reached from $r$ on a 
directed path in the residual graph $(S,\tilde E_{f'})$, defined by 
\[\tilde E_{f'}=\tilde E\cup\{ji: ij\in \tilde E, f'_{ij}>0\}.\]
 Note that ${f'_{ij}}^\mu=x_{ij}$ for every $ij\in \tilde E_{f'}$.
Assume that  $\sink\in Z$, that is, there is a directed path $P$ from $r$ to $\sink$ in the residual graph. Since $e_r(f')>0$, we have 
$x_{sr}<-b^\mu_i=u_{sr}$; hence $sr$ and $P$ give an augmenting path for the flow $x$, in 
a contradiction to its choice as a maximum flow.

We may thus conclude that  $\sink\notin Z$. Hence $e^\mu_i(f')\ge 0$ for all $i\in Z$, and therefore
\begin{equation}\label{eq:S-sum}
0< e^\mu_r(f')\le \sum_{i\in Z}e^\mu_i(f')=\sum_{i\in Z}\left(\sum_{j\in Z:ji\in
    \tilde E}x_{ji}-\sum_{j\in Z: ij\in
    \tilde E}x_{ij}-b_i^\mu\right)=-\sum_{i\in
  Z}b_i^\mu\le n\max_{j\in S-t}|b_j^\mu|,
\end{equation}
proving part {\em (ii)} of the Theorem. Here we used that 
if $x_{ij}>0$ then $i\in Z$ if and only if $j\in Z$.

Let us turn to part {\em (iii)}; assume  that $e^\mu_r(f')>0$ for some $r\in V-\sink$.  The equation (\ref{eq:S-sum}) can be further written as
\begin{equation}\label{eq:S-sum-2}
0<e^\mu_r(f')\le\sum_{i\in Z}e^\mu_i(f')= -\sum_{i\in
  Z}b_i^\mu= -\frac{1}{\mu_{r}}\sum_{i\in Z}b_i\frac{\mu_{r}}{\mu_{i}}.
\end{equation}
For every $i\in Z$, there is a tight path $P$ in $\tilde E_{f'}$ from $r$ to $i$, that is,
${\mu_{r}}/{\mu_{i}}=\prod_{e\in P}1/\gamma_e$.
 By our assumption on the encoding sizes, this product must be an integer multiple of $1/\bar B$.
We further assumed that every $b_i$ value is an integer multiple of $1/\bar B$.
Hence  every
term $b_i\frac{\mu_{r}}{\mu_{i}}$ is an integer multiple of ${1}/{\bar B^2}$.
Further, by (\ref{cond:root}), we have $r\sink\in E$, and
$\gamma_{r\sink}\ge 1/\bar B$.
 By the conservativeness of $\mu$
w.r.t. to $\tilde f$,
 $\frac{1}{\mu_{r}}\ge \gamma_{r\sink}\ge 1/\bar B$. 
 Consequently, the last expression in  (\ref{eq:S-sum-2}) must be at least 
$1/\bar B^{3}$ whenever it is nonzero. Therefore 
\[
1/\bar B^3\le \sum_{i\in Z}e^\mu_i(f')\le Ex^\mu(f'),
\]
contradicting our assumption. Hence it follows that  $e_r(f)=0$ for all $r\in V-\sink$.
\end{proof}

section{The Continuous Scaling algorithm}\label{sec:weak}

\begin{figure}[htb]
\begin{center}
\fbox{\parbox{\textwidth}{
\begin{tabbing}
xxxxx \= xxx \= xxx \= xxx\= xxx \= xxx \= xxxxxxxxxxx \= \kill
\> \textbf{Algorithm} \textsc{Continuous Scaling}\\
\> \textsc{Initialize}$(V,E,b,\gamma,\bar f)$ ; \\
\> $T_0\leftarrow \emptyset$ ; $T\leftarrow \emptyset$ ;\\
\> \textsc{While} $\Delta\ge\threshold$ \textsc{do}\\
\> \> $N\leftarrow\{\sink\}\cup \{i\in V-\sink:
e_i^\mu<\negd{i}\Delta\}$ ;\\
\>  \> \textbf{if} $N\cap T\neq \emptyset$ \textbf{then}\\
\> \> \>  \textbf{pick} $p\in T_0$, $q\in N\cap T$ connected by a  tight path $P$ in $E^\mu_f(\Delta)$ ;\\
\> \> \> \textbf{send} $\Delta$ units of relabeled flow from $p$ to
$q$ along $P$ ;\\
\> \> \> \textbf{if} $e_p^\mu<\egyd{p}\Delta$ \textbf{then} $T_0\leftarrow
T_0\setminus \{p\}$ ;\\
\> \> \> $T\leftarrow T_0$ ;\\
\> \> \textbf{else}\\
\> \> \> \textbf{if} $\exists ij\in E_f^\mu(\Delta)$,
$\gamma_{ij}^\mu=1$, $i\in T$, $j\in V\setminus T$ \textbf{then}
$T\leftarrow T\cup \{j\}$ ;\\
\> \> \> \textbf{else} \textsc{Elementary Step}$(T,T_0,f,\mu,\Delta)$ ;\\
\> \textsc{Tight-Flow}$(V,\mu)$ ; 
\end{tabbing}
}}
\caption{Description of the weakly polynomial algorithm}\label{code:weak}
\end{center}
\end{figure}

The algorithm \textsc{Continuous Scaling} is shown in
Figure~\ref{code:weak}. 
The strongly polynomial algorithm \textsc{Enhanced
  Continuous Scaling} in Section~\ref{sec:strong} will be an improved variant of this.
We shall always make assumptions
(\ref{cond:root}), (\ref{cond:init}) and (\ref{cond:bounded}).

The algorithm starts with the subroutine \textsc{Initialize},
described in Section~\ref{sec:init}, that returns an initial flow $f$, along with a
$\Delta=\bar\Delta$-conservative
labeling $\mu$ such that $e_i^\mu<\egyd{i}\Delta$ holds for every
$i\in V$.  This is based on the 
{\sc Maximum-mean-gain cycle-canceling} algorithm as in \cite{Goldberg91,Radzik93}.
The main part of the algorithm  (the while loop) consists of iterations.
The value of the scaling parameter $\Delta$ is monotone decreasing, 
and all $\mu_i$ values are monotone increasing during the algorithm.
In every iteration, a $\Delta$-feasible pair $(f,\mu)$ is maintained.
These iterations  stop once the scaling parameter $\Delta$ decreases below \threshold.
At this point we apply the subroutine \textsc{Tight-flow}$(V,\mu)$, as described in Section~\ref{sec:tight-flow}, to find an optimal solution by a single maximum flow computation.

 The set $N$  denotes the set consisting of $\sink$ and all nodes with
$e_i^\mu<\negd{i}\Delta$. The set $T_0$ consists of a certain set of nodes (but not all) with
$e_i^\mu\ge\egyd{i}\Delta$. The set $T$  denotes a set of nodes
that can be reached from $T_0$ on a  tight path in the
$\Delta$-fat graph $E^\mu_f(\Delta)$. Both $T_0$ and $T$ are
initialized empty.

Every iteration first checks
whether $N\cap T\neq\emptyset$. If yes, then nodes $p\in
T_0$ and $q\in N\cap T$ are picked connected by a tight path $P$ in the $\Delta$-fat
graph.
  $\Delta$ units of relabeled flow is sent from $p$ to $q$ on
$P$: that is, $f_{ij}$ is increased by $\Delta\mu_i$ for every $ij\in
P$ (if $ij$ was a reverse arc, this means decreasing $f_{ji}$ by
$\Delta\mu_j$). The only $e_i$ values that change are $e_p$ and $e_q$.
If the new value is $e_p^\mu<\egyd{p}\Delta$, then $p$ is removed from $T_0$.
The iteration finishes in this case by resetting $T=T_0$
(irrespective to whether $p$ was removed or not).

Let us now turn to the case $N\cap T=\emptyset$. If there is a node
$j\in V\setminus T$ connected by a tight arc in $E^\mu_f(\Delta)$ to
$T$, then we extend $T$ by $j$, and the iteration
terminates. Otherwise, the subroutine \textsc{Elementary Step}$(T,T_0,f,\mu,\Delta)$ is
called. The precise description is given in Section~\ref{sec:elementary};
 we give an outline below.

 For a carefully chosen
$\alpha>1$, all $\mu_i$ values are multiplied by $\alpha$ for $i\in
T$, and  $\mu_i$ is left unchanged for $i\in V\setminus T$. At the same
time, $\Delta$ is divided by $\alpha$ (this is the only step in the
main part of the
algorithm modifying the $\mu_i$'s and the value of $\Delta$). The
flow is divided by $\alpha$ on all non-tight arcs
in $F^\mu[V\setminus T]$, and on every arc entering $T$.
The value of  $\alpha$ is chosen to be the largest such that the
labeling remains $\Delta$-feasible with the above changes, and further
$e_i^\mu\le \ketd{i}\Delta$ holds for all $i\in V\setminus T$.
If  $\alpha=\infty$, then the algorithm terminates with an optimal solution.
For finite $\alpha$,  all 
nodes $i$ for which $e_i^\mu= \ketd{i}\Delta$  holds after the
change are added both to $T_0$ and to $T$.
On the other hand, the $e_i^\mu$ values might also decrease both for $i\in T$ and $i\in V\setminus T$. If for some $i\in T_0$, the value of $e_i^\mu$ drops below 
$\egyd{i}\Delta$, then $i$ is removed from $T_0$, and
$T$ is reset to $T=T_0$. 
In every step when $T_0$ is not changed, a tight arc in $E^\mu_f(\Delta)$
leaving $T$ must appear. Consequently, $T$ will be extended in the next iteration.
We shall prove the following running time bound:
\begin{theorem}\label{thm:weak-running}
The algorithm \textsc{Continuous Scaling} can be implemented to find an optimal solution
for the uncapacitated formulation (\ref{primal}) in
running time $\max\{O(m(m+n\log n) \log \bar B), O(m^2n\log^2 n)\}$.
\end{theorem}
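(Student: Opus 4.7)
\medskip

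\noindent\textbf{Proof proposal.} The plan is to first establish correctness by verifying loop invariants, and then bound the total number of iterations by separating them into three types and charging each type to an appropriate potential.

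For correctness, I would first show by induction that the pair $(f,\mu)$ remains $\Delta$-feasible throughout the main loop. The initialization step provides this by construction. A path augmentation sends $\Delta$ units of relabeled flow along a tight path in $E_f^\mu(\Delta)$; on each arc of $P$ the relabeled flow changes by exactly $\Delta$, so no non-tight arc is created and the conditions $\gamma_{ij}^\mu\le 1$ on $E_f^\mu(\Delta)$ remain intact. Because $p\in T_0$ has $e_p^\mu\ge\egyd{p}\Delta$ and $q\in N$ has $e_q^\mu<\negd{q}\Delta$, after sending $\Delta$ units the conditions $e_i\ge R_i$ at the endpoints are preserved (here I would use Claim~\ref{cl:R-i-bound}). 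An \estep{} is designed precisely to preserve $\Delta$-feasibility, with $\alpha$ chosen maximal subject to this and the cap $e_i^\mu\le\ketd{i}\Delta$ for $i\in V\setminus T$; the case $\alpha=\infty$ corresponds to an optimal solution being reached. Once $\Delta<\threshold$, I would argue that the total relabeled surplus is small: the invariants force $e_i^\mu = O(\ind{i}\Delta)$ and $R_i^\mu=O(\ind{i}\Delta)$, so $Ex^\mu(f)=O(m\Delta)<1/\bar B^{3}$, and Theorem~\ref{thm:tight-flow}(iii) (applied after a short argument transferring the bound from $Ex^\mu$ to $Ex$) then guarantees that the final \textsc{Tight-Flow}$(V,\mu)$ call returns an optimal solution.

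For the running time, I classify the iterations of the while loop into three types: (A) path augmentations, (B) \emph{tight-arc extensions}, where a single new node is added to $T$, and (C) \estep{} calls. Between two consecutive iterations of type (A), the set $T$ is reset to $T_0$; afterwards $|T|$ can grow to at most $n$, and every iteration of type (B) or (C) either enlarges $T$ or (following an \estep{}) enlarges $T_0$. Consequently, at most $O(n)$ iterations of types (B) and (C) occur between two path augmentations. Thus the total number of iterations is $O(n\cdot A)$, where $A$ is the number of path augmentations. The per-iteration work is dominated by finding a tight $p$-to-$q$ path or a tight out-arc of $T$, which is an $O(m+n\log n)$ Dijkstra-type computation in $E_f^\mu(\Delta)$, so the bulk of the loop costs $O(n(m+n\log n))\cdot A$ total.

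The main obstacle is bounding $A$ by $O((m/n)\log\bar B)$. Here I would use the potential $\Phi=\log\Delta$. Since $\Delta$ starts at $\bar\Delta=\mathrm{poly}(\bar B)$ and the loop terminates at $\Delta<\threshold=1/(17m\bar B^{3})$, the total admissible decrease in $\Phi$ is $O(\log\bar B)$. Path augmentations themselves do not change $\Delta$, but I would charge each augmentation to a ``phase'' of multiplicative decrease in $\Delta$ forced by the scaling rule: after $O(m)$ augmentations, either a new node joins $T_0$ (this can happen only $n$ times before being undone) or an \estep{} must reduce $\Delta$ by a factor bounded below, so $\log\Delta$ drops by $\Omega(1)$. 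A careful amortized analysis of the interplay between $T_0$ changes and $\alpha$-choices in \estep{} (mirroring standard scaling-analysis arguments and using the $\negd{i},\egyd{i},\ketd{i}$ gaps to force geometric decrease) yields $A=O(m\log\bar B)$. Combining with the per-iteration cost gives $O(m(m+n\log n)\log\bar B)$ for the main loop. Adding the initial \textsc{Maximum-mean-gain cycle-canceling} cost $O(m^2n\log^2 n)$ and the final max-flow computation (which is absorbed in the other two terms), the overall bound $\max\{O(m(m+n\log n)\log\bar B),O(m^2n\log^2 n)\}$ follows.
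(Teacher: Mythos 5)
Your proposal is on the right track for correctness and for the general shape of the analysis, but the running-time accounting has three genuine gaps.

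\textbf{First}, the claim that ``every iteration of type (B) or (C) either enlarges $T$ or enlarges $T_0$'' is false, and this breaks the assertion that only $O(n)$ iterations occur between two path augmentations. An \estep{} can \emph{shrink} $T_0$: when the relabeling drops $e_i^\mu$ below $\egyd{i}\Delta$ for some $i\in T_0$, that node is removed and $T$ is reset to $T_0$. Between two path augmentations there can therefore be many cycles of ``$T$ expands, then collapses because $T_0$ shrinks.'' The paper's classification into shrinking/expanding/neutral is essential precisely because there are two kinds of shrinking events (path augmentations \emph{and} $T_0$-removals), and both must be counted. The right statement is that at most $2n$ iterations occur between two consecutive \emph{shrinking} iterations, not between two path augmentations.

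\textbf{Second}, the bound on the number of augmentations is essentially asserted. The crux of the paper's argument (Lemma~\ref{lem:sigma-bound} together with Claim~\ref{cl:e-inc}) is that a fixed node $i$ can re-enter $T_0$ at most $\log_2(\bar\Delta/\Delta)$ times, because between a departure ($\beta_i<\egyd{i}$) and the next entry ($\beta_i=\ketd{i}$) the quantity $\beta_i=e_i/(\Delta\mu_i)$ must grow by a factor $\ge 4$, and Claim~\ref{cl:e-inc} shows $\beta_i$ can grow by at most $\alpha^2$ in an \estep{} with multiplier $\alpha$, so $\Delta$ must have shrunk by a factor $\ge 2$ in between. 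The potential $\Psi=\sum_{i\in T_0}\lfloor\beta_i-\negd{i}\rfloor$ then delivers the $O(m\log\bar B)$ bound on \emph{all} shrinking iterations: every shrinking iteration decreases $\Psi$ by one, and each entry of $i$ into $T_0$ increases $\Psi$ by $3d_i+7$. Your sketch (``after $O(m)$ augmentations, either a new node joins $T_0$ ... or an \estep{} reduces $\Delta$'') does not reproduce this; in particular a node is not limited to joining $T_0$ ``$n$ times before being undone''~--- a single node may re-enter $T_0$ up to $\Theta(\log\bar B)$ times, and it is exactly the $\log_2(\bar\Delta/\Delta)$ ceiling on re-entries that makes the whole thing work.

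\textbf{Third}, your cost accounting is internally inconsistent. You write that the loop costs $O(n(m+n\log n))\cdot A$ in total (Dijkstra per iteration, $O(n)$ iterations per augmentation), but then conclude $O(m(m+n\log n)\log \bar B)$, which would require either $A=O((m/n)\log\bar B)$ or a per-augmentation cost of $O(m+n\log n)$. The paper achieves the latter: it bounds shrinking iterations by $O(m\log\bar B)$ and then shows that \emph{all} the (up to $2n$) iterations between two consecutive shrinking iterations can be executed jointly in a \emph{single} $O(m+n\log n)$ Fibonacci-heap scan, each label and flow value being updated $O(1)$ times in aggregate (the \textsc{Aggregate Steps} machinery of Section~\ref{sec:rounding}). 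Charging a full $O(m+n\log n)$ Dijkstra to \emph{each individual} iteration, as you do, gives an extra factor of $n$.
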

The high level idea of the analysis is the following. 
The $e_i^\mu$ values for nodes $i\in T_0$ are non-increasing, and a path augmentation
starting from $i$ reduces $e_i^\mu$ by $\Delta$. The node $i$ leaves $T_0$ once $e_i^\mu$ drops below $\egyd{i}\Delta$, and may enter
again once it increases to $\ketd{i}\Delta$. As shown in Lemma~\ref{lem:sigma-bound}, the value of $\Delta$ must decrease by at least a factor 2 between two such events. Also, it is easy to verify that within every $2n$ \textsc{Elementary step} operations, either a path augmentation must be carried out, or a node $i$ must leave $T_0$ due to decrease in $e_i^\mu$ caused by label changes. These two facts together give a polynomial bound on the running time.

In the proof of Theorem~\ref{thm:weak-running}, we outline a more efficient implementation of the
algorithm, with all iterations between two path augmentations
performed together.

For a problem in the standard form on $n$ nodes, $m$ arcs and
complexity parameter $B$, the reduction in Section~\ref{sec:transform}
shows that it can be transformed to an equivalent instance with $\le n+m$
nodes, $\le 2m$ arcs, and $\bar B\le 2B^{4m}$. Hence the theorem gives a
running time $O(m^3\log n \log B)$, assuming $n\le B$.

Our algorithm could be naturally adapted to work on a problem
instance with both node demands and arc capacities; the reason for
choosing the uncapacitated instance is its suitability
for the strongly polynomial algorithm in Section~\ref{sec:strong}. 
Such a modification would run in time $O(m^2(m+n\log n)\log
B)$, matching the bound of Goldfarb et al. \cite{Goldfarb97}.


\subsection{The Initialization subroutine}\label{sec:init}
In this section we describe the \textsc{Initialize}$(V,E,b,\gamma,\bar f)$ subroutine.
The input is a graph $G=(V,E)$, node demands $b_i:V\rightarrow \R$, gain factors 
$\gamma:E\rightarrow \R_{>0}$ and the initial generalized flow $\bar f$
guaranteed by the  assumption (\ref{cond:init}).
The initial value of $\Delta=\bar\Delta$ is computed and 
a $\Delta$-feasible pair $(f,\mu)$ is returned such that $e_i^\mu<\egyd{i}\Delta$ holds for every
$i\in V-\sink$.

First, we use the {\sc Maximum-mean-gain cycle-canceling} algorithm by Radzik \cite{Radzik93}. This paper uses
the standard capacitated formulation \eqref{primal-standard} with
finite capacities on the arcs. As a consequence, every flow-generating
cycle can only generate a finite amount of flow. Our boundedness
assumption \eqref{cond:bounded}, together with \eqref{cond:root},
guarantees the same property. Provided this boundedness property, Radzik's strongly polynomial bound extends verbatim to the uncapacitated formulation \eqref{primal}.

This returns a generalized flow $g$ such that the residual graph
$E_g$ contains no flow generating cycles, that is, no cycles $C$ with $\gamma(C)>1$.
Let us define $\mu_\sink:=1$ and for $i\in V-\sink$,
\begin{equation}\label{def:mu}
\mu_i:=1/\max\left\{\gamma(P): P\subseteq E_g\mbox{ is a walk from $i$ to }\sink.\right\}
\end{equation}
Such a path must exist according to assumption (\ref{cond:root}), and
since $\gamma(C)\le 1$ for all cycles $C$, 
 the walk giving the maximum can always be chosen to be a path.
The $\mu_i$ values can be computed efficiently: note that they
correspond to shortest paths with respect to the cost function $-\log
\gamma_e$. To avoid computing logarithms, we may use a multiplicative version of Dijkstra's
algorithm to obtain the $\mu_i$ values in strongly polynomial time.

After the cycle cancelling subroutine and computing the $\mu_i$ values, the subroutine \textsc{Tight Flow}$(V,\mu)$ is called, as described in
Section~\ref{sec:tight-flow}. This computes a generalized flow
$f'$. We set $f=f'$, and set the initial $\Delta=\bar\Delta:=\max_{i\in V-\sink}e_i^\mu$.

\subsection{The  Elementary step subroutine}\label{sec:elementary}
\begin{figure}[htb]
\begin{center}
\fbox{\parbox{\textwidth}{
\begin{tabbing}
xxxxx \= xxx \= xxx \= xxx \= xxxxxxxxxxx \= \kill
\> \textbf{Subroutine} \textsc{Elementary step}$(T,T_0,f,\mu,\Delta)$\\
\> $\alpha_1\leftarrow\min\left\{\delta_i: i\in (V\setminus T)-\sink
\right\}$, with $\delta_i$ as defined in \eqref{eq:delta-i} ;\\
\> $\alpha_2\leftarrow\min\left\{\frac{1}{\gamma_{ij}^\mu}: ij\in
  E[T,V\setminus T]\right\}$ ;\\
\> $\alpha\leftarrow\min\{\alpha_1,\alpha_2\}$ ;\\
\> \textbf{if} $\alpha=\infty$ \textbf{then} \\
\> \> set $f_{ti}=0$ for all $ti\in E$: $\gamma^\mu_{ti}<1$ ;\\
\> \> \textbf{return} optimal flow $f$ and optimal relabeling $\mu$ ;\\
\> \> \textbf{TERMINATE}.\\
\> $\Delta\leftarrow \frac{\Delta}{\alpha}$ ;\\
\> \textbf{for} $i\in T$ \textbf{do} $\mu_i\leftarrow \alpha\mu_i$ ;\\
\> \textbf{for} $ij\in F^\mu[V\setminus T]\cup E[V\setminus
T,T]$ \textbf{do} $f_{ij}\leftarrow \frac{f_{ij}}{\alpha}$ ;\\
\> $T_0\leftarrow T_0\cup \{i: i\in V\setminus T,\ e_i^\mu= \ketd{i}\Delta\}$ ;\\
\> $T\leftarrow T\cup T_0$ ;\\
\> \textbf{if} $\exists i\in T_0:  e_i^\mu<\egyd{i}\Delta$ \textbf{then} \\
\> \> $T_0\leftarrow T_0\setminus \{i:e_i^\mu<\egyd{i}\Delta\}$ ;\\
\> \> $T\leftarrow T_0$ ;
\end{tabbing}
}}
\caption{The Elementary Step subroutine}\label{code:basic}
\end{center}
\end{figure}

Let $(f,\mu)$ be a $\Delta$-feasible pair for $\Delta>0$. Let
$T\subseteq V$ be a (possibly empty) set of nodes with $e_i^\mu\le
\ketd{i}\Delta$ for every 
$i\in V$, with strict inequality whenever
$i\in V\setminus T$. 
The subroutine  (Figure~\ref{code:basic}) perfoms the following
modifications for some $\alpha>1$.
The $\mu_i$ values are multiplied by $\alpha$
for $i\in T$, and left unchanged for $i\in V\setminus T$. The new
value of the scaling parameter is set to
$\Delta/\alpha$. Finally, the flow on non-tight arcs $ij\in
F^\mu[V\setminus T]$ and on all arcs $ij\in E[V\setminus T,T]$ is divided
by $\alpha$.

The value of $\alpha$ is chosen maximal such that for the new values
of $f,\mu$, and $\Delta$,  
$(f,\mu)$ is $\Delta$-feasible, and further the modified excess
$e_i\le \ketd{i}\Delta\mu_i$ holds for every $i\in V$. 
For the latter, we need the following definitions for every $i\in
(V\setminus T)-\sink$. Let 
\begin{align}
\begin{tabular}{ll}
$F_1(i):=\delta^{in}(i)\cap F^\mu[V\setminus T]$, &
$r_1(i):=\sum_{j:ji\in  F_1(i)}\gamma_{ji}f_{ji}$,\\
$F_2(i):=\delta^{in}(i)\setminus F_1(i)$, &
$r_2(i):=\sum_{j:ji\in   F_2(i)}\gamma_{ji}f_{ji}$, \\
$F_3(i):=\delta^{out}(i)\cap (F^\mu[V\setminus T]\cup E[V\setminus
T,T])$, &
$r_3(i):=\sum_{j:ij\in F_3(i)}f_{ij}$,\\
$F_4(i):=\delta^{out}(i)\setminus F_3(i)$,&
$r_4(i):=\sum_{j:ij\in F_4(i)}f_{ij}$.
\end{tabular}\label{def:f-r}
\end{align}
Note that $F_1(i)$ and $F_3(i)$ denote the set of those incoming and outgoing arcs
where we wish to  decrease the flow by a factor $\alpha$.
For every $i\in
(V\setminus T)-\sink$, let us define
\begin{equation}
\delta_i:=\frac{\ketd{i}\Delta\mu_i +r_3(i)-r_1(i)}
{r_2(i)-r_4(i)-b_i}.\label{eq:delta-i}
\end{equation}
If the denominator is 0 then $\delta_i:=\infty$ is set. We shall verify in the proof of Lemma~\ref{lem:elementary} that the denominator is always nonnegative and the numerator is positive.

The subroutine (Figure~\ref{code:basic}) chooses 
$\alpha$ as the largest value subject to $\alpha\le \delta_i$ for all $i\in
(V\setminus T)-t$, and $\alpha\le \frac{1}{\gamma_{ij}^\mu}$ for all
arcs $ij\in E$ leaving the set $T$.
If $\alpha=\infty$, then $f$ becomes an optimal solution, after setting the
value on all non-tight arcs leaving $\sink$ to $0$. 
If $\alpha$ is finite, the algorithm
  performs the above described modifications.
Nodes $i$ with $e_i^\mu=\ketd{i}\Delta$  (that is, $\alpha=\delta_i$) are added to both $T_0$ and $T$.
Finally, if  $e_i^\mu$ drops below 
$\egyd{i}\Delta$ for some $i\in T_0$, then all such nodes $i$ will be removed from $T_0$, and
$T$ is reset to $T=T_0$. 
The validity of this subroutine is proved in Lemma~\ref{lem:elementary}.

\section{Analysis of the Continuous Scaling algorithm}\label{sec:anal-weak}

\begin{lemma}\label{lem:init}
The subroutine \textsc{Initialize}$(V,E,b,\gamma,\bar f)$ returns a $\Delta$-feasible pair
$(f,\mu)$ with $e_i^\mu\le
\egyd{i}\Delta$ for every $i\in V-\sink$, and $\Delta=\bar\Delta\le \Deltabound$.
\end{lemma}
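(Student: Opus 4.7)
The plan is to verify each of the three stages of \textsc{Initialize} (cycle canceling on $\bar f$, building $\mu$ from the resulting flow $g$, then calling \textsc{Tight-Flow}$(V,\mu)$) produces the required properties. Boundedness of the problem ensures that Radzik's strongly polynomial cycle-canceling terminates with a $g$ such that $E_g$ contains no flow-generating cycles, so the maximum-gain walk from any $i$ to $\sink$ used in (\ref{def:mu}) is well-defined and, as the paper already notes, attained by a simple path (any subcycle has gain $\le 1$, so cutting it cannot decrease the gain). In particular the $\mu_i$ are finite and strictly positive.

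Next I would show that \textsc{Tight-Flow}$(V,\mu)$ is feasible and that its output $f$ together with $\mu$ is $\Delta$-feasible for $\Delta := \bar\Delta := \max_{i\in V-\sink} e_i^\mu(f)$. Dual feasibility $\gamma_{ij}^\mu\le 1$ on every $ij\in E$ is immediate from (\ref{def:mu}): prepending $ij$ to the max-gain walk from $j$ to $\sink$ gives $\gamma_{ij}\mu_i\le\mu_j$. Applying the same argument to the reverse arc $ji\in E_g$ of any $ij\in E$ with $g_{ij}>0$ yields the opposite inequality, so $\mu$ is conservative for $g$; hence $x_{ij}:=g_{ij}^\mu$ together with the implied $x_{si}$ is a feasible solution to the max-flow instance of \textsc{Tight-Flow}. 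Theorem~\ref{thm:tight-flow}(ii) then gives that $\mu$ is conservative for the returned $f$, so all arcs in $F^\mu$ carry zero flow and $R_i=0\le e_i$. Dual feasibility handles the forward arcs of $E^\mu_f(\bar\Delta)$; conservativeness handles the reverse arcs (which are tight). This establishes $\Delta$-feasibility, and the pointwise bound $e_i^\mu\le\egyd{i}\bar\Delta$ is trivial from the definition of $\bar\Delta$.

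The main remaining task is the quantitative bound $\bar\Delta\le\Deltabound=n\bar B^2$. By Theorem~\ref{thm:tight-flow}(ii) we have $\bar\Delta\le n\max_j|b_j^\mu|$, and since $|b_j|\le\bar B$ it suffices to show $\mu_j\ge 1/\bar B$ for every $j\in V-\sink$. Here I would exploit the divisibility condition that $\bar B$ is an integer multiple of $\prod_{ij\text{ regular}}p_{ij}q_{ij}$. On a simple max-gain path $P$ from $j$ to $\sink$ in $E_g$, write each (direction-adjusted) regular gain as $a_e/b_e$ with $a_eb_e=p_eq_e$; then $\prod_{e\text{ reg}\in P}a_eb_e\le\bar B$, so both $\prod a_e\le\bar B$ and $\prod b_e\le\bar B$, giving a regular contribution to $\gamma(P)$ of at most $\bar B$. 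A simple path ending at $\sink$ can use at most one auxiliary arc (necessarily the forward arc $k\sink$ at the end, since a reverse auxiliary arc would leave $\sink$), contributing a factor $1/\bar B\le 1$. Therefore $\gamma(P)\le\bar B$ and $\mu_j\ge 1/\bar B$.

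I expect the delicate point to be this product-of-gains argument: one must carefully track the mix of forward, reverse, regular, and auxiliary arcs on $P$ and verify that the divisibility hypothesis on $\bar B$ bounds both the numerator and denominator products simultaneously. Once $\mu_j\ge 1/\bar B$ is established, $|b_j^\mu|\le\bar B^2$ and hence $\bar\Delta\le n\bar B^2$ follow immediately.
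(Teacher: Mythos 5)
Your proposal is correct and follows essentially the same route as the paper's proof: establish that $\mu$ is conservative for the cycle-canceled flow $g$ by prepending a residual arc to the maximum-gain path, use $g^\mu$ to exhibit a feasible solution to the \textsc{Tight-Flow} max-flow instance, deduce $\Delta$-feasibility of the returned pair from conservativeness (a consequence of the \textsc{Tight-Flow} construction routing flow only on tight arcs), and bound $\bar\Delta\le n\bar B^2$ via Theorem~\ref{thm:tight-flow}(ii) together with $|b_j|\le\bar B$ and $\mu_j\ge 1/\bar B$. The only genuinely extra content is your explicit verification of $\gamma(P)\le\bar B$ from the divisibility hypothesis on $\bar B$ and the observation that a simple path ending at $\sink$ uses at most one (forward) auxiliary arc---a step the paper asserts as immediate from the encoding assumptions.
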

\begin{proof}
First, we have to verify that the flow problem in
\textsc{Tight-flow}$(V,\mu)$ is feasible. 
We use the generalized flow $g$ obtained by the {\sc Maximum-mean-gain cycle-canceling} algorithm to verify this, by showing that $\mu$ is a conservative labeling for $g$. The nontrivial part is to prove $\gamma_{ij}^\mu\le 1$ for every
residual arc
$ij\in E_g$. 

Let $ij\in E_g$ be an arbitrary residual arc. Consider the $j-\sink$
path $P^j$ with $\mu_j=1/\gamma(P^j)$ in (\ref{def:mu}). Let $P'$
denote the path resulting by adding the arc $ij\in E_g$ to the beginning of $P^j$. Then by definition, $1/\mu_i\ge \gamma(P')=\gamma_{ij}/\mu_j$, showing $\gamma_{ij}^\mu\le 1$. 

Let us now consider the maximum flow instance in \textsc{Tight-flow}$(V,\mu)$.
Setting $x_{ij}:=g^\mu_{ij}$ if $ij\in E$ and $x_{si}:=\sum_{j:ij\in
  E} g^\mu_{ij}-\sum_{j:ji\in E} g^\mu_{ji}$ for every $i\in V-\sink$
gives a feasible solution. This guarantees the existence of the
optimal solution $f'$.

It is straightforward by the construction that $\mu$ is a conservative labeling for $f$, and hence $(f,\mu)$ is $\Delta$-feasible for arbitrary $\Delta>0$. The condition 
 $e_i^\mu\le \egyd{i}\Delta$ is also straightforward by definition. 

Let us verify the bound on $\Delta$. By  Theorem~\ref{thm:tight-flow}(ii), we have 
$\Delta\le n\max_{i\in V-\sink}{|b_i|}/{\mu_i}$.
Our assumption on the encoding sizes give $|b_i|\le \bar B$. Further, 
we have  $1/\mu_i\le \bar B$, according to the definition of $1/\mu_i=\gamma(P^i)$ for some $i-\sink$ path $P^i$, and the encoding assumptions on the $\gamma_e$ values.
\end{proof}

The next straightforward claim justifies the path augmentation step
carried out between $p\in T_0$ and $q\in N\cap T$ whenever $N\cap T\neq\emptyset$. 
\begin{claim}\label{cl:path-aug}
Let $(f,\mu)$ be a $\Delta$-feasible pair, and assume $P$ is a tight path in
$E_f^\mu(\Delta)$ from node $p$ to node $q$, with $e_p^\mu(f)\ge
\Delta+R_p^\mu$. Let us increase $f_{ij}$ by $\Delta\mu_i$ if $ij\in P$ is a
forward arc, and decrease $f_{ji}$ by $\Delta\mu_j$ if $ij\in P$ is a
backward arc; let $f'$ denote the resulting flow. Then $(f',\mu)$ is
also a $\Delta$-feasible pair.
\end{claim}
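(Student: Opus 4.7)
The plan is to verify directly all three conditions of $\Delta$-feasibility for $(f',\mu)$. Since neither $\mu$ nor $\Delta$ is modified by the augmentation, the requirements $\mu_\sink=1$, $\mu_i>0$ and the values $\gamma_{ij}^\mu$ are preserved automatically; only the flow-dependent conditions need attention.

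The key observation is that every arc of $P$ is tight, so \emph{all flow changes happen on tight arcs}. Consequently, no arc in $E$ enters or leaves the non-tight set $F^\mu$, and the bound $f_{ij}^\mu\le\Delta$ on non-tight arcs is preserved trivially. For exactly the same reason, $R_i(f')=R_i(f)$ for every $i\in V$: the reserve depends only on flow along non-tight incoming arcs, and none of those change.

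It then remains to check $e_i(f')\ge R_i(f')$ for every $i\in V-\sink$. For nodes not on $P$, nothing changes. For any internal node $v$ of $P$, a short case-analysis over the four forward/backward combinations for the two arcs of $P$ incident to $v$ shows that the changes in inflow and outflow at $v$ cancel (each uses the tightness identity $\gamma_{ab}\mu_a=\mu_b$), so $e_v$ is unchanged. For the endpoint $q$, the analogous one-arc case-analysis gives $e_q^\mu(f')=e_q^\mu(f)+\Delta$, which only strengthens the inequality. For the endpoint $p$ we similarly obtain $e_p^\mu(f')=e_p^\mu(f)-\Delta$, so, combined with $R_p^\mu(f')=R_p^\mu(f)$, the required inequality becomes exactly the hypothesis $e_p^\mu(f)\ge\Delta+R_p^\mu$.

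I do not expect any significant obstacle here: the definition of $\Delta$-conservativeness with the reserve $R_i$ appears to have been tailored precisely so that a tight-path augmentation preserves the property, provided the starting node $p$ carries enough excess beyond its reserve to absorb the $\Delta$-drop. The only mildly technical ingredient is confirming that decreasing $f_{ji}$ by $\Delta\mu_j$ on a backward arc of $P$ keeps $f_{ji}\ge 0$, but this follows from $P\subseteq E_f^\mu(\Delta)$, which forces $f_{ji}^\mu>\Delta$ on such arcs.
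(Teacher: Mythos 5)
Your verification is correct, and it is exactly the kind of direct argument that the paper has in mind when it labels this a ``straightforward claim'' and omits the proof: the augmentation touches only tight arcs, so $F^\mu$, the relabeled-flow bound on non-tight arcs, and the reserves $R_i$ are all untouched, excesses at internal nodes cancel by tightness, $e_q$ only increases, and the hypothesis $e_p^\mu\ge\Delta+R_p^\mu$ is precisely what is needed at $p$. Your final remark about nonnegativity on backward arcs -- that $P\subseteq E_f^\mu(\Delta)$ forces $f^\mu_{ji}>\Delta$ -- correctly closes the one remaining loose end.
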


We next prove some fundamental properties of the subroutine
\textsc{Elementary step}, most importantly, that it maintains 
the $\Delta$-feasibility of $(f,\mu)$. By induction, we may assume 
that the four conditions in the lemma always hold when \textsc{Elementary step}$(T,T_0,f,\mu,\Delta)$ is called in the algorithm.
\begin{lemma}\label{lem:elementary}
Let $(f,\mu)$ be a $\Delta$-feasible pair for some $\Delta>0$, and let $T\subseteq V-\sink$ satisfy the following conditions:
\begin{itemize}
\item $e_i^\mu< \ketd{i}\Delta$ for all $i\in (V\setminus T)-\sink$; 
\item $e_i^\mu\ge \negd{i}\Delta$ for all $i\in  T$; 
\item $\gamma^\mu_{ij}<1$ for all $ij\in E[T,V\setminus T]$;
\item $f_{ij}^\mu\le \Delta$ for all $ij\in E[V\setminus T,T]$.
\end{itemize}
Let $f'$, $\mu'$, $\Delta'$, and $e'_i$ denote the respective values at the
end of \textsc{Elementary step}$(T,T_0,f,\mu,\Delta)$.
If $\alpha<\infty$, then the pair $(f',\mu')$ is
$\Delta'$-feasible. Further, the following statements hold.
\begin{enumerate}[(i)]
\item $\alpha>1$. If $\alpha=\infty$ then the modified flow returned
  by the algorithm is optimal to  (\ref{primal}), and $\mu$ is optimal  to  (\ref{dual}).
\item ${e'_i}^{\mu'_i}\le  \ketd{i}\Delta'$ for all $i\in V\setminus T$, and if
 $\alpha=\alpha_1$, then $\exists i\in V\setminus T$ such that
 equality holds.
\item $e'_i\le e_i$ for all $i\in T$.
\item If $\alpha=\alpha_2$ then $\exists ij\in E$ with $i\in T$, $j\in V\setminus T$, and $\gamma_{ij}^{\mu'}=1$.
\end{enumerate}
\end{lemma}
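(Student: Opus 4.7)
The plan is to verify each assertion by computing how the step modifies $\gamma^{\mu}_{ij}$, $f^{\mu}_{ij}$, the excesses, and the reserves $R_i$ on each of the four arc classes determined by whether the endpoints lie in $T$ or $V\setminus T$. Most checks are routine algebra; a single accounting step crucially uses the slack built into the $T$-membership threshold $\negd{i}=d_i+1$.

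I would first confirm $\alpha>1$ and the sign conditions on the $\delta_i$. The denominator $r_2(i)-r_4(i)-b_i$ rewrites as $e_i-r_1(i)+r_3(i)$, which is nonnegative because $R_i\ge r_1(i)$ (the latter sums only a subset of the non-tight forward inflow) and $e_i\ge R_i$ by $\Delta$-feasibility. The numerator rewrites as $(\ketd{i}\Delta\mu_i-e_i)+(r_2-r_4-b_i)$ and is strictly positive thanks to the strict first hypothesis $e^\mu_i<\ketd{i}\Delta$; the same strict inequality gives $\delta_i>1$ whenever its denominator is positive, and $\alpha_2>1$ is immediate from the third hypothesis. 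In the $\alpha=\infty$ branch of (i), the conditions $E[T,V\setminus T]=\emptyset$ together with $r_2(i)=r_4(i)+b_i$ for every $i\in(V\setminus T)-\sink$ are exactly what is needed to invoke Lemma~\ref{lem:make-conservative} and Theorem~\ref{thm:genflow-opt}(i) on the flow after zeroing non-tight arcs out of $\sink$, certifying optimality.

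Next I would verify $\Delta'$-feasibility. The bound $\gamma^{\mu'}_{ij}\le 1$ is unchanged on $E[T]\cup E[V\setminus T]$, decreases by a factor $\alpha$ on $E[V\setminus T,T]$, and is held at most $1$ on $E[T,V\setminus T]$ exactly by the $\alpha\le\alpha_2$ cutoff --- and whichever arc attains $\alpha_2$ satisfies $\gamma^{\mu'}_{ij}=\alpha\gamma^\mu_{ij}=1$, giving (iv). A one-line check on each of the four arc classes shows $f'^{\mu'}_{ij}=f^\mu_{ij}/\alpha\le\Delta/\alpha=\Delta'$ whenever the arc is non-tight under $\mu'$, using the fourth hypothesis to bound $f^\mu_{ij}\le\Delta$ on $E[V\setminus T,T]$. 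For $i\in V\setminus T$ the constraint $\alpha\le\delta_i$ is precisely the inequality $e'^{\mu'}_i\le\ketd{i}\Delta'$, giving (ii) with equality exactly when $\alpha=\delta_i$. A short computation then shows that $e'^{\mu'}_i-R'^{\mu'}_i$ differs from $e^\mu_i-R^\mu_i$ by the sum of nonnegative contributions from newly tight arcs from $T$ and from divided outflow, so the excess-reserve inequality actually strengthens on $V\setminus T$.

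The main obstacle is the excess-reserve inequality at $i\in T$, where the accounting runs unfavorably: every arc $ji\in E[V\setminus T,T]$ that was tight under $\mu$ becomes non-tight under $\mu'$ (because $\gamma^{\mu'}_{ji}=\gamma^\mu_{ji}/\alpha<1$) and therefore migrates from the tight-inflow budget into $R'^{\mu'}_i$. Direct expansion collapses to
\[
\alpha\bigl(e'^{\mu'}_i-R'^{\mu'}_i\bigr)\;=\;(e^\mu_i-R^\mu_i)\;-\!\!\sum_{\substack{ji\in E[V\setminus T,T]\\ \gamma^\mu_{ji}=1}}\!\!f^\mu_{ji}.
\]
The trailing sum is at most (number of tight forward arcs into $i$ from $V\setminus T$)$\cdot\Delta$ by the fourth hypothesis, and Claim~\ref{cl:R-i-bound} bounds $R^\mu_i$ by (number of non-tight forward arcs into $i$)$\cdot\Delta$; the two arc counts are disjoint and sum to at most $d_i$, so the second hypothesis $e^\mu_i\ge(d_i+1)\Delta$ supplies exactly the one extra unit of $\Delta$ needed to cover both --- which is the structural reason the algorithm uses $\negd{i}=d_i+1$ rather than $d_i$. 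Finally, (iii) follows because the only flow change touching $i\in T$ is a $1/\alpha$ scaling of the incoming arcs from $V\setminus T$, so $e'_i\le e_i$.
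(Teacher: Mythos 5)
Your proposal is correct and follows essentially the same strategy as the paper's proof: verify $\Delta'$-feasibility arc-class by arc-class, establish the sign of the $\delta_i$ numerator and denominator from $e_i\ge R_i\ge r_1(i)$ and the strict upper bound on $e_i^\mu$, and show that the $\negd{i}$-threshold absorbs the migration of tight $E[V\setminus T,T]$-arcs from the tight inflow into the reserve $R'_i$. The only difference is cosmetic: in the $i\in T$ case you write the closed-form identity $\alpha(e'^{\mu'}_i-R'^{\mu'}_i)=(e^\mu_i-R^\mu_i)-\sum f^\mu_{ji}$ (sum over tight $ji\in E[V\setminus T,T]$) and then bound the subtracted sum, whereas the paper reaches the same conclusion by a direct inequality chain lower-bounding $e'_i$ against $R'_i$; both arguments rest on the same ingredients (Claim~\ref{cl:R-i-bound} plus the $f^\mu_{ji}\le\Delta$ hypothesis) and the same arc count $\le d_i$.
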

\begin{proof}
For $\Delta'$-feasibility, let us first verify $\gamma_{ij}^{\mu'}\le 1$
for all $ij\in E$. If $ij\in E[T]$ or $ij\in
 E[V\setminus T]$, then $\gamma_{ij}^{\mu'}=\gamma_{ij}^\mu$.
If $ij\in E[T,V\setminus T]$, then we have
$\gamma_{ij}^{\mu'}=\alpha\gamma_{ij}^\mu\le 1$ due to the choice
$\alpha\le \alpha_2$.
Finally, if $ij\in E[V\setminus T,T]$, then
$\gamma_{ij}^{\mu'}=\gamma_{ij}^\mu/\alpha<1$. The next two claims
verify the remaining properties needed for $\Delta'$-feasibility.
\begin{claim}
If  $\gamma_{ij}^{\mu'}<1$ for an arc $ij\in E$, then ${f'_{ij}}^{\mu'}=f_{ij}^\mu/\alpha\le \Delta/\alpha=\Delta'$.
\end{claim}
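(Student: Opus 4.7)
My plan is to prove the claim by a straightforward case analysis on the position of the arc $ij$ relative to $T$ and $V \setminus T$, since the update rules for both $\mu$ and $f$ in \textsc{Elementary step} are defined by exactly this partition. In every case I will first show that ${f'_{ij}}^{\mu'} = f_{ij}^{\mu}/\alpha$ by combining how $\mu_i$ scales with how $f_{ij}$ scales, and then show $f_{ij}^{\mu} \le \Delta$ (so that dividing by $\alpha$ gives the desired $\Delta' = \Delta/\alpha$) either from $\Delta$-feasibility of the input pair $(f,\mu)$ or from one of the four preconditions imposed on $T$ in Lemma~\ref{lem:elementary}.

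Concretely, I would split into four cases: $ij \in E[T]$, $ij \in E[V\setminus T]$, $ij \in E[T, V\setminus T]$, and $ij \in E[V\setminus T, T]$. In the first case $\gamma_{ij}^{\mu'} = \gamma_{ij}^{\mu}$ and $f$ is unchanged, so the hypothesis $\gamma_{ij}^{\mu'} < 1$ forces $ij \in F^{\mu}$ and $\Delta$-feasibility of $(f,\mu)$ yields $f_{ij}^{\mu} \le \Delta$. The case $ij \in E[V\setminus T]$ is analogous, except now the flow is explicitly divided by $\alpha$ because $ij \in F^{\mu}[V\setminus T]$, and the relabeling factor on $i$ is $1$; the two effects still produce ${f'_{ij}}^{\mu'} = f_{ij}^{\mu}/\alpha$. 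For $ij \in E[T, V\setminus T]$, the third precondition on $T$ gives $\gamma_{ij}^{\mu} < 1$ outright, so $ij \in F^{\mu}$ and $\Delta$-feasibility again gives $f_{ij}^{\mu} \le \Delta$; here $f$ is unchanged but $\mu_i$ is multiplied by $\alpha$, which produces the factor $1/\alpha$ in the relabeled flow. Finally, for $ij \in E[V\setminus T, T]$, the fourth precondition directly asserts $f_{ij}^{\mu} \le \Delta$; the flow is divided by $\alpha$ while $\mu_i$ is unchanged, so again ${f'_{ij}}^{\mu'} = f_{ij}^{\mu}/\alpha \le \Delta/\alpha$.

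There is essentially no obstacle: the statement is a bookkeeping check that the simultaneous scaling of $\Delta$, $\mu$, and $f$ by the factor $\alpha$ is internally consistent, and the four preconditions on $T$ were chosen precisely so that one of them applies in each case. The only subtlety worth flagging is that in the two cases where $ij$ crosses the cut, one must note that the flow-update rule in Figure~\ref{code:basic} affects arcs in $E[V\setminus T, T]$ but not those in $E[T, V\setminus T]$, while the label-update rule does the opposite on the scaling of $\mu_i$; these asymmetries cancel so that in every case the relabeled flow on $ij$ ends up divided by exactly $\alpha$.
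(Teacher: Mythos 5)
Your proof is correct and takes essentially the same route as the paper's: a four-way case split on the position of $ij$ relative to $T$ and $V\setminus T$, in each case combining the update rules for $\mu$ and $f$ to obtain ${f'_{ij}}^{\mu'}=f_{ij}^{\mu}/\alpha$, and then invoking either the $\Delta$-feasibility of $(f,\mu)$ (via $\gamma_{ij}^{\mu}<1$) or one of the preconditions of Lemma~\ref{lem:elementary} to get $f_{ij}^{\mu}\le\Delta$. The paper merely organizes the same four cases as two cases on $i\in T$ versus $i\in V\setminus T$ with sub-cases on $j$.
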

\begin{proof}
Let us first assume  $i\in T$;   the first equality follows by $f'_{ij}=f_{ij}$,
$\mu'_{i}=\mu_i\alpha$.
The inequality $f_{ij}^\mu\le \Delta$ is due to the
$\Delta$-feasibility of $f$, because of $\gamma_{ij}^\mu<1$.
 If $j\in V\setminus T$, this is included among the assumptions,
 whereas if $j\in T$, then it follows by $\gamma_{ij}^{\mu}=\gamma_{ij}^{\mu'}<1$.

Consider now the case $i\in V\setminus T$. If also $j\in V\setminus T$,
then $\gamma_{ij}^\mu=\gamma_{ij}^{\mu'}<1$, and hence
$f'_{ij}=f_{ij}/\alpha$, as we decrease the flow values by a factor
$\alpha$ on arcs $F^\mu[V\setminus T]$; the inequality $f_{ij}^\mu\le
\Delta$  follows again
by the $\Delta$-feasibility of $f$. If $j\in T$, that is, $ij\in
E[V\setminus T,T]$, then we must again have $f'_{ij}=f_{ij}/\alpha$, 
and $f_{ij}^\mu\le \Delta$ is included among the assumptions.
\end{proof}
\begin{claim}
The inequality $e'_i\ge R'_i$ holds for all $i\in V-\sink$, where  $R'_i$ denotes the
 $f'$ flow entering $i$ on non-tight arcs for $\mu'$. 
\end{claim}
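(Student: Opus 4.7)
The inequality $e_i'\ge R_i'$ must be checked for $i\in T$ and for $i\in(V\setminus T)-\sink$ separately, since $\mu'_i=\alpha\mu_i$ in the first case but $\mu'_i=\mu_i$ in the second. I would work throughout in the $\mu'$-relabeled coordinates, classifying each arc by whether its flow was rescaled (those in $F^\mu[V\setminus T]\cup E[V\setminus T,T]$) and whether its relabeled gain shifted (those in $E[T,V\setminus T]\cup E[V\setminus T,T]$). The one subtle point is that every arc of $E[V\setminus T,T]$ becomes non-tight under $\mu'$ (since $\gamma^{\mu'}_{ji}=\gamma^\mu_{ji}/\alpha<1$), so such arcs contribute to $R_i'/\mu'_i$ even if they were tight for $\mu$ and thus invisible to $R_i^\mu$.

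For $i\in(V\setminus T)-\sink$ I would group inflows and outflows at $i$ into six pieces: $P_1$ (inflow from $T$, all non-tight by condition 3), $P_2,P_3$ (tight/non-tight inflow within $V\setminus T$), and analogously $Q_1,Q_2$ for outflow within $V\setminus T$, together with $Q_3$ for outflow into $T$. Writing $P_1'\le P_1$ for the part of $P_1$ that remains non-tight under $\mu'$ (some arcs of $E[T,V\setminus T]$ may become tight when $\alpha=\alpha_2$), a direct computation in $\mu'$-coordinates gives
\[
\tfrac{e_i'}{\mu'_i}-\tfrac{R_i'}{\mu'_i}\;=\;(P_1-P_1')\;+\;(e_i^\mu-R_i^\mu)\;+\;\bigl(1-\tfrac{1}{\alpha}\bigr)(Q_2+Q_3),
\]
after substituting $e_i^\mu-R_i^\mu=P_2-Q_1-Q_2-Q_3-b_i^\mu$. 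Each summand is nonnegative --- the first by construction, the second by hypothesis, the third because $\alpha>1$ --- so $e_i'\ge R_i'$.

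For $i\in T$ the factor $1/\alpha$ introduced by $\mu'_i=\alpha\mu_i$ shifts the bookkeeping, and after cancellation the desired inequality reduces to the clean statement
\[
e_i^\mu\;\ge\;A+B,\qquad A:=\!\!\!\!\sum_{ji\in F^\mu[T]\cap\delta^{in}(i)}\!\!\!\gamma^\mu_{ji}f^\mu_{ji},\quad B:=\!\!\!\!\sum_{ji\in E[V\setminus T,T]\cap\delta^{in}(i)}\!\!\!\gamma^\mu_{ji}f^\mu_{ji}.
\]
Every term of $A$ is at most $\Delta$ by $\Delta$-feasibility on non-tight arcs, and every term of $B$ is at most $\Delta$ by condition 4 combined with $\gamma^\mu_{ji}\le 1$; since the two index sets are disjoint subsets of $\delta^{in}(i)$, there are at most $d_i$ terms in total, giving $A+B\le d_i\Delta$. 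Condition 2 furnishes the strictly stronger bound $e_i^\mu\ge\negd{i}\Delta=(d_i+1)\Delta$, which exceeds $d_i\Delta$ with slack exactly $\Delta$, closing the case.

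The main obstacle is the algebra in the $i\in T$ case: an arc $ji\in E[V\setminus T,T]$ contributes with a factor of $1/\alpha^2$ to $R_i'/\mu'_i$ (once from $f'_{ji}=f_{ji}/\alpha$ and once from $\mu'_i=\alpha\mu_i$), whereas viewed from the other endpoint $j\in V\setminus T$ it loses only a factor $1/\alpha$. Ensuring these factors cancel to yield the clean comparison $e_i^\mu\ge A+B$, and recognising that the extra slack $\Delta$ hidden in $\negd{i}$ is there precisely to absorb the inflow from previously-tight arcs in $E[V\setminus T,T]$ that becomes ``newly visible'' to $R_i'$ under $\mu'$, is the step where I would be most careful.
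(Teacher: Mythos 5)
Your proof is correct and follows essentially the same route as the paper's: for $i\in V\setminus T$ you track the same inflow/outflow cancellations (non-tight incoming arcs from $V\setminus T$ decrease $e_i$ and $R_i$ equally, outgoing decreases only help, and $T$-arcs possibly becoming tight gives extra slack $P_1-P_1'$), and for $i\in T$ you exploit the same key comparison $e_i^\mu\ge\negd{i}\Delta>\ind{i}\Delta\ge A+B$, bounding each term by $\Delta$ via $\Delta$-feasibility and condition 4. The only difference is presentational: you first reduce $e_i'\ge R_i'$ to an exact clean inequality before bounding, whereas the paper bounds $e_i'$ from below and $R_i'$ from above separately and then compares.
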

\begin{proof}
 We have $e_i\ge
R_i$  by the $\Delta$-feasibility of $f$.

\smallskip
\noindent{\bf Case I: $i\in V\setminus T$.} Since $f'\le f$, the change of
flow on outgoing arcs may only increase $e_i$. 
 If $f'_{ji}<f_{ji}$ on
an incoming arc $ji\in E$, then 
$j\in V\setminus T$ must hold.
Therefore $\gamma_{ji}^{\mu'}=\gamma_{ji}^\mu$, and hence 
$ji$ must be a non-tight arc for both
$\mu$ and  $\mu'$.
The change on $ji$ decreases $e_i$ by
$(1-1/\alpha)\gamma_{ji}f_{ji}$, and causes the same change in
the value of $R_i$. 

\noindent{\bf Case II: $i\in T$.} By the assumption of the lemma,
$e_i^\mu\ge \negd{i}\Delta$.
The flow on outgoing arcs is unchanged.
Let $ji\in E$ be an incoming arc with $f'_{ji}<f_{ji}$. 
We must have $j\in V\setminus T$ and thus  $f_{ji}\le \Delta\mu_j$ by
assumption; further, $\gamma^\mu_{ji}\le 1$ by the
$\Delta$-feasibility of $(f,\mu)$.
Hence it follows that  $\gamma_{ji}f_{ji}<\Delta\gamma_{ji}\mu_j\le \Delta\mu_i$.
This enables us to bound the value $e'_i$. Let $\lambda$ denote the number
of arcs $ji$ with $j\in V\setminus T$.
Using also the assumption $e_i\ge\negd{i}\Delta\mu_i$, we have
\begin{align*}
e'_i&= e_i- \sum_{j\in V\setminus T:ji\in E}(\gamma_{ji}f_{ji}- \gamma_{ji}f'_{ji})
\ge e_i-\sum_{j\in V\setminus T:ji\in E} (\Delta\mu_i-\gamma_{ji}f'_{ji})\\
& 
\ge \sum_{j\in V\setminus T:ji\in E} \gamma_{ji}f'_{ji}+(\ind{i}+1-\lambda)\Delta\mu_i>  R'_i.
\end{align*}
In the last inequality, we use that if $ji$ is a non-tight arc with $j\in
T$, then $\gamma_{ji}f'_{ji}\le \Delta'\mu'_i=\Delta\mu_i$, 
 and that the total number of such
arcs is $\le \ind{i}-\lambda$. 
\end{proof}
Let us now verify claims {\em (i)-(iv)}.
We first show that in the
formula (\ref{eq:delta-i}) defining $\delta_i$, the denominator is
nonnegative and $\delta_i>1$. 
Note that 
\begin{equation}
r_1(i)+r_2(i)-r_3(i)-r_4(i)-b_i=e_i\ge R_i\ge r_1(i).\label{eq:r1-4}
\end{equation}
The equality is by the definition of the four terms; 
the first inequality is required by $\Delta$-feasibility, and the
second since the definition of $R_i$ includes all terms in $r_1(i)$.
This shows that the denominator is $r_2(i)-r_4(i)-b_i\ge r_3(i)\ge 0$.
The inequality $\delta_i>1$ then follows by the equality in \eqref{eq:r1-4} and the assumption
$4(d_i+2)\Delta\mu_i> e_i$.

For {\em (i)}, 
the above argument gives $\alpha_1>1$. It is easy to see that $\alpha_2>1$,
and hence $\alpha>1$ follows.
For the second part, let us analyze the $\alpha=\infty$ case. First,
we show that $T=\emptyset$ must hold. For a contradiction,
assume $T\neq\emptyset$. We have $\sink\in V\setminus T$ is assumed, and  every $j\in V-\sink$ is connected by
an arc to $\sink$ by (\ref{cond:root}). Therefore the set of arcs defining $\alpha_2$ is
always nonempty, showing that $\alpha$ must be finite.

We thus have $T=\emptyset$. Since
$\alpha_1=\infty$, we must have $\delta_i=\infty$ for every $i\in V-\sink$, that is,
the denominator in  (\ref{eq:delta-i}) is always 0. According to \eqref{eq:r1-4},
this is only possible if $r_3(i)=0$ for every $i\in V-\sink$. This means that 
 for every $ij\in F^\mu$, if $f_{ij}>0$,  then $i=\sink$ must hold.
As a further consequence of \eqref{eq:r1-4}, we have $e_i=R_i=r_1(i)$  for every $i\in V-\sink$.
Combining these two, for every $i\in V-\sink$ we obtain
 $e_i=r_1(i)=\gamma_{ti}f_{ti}$ if $ti\in F^\mu$, and $e_i=r_1(i)=0$
 if $ti\notin F^\mu$.
After the algorithm sets the value of all these arcs to 0, $\mu$
becomes a conservative labeling, and $e_i=0$ for all $i\in V-\sink$, 
yielding primal and dual optimality according to Theorem~\ref{thm:genflow-opt}(i).

Let us now prove claim {\em (ii)}. The flow on the arcs
incident to $i$ is divided by $\alpha$ on all arcs in $F_1(i)$ and $F_3(i)$,
and left unchanged on arcs in $F_2(i)$ and $F_4(i)$. Therefore,
${e'_i}^{\mu'_i}\le \ketd{i}\Delta'$ follows
whenever $\alpha\le \delta_i$. 
The claims on nodes/arcs with equalities in {\em (ii)} and {\em(iv)}
are straightforward. 
Finally,  {\em (iii)} follows since if $i\in T$, then the flow is
unchanged on outgoing arcs and on arcs incoming from $T$, but decreases on arcs incoming from
$V\setminus T$.
\end{proof}

\subsection{Bounding the number of iterations}\label{sec:iter-weak}

Let $\Delta\itr{\tau}$ denote the value of
the scaling factor at the beginning of the $\tau$'th iteration; 
clearly, $\Delta^{(1)}\ge \Delta\itr{2}\ge\ldots\ge\Delta\itr{\tau}$.
Let
$f^{(\tau)},\mu^{(\tau)},e^{(\tau)}$ and $T\itr{\tau}$
denote the respective vectors and set $T$ at the beginning of iteration $\tau$.

Let us classify the iterations into three categories. The iteration $\theta$ is
{\em shrinking}, if  $T\itr{\theta}\setminus T\itr{\theta+1}\neq
\emptyset$.
Every iteration with a path augmentation is shrinking, since $T$ is
reset to $T_0$, although it contained some other nodes, in particular,
the endpoint $q$ of the path previously.
The other type of shrinking iteration is when
\estep{} is
performed, and for some $i\in T_0$, the value of  $e_i^\mu$ is decreased 
below $\egyd{i}\Delta$. 

The iteration $\theta$ is {\em expanding}, if $T\itr{\theta}\subsetneq
T\itr{\theta+1}$. This can either happen if the iteration only consists of
extending $T$ by adding a new node reachable by a tight arc in the
$\Delta$-fat graph, or if $T_0$ is extended in \estep{}, and no node
is removed from $T_0$. An iteration
that
is neither shrinking nor expanding is called {\em neutral}.
Note that in a neutral iteration we must perform \estep{}, and further
we must have  $T\itr{\theta}=T\itr{\theta+1}$. We claim that the iteration following the neutral iteration $\theta$ must be either expanding or shrinking. 
Indeed, if $T\itr{\theta+1}\cap N\itr{\theta+1}\neq\emptyset$, then it will be
shrinking. Otherwise, Lemma~\ref{lem:elementary}(iv) guarantees that
it must be expanding. 
The main goal of this section is to prove the following lemma.
\begin{lemma}\label{lem:iteration-bound}
For the starting value $\Delta\itr{1}=\bar\Delta$ and arbitrary integer $\tau\ge 1$,
we have 
\[
\tau\le 26mn\log_2\frac{\bar\Delta}{\Delta\itr{\tau+1}}.
\]
Further, the total number of shrinking iterations among the first $\tau$ is at most
\[
13m\log_2\frac{\bar\Delta}{\Delta\itr{\tau+1}}.
\]
\end{lemma}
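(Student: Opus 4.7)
The plan is to first bound the number $s$ of shrinking iterations among the first $\tau$ in terms of $\sigma:=\log_2(\bar\Delta/\Delta\itr{\tau+1})$, and then to bound $\tau$ in terms of $s$ via a combinatorial decomposition of the iteration sequence into ``blocks'' of non-shrinking iterations separated by shrinking ones. The main external tool will be Lemma~\ref{lem:sigma-bound} (cited in the remark just above the lemma statement), which guarantees that $\Delta$ drops by a factor of at least $2$ between any two consecutive entries of the same node into $T_0$.

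For the shrinking count, every shrinking iteration is either (a) a path augmentation---which is shrinking because $T$ is reset to $T_0$ while the endpoint $q\in N\cap T$ lies in $T\setminus T_0$ (using $N\cap T_0=\emptyset$, since the thresholds $e_i^\mu<\negd{i}\Delta$ and $e_i^\mu\ge\egyd{i}\Delta$ are incompatible)---or (b) an \estep{} that removes at least one node from $T_0$. Attributing each shrinking to a responsible node $i$, Lemma~\ref{lem:sigma-bound} bounds the number of ``lives'' of $i$ in $T_0$ by at most $\sigma+1$. During one such life, $e_i^\mu/\Delta$ is monotonically non-increasing: a path augmentation from $i$ reduces $e_i^\mu$ by exactly $\Delta$ with $\Delta$ unchanged (so the ratio drops by $1$); an \estep{} divides both $e_i^\mu$ and $\Delta$ by the same factor $\alpha>1$, with $e_i$ additionally non-increasing by Lemma~\ref{lem:elementary}(iii) (applicable because $T_0\subseteq T$ holds throughout); and path augmentations with $i$ as an intermediate node preserve $e_i$ by flow conservation on internal vertices. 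Since the ratio starts at $4(d_i+2)$ at entry and leaves once it drops below $d_i+2$, each life admits at most $3(d_i+2)+1$ path augmentations from $i$ plus at most one terminating \estep{}. Summing over $i\in V-\sink$ and lives, using $\sum_i d_i=2m$ and $n\le m$, yields $s\le 13m\sigma$.

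For the total count, partition the $\tau$ iterations into ``blocks'' of consecutive non-shrinking iterations terminated by shrinking iterations (with at most one tail block with no final shrinking). In any such block every expanding iteration strictly grows $T$, and since $|T|\le n-1$ throughout, each block contains at most $n-1$ expanding iterations. As observed in the paragraph preceding the lemma, every neutral iteration must be followed by an expanding or shrinking iteration (via Lemma~\ref{lem:elementary}(iii),(iv)), so neutrals cannot be consecutive, and hence each block has at most $n-1$ neutrals as well. Each block therefore has length at most $2(n-1)$, giving $\tau\le s+2(n-1)(s+1)\le 2ns+2n$, and combining with $s\le 13m\sigma$ and $n\le m$ yields $\tau\le 26mn\sigma$.

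The main technical obstacle is the monotonicity claim for $e_i^\mu/\Delta$ across an entire life in $T_0$: it requires a case analysis over every iteration type (path augmentations with $i$ as source, endpoint, intermediate, or uninvolved; \estep{}; and tight-arc extensions) and depends on the invariant $T_0\subseteq T$ so that Lemma~\ref{lem:elementary}(iii) is always available. A secondary issue is matching the precise constants $13$ and $26$ against the additive $+2n$ slack from boundary blocks and the ``$+1$'' in ``$\sigma+1$ lives''; this can be absorbed by exploiting that a node's first entry into $T_0$ requires $\Delta$ to have already been reduced by a constant factor below $\bar\Delta$ (since initially $e_i^\mu\le\egyd{i}\bar\Delta$ by Lemma~\ref{lem:init}, so the ratio $e_i^\mu/\Delta$ cannot reach $4(d_i+2)$ without $\Delta$ losing a factor of at least $4$ to the growth of $e_i^\mu$ or the decrease of $\Delta$).
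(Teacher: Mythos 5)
Your proposal is correct and follows essentially the same route as the paper: bound shrinking iterations via Lemma~\ref{lem:sigma-bound} by charging each to a (node, life-in-$T_0$) pair and exploiting the monotonicity of $\beta_i=e_i^\mu/\Delta$ within a life, then bound the full iteration count by observing that every $2n$ consecutive iterations contain a shrinking one. The paper packages your ``at most $3d_i+7$ shrinking events per life'' charging into the potential function $\Psi=\sum_{i\in T_0}\lfloor\beta_i-(d_i+1)\rfloor$ (which drops by $\ge 1$ at each shrinking iteration and jumps by exactly $3d_i+7$ on entry), but this is the same accounting; also note Lemma~\ref{lem:sigma-bound} as stated already gives $\lambda_i\le\log_2(\bar\Delta/\Delta^{(\tau+1)})$ rather than $\sigma+1$, so no further constant-absorption argument is needed.
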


An important quantity in our analysis will be
\[
\beta_i:=\frac{e_i}{\Delta\mu_i};
\]
let $\beta_i\itr{\tau}$ denote the corresponding value at the
beginning of iteration $\tau$.
Let $\alpha\itr\tau$ denote the value of $\alpha$ in iteration $\tau$ if
 the subroutine \textsc{Elementary Step} is called, and let
 $\alpha\itr\tau=1$ otherwise.
Note that the value of the scaling factor only changes in the
subroutine \textsc{Elementary Step}.
Therefore 
\[
\frac{\bar\Delta}{\Delta\itr{\tau+1}}=\prod_{\theta\in[1,\tau]}\alpha\itr{\theta} \quad \forall \tau\in \Z,\ \tau>1.
\]

\begin{lemma}\label{lem:sigma-bound}
During the first $\tau$ iterations, a node $i$ may enter the set $T_0$ altogether at most
$\log_2 \frac{\bar\Delta}{\Delta\itr{\tau+1}}$ 
times.
\end{lemma}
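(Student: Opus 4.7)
The plan is to use the potential $\nu_i := \mu_i\Delta$ and show: (a) it is monotonically non-increasing throughout the algorithm, and (b) between any two successive times $i$ enters $T_0$, $\nu_i$ decreases by at least a factor of $2$. For (a), path augmentations leave both $\mu_i$ and $\Delta$ unchanged; an \textsc{Elementary step} with $i\in T$ sends $\mu_i\mapsto\alpha\mu_i$ and $\Delta\mapsto\Delta/\alpha$, so $\nu_i$ is fixed; an \textsc{Elementary step} with $i\in V\setminus T$ leaves $\mu_i$ fixed and divides $\Delta$ by $\alpha$, so $\nu_i$ is divided by $\alpha$. Since $\mu_i$ is monotonically non-decreasing,
\begin{equation*}
\frac{\nu_i^{(1)}}{\nu_i^{(\tau+1)}}=\frac{\mu_i^{(1)}}{\mu_i^{(\tau+1)}}\cdot\frac{\bar\Delta}{\Delta^{(\tau+1)}}\le\frac{\bar\Delta}{\Delta^{(\tau+1)}},
\end{equation*}
so granting (b), the number of times $i$ enters $T_0$ within the first $\tau$ iterations is at most $\log_2(\nu_i^{(1)}/\nu_i^{(\tau+1)})\le\log_2(\bar\Delta/\Delta^{(\tau+1)})$, as claimed.

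For (b), I will consider three consecutive iterations $\theta_1<\theta_2<\theta_3$ in which $i$ enters $T_0$ at $\theta_1$, leaves at $\theta_2$, and re-enters at $\theta_3$. The entry/departure conditions give $e_i=\ketd{i}\nu_i$ at $\theta_1,\theta_3$ and $e_i<\egyd{i}\nu_i$ at $\theta_2$. Since $i\in T_0\subseteq T$ throughout $[\theta_1,\theta_2]$, $\nu_i$ stays constant on this interval; write $\nu^*:=\nu_i^{(\theta_1)}=\nu_i^{(\theta_2)}$, so $e_i<\egyd{i}\nu^*$ at $\theta_2$. Arguing by contradiction, I will assume $\nu_i^{(\theta_3)}>\nu^*/2$, which forces $e_i$ to more than double in $(\theta_2,\theta_3]$, growing from below $\egyd{i}\nu^*$ to above $2\egyd{i}\nu^*$.

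To rule this out, I would classify the $e_i$-changing events in $(\theta_2,\theta_3]$. Any path augmentation with $i=q$ requires $i\in N\cap T$, hence $\beta_i<d_i+1$ immediately before and $\beta_i<d_i+2$ immediately after, giving $e_i<\egyd{i}\nu_i\le\egyd{i}\nu^*$ right after every such event. Thus path augmentations alone cannot push $e_i$ above $2\egyd{i}\nu^*$; the only remaining source of growth is an \textsc{Elementary step} with $i\in V\setminus T$. Combining the identity $e'_i=e_i-(1-1/\alpha)(r_1(i)-r_3(i))$ from (\ref{def:f-r})--(\ref{eq:r1-4}) with the bound $e'_i\le\ketd{i}\nu_i/\alpha$ from Lemma~\ref{lem:elementary}(ii), I would aggregate the $\alpha$-factors over the chain of such steps and conclude that their product, which equals the factor by which $\nu_i$ decreases in $(\theta_2,\theta_3]$, must be at least $2$, contradicting $\nu_i^{(\theta_3)}>\nu^*/2$. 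The main obstacle will be the bookkeeping when $i$ alternates between $T$ and $V\setminus T$ during $(\theta_2,\theta_3]$: each sub-interval in $V\setminus T$ during which $e_i$ grows must be paired with a corresponding $\alpha$-reduction of $\nu_i$, and the transitions into $T$, which leave $\nu_i$ unchanged but can allow $e_i$ to decrease via Lemma~\ref{lem:elementary}(iii), must not disrupt this correspondence.
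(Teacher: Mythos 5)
Your overall plan — track $\nu_i := \mu_i\Delta$, observe it is non-increasing and only shrinks (by exactly $\alpha$) during an \estep\ in which $i\in V\setminus T$, and show it drops by a factor of $2$ between successive entries of $i$ into $T_0$ — is in fact equivalent to the paper's argument: the paper's potential is $\beta_i = e_i/(\mu_i\Delta) = e_i/\nu_i$ and it shows $\prod\alpha$ over the relevant steps is at least $2$, which is precisely your factor-$2$ drop in $\nu_i$. The high-level structure is sound, and step (a), together with the inequality $\nu_i^{(1)}/\nu_i^{(\tau+1)}\le\bar\Delta/\Delta^{(\tau+1)}$, is correct.

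The gap is exactly the ``bookkeeping obstacle'' you flag at the end of (b), and you do not resolve it; the paper's proof contains two ideas that do. First, rather than anchoring at $\theta_2$ (the departure from $T_0$), the paper anchors at $\tau'_\ell$, defined as the \emph{largest} iteration $\le\tau_\ell$ with $\beta_i^{(\tau'_\ell)}<\egyd{i}$. This achieves something your anchor does not: on the interval $[\tau'_\ell,\tau_\ell]$ no path augmentation can change $e_i$ at all (a short contradiction argument: $i\ne p$ since $i\notin T_0$, and $i=q$ would force $\beta_i<\egyd{i}$ in the next iteration, contradicting maximality of $\tau'_\ell$). This eliminates the interleaving problem entirely, instead of trying to ``pair'' path augmentations with $\alpha$-reductions. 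Second, the aggregation itself is done through $\beta_i$ rather than $e_i$, using the separately-proved Claim~\ref{cl:e-inc}: $\beta'_i\le\alpha^2\max\{\beta_i,\ind{i}\}$ for \emph{every} $i$ after an \estep. This single inequality handles the $i\in T$ case (where $\beta'_i\le\beta_i$) and the $i\in V\setminus T$ case uniformly, and because on $[\tau'_\ell+1,\tau_\ell]$ one has $\beta_i\ge\egyd{i}>\ind{i}$, the $\max$ vanishes and a clean telescope gives $4=\ketd{i}/\egyd{i}<\left(\prod_{\theta\in[\tau'_\ell,\tau_\ell]}\alpha^{(\theta)}\right)^2$. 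Your $e_i$-based route (combining $e'_i=e_i-(1-1/\alpha)(r_1(i)-r_3(i))$ with the $\delta_i$-cap from Lemma~\ref{lem:elementary}(ii)) does not produce a multiplicative telescope directly, because the additive term $r_3(i)$ is not controlled by $e_i$ alone. Finally, you also need to handle the \emph{first} entry into $T_0$ separately, since there is no preceding departure $\theta_2$: the paper uses $\beta_i^{(1)}\le\egyd{i}$, guaranteed by the \textsc{Initialize} subroutine (Lemma~\ref{lem:init}), to serve as the anchor for $\ell=1$.
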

Before proving the lemma, let us show how it can be used to bound the
number of iterations.

\begin{proof}[Proof of Lemma~\ref{lem:iteration-bound}]
Let us consider the potential
\begin{equation}
\Psi:=\sum_{i\in T_0} \lfloor \beta_i-\negd{i}\rfloor.\label{def:psi}
\end{equation}
Initially, $T_0=\emptyset$ and therefore $\Psi=0$. Note that every
term is positive in every step of the algorithm, since nodes with 
$\beta_i<\egyd{i}$ are immediately removed from $T_0$.
The subroutine \estep{} may only decrease the value of $\Psi$:
Lemma~\ref{lem:elementary}(iii) guarantees that if $i\in T_0$, then $\beta_i$ may only
decrease during the subroutine, since  $e_i'\le e_i$ and $\Delta'\mu'_i=\Delta\mu_i$.

Every shrinking iteration must decrease $\Psi$ by at least one. Indeed, a path
augmentation decreases $e_p$ by $\Delta\mu_p$ for the starting node
$p$, which decreases $\lfloor \beta_p-\negd{p}\rfloor$ by one. 
No other $\beta_i$ value is modified for $i\in T_0$. 
Next, consider the case when a shrinking iteration removes some nodes $i$ from $T_0$ after
performing \estep{} because of $\beta_i<\egyd{i}$. In the previous
iteration, we must have had $\beta_i\ge \egyd{i}$ for such nodes,  hence $\Psi$
decreases by at least 1.

When a node $i$ enters $T_0$, then it increases $\Psi$ by
$(3d_i+7)$. 
Assume that  the node $i$ enters $T_0$ altogether $\lambda_i$
times between iterations $1$ and $\tau$. Then 
 Lemma~\ref{lem:sigma-bound}
gives 
$\lambda_i\le  \log_2 \frac{\bar\Delta}{\Delta\itr{\tau+1}}$.
Therefore the total increase in the $\Psi$ value between iterations $1$
and $\tau$ is bounded by 
\[
\sum_{i\in V-\sink}(3d_i+7)\lambda_i\le 
\sum_{i\in V-\sink}(3d_i+7)\log_2
\frac{\bar\Delta}{\Delta\itr{\tau+1}}\le (6m+7n) \log_2
\frac{\bar\Delta}{\Delta\itr{\tau+1}}\le 13m \log_2
\frac{\bar\Delta}{\Delta\itr{\tau+1}}
\] 
This bounds the number of shrinking iterations (recall the assumption
$n\le m$). Between two subsequent
shrinking iterations, all phases are expanding or neutral. Every expanding 
iteration increases $T$, and every neutral iteration is followed by a
shrinking or an
expanding iteration. Therefore the total number of iterations between
two subsequent shrinking iterations is $\le 2n$, giving an overall
bound
\[
26mn\log_2
\frac{\bar\Delta}{\Delta\itr{\tau+1}}
\] 
on the number of iterations.
\end{proof}

The proof of Lemma~\ref{lem:sigma-bound} is based on the following
simple claim.

\begin{claim}\label{cl:e-inc}
Let $\beta'_i$ denote the new value of $\beta_i$
after performing 
the subroutine \textsc{Elementary Step}$(T,T_0,f,\mu,\Delta)$, that computes
the value $\alpha$.
For every node $i\in V-\sink$, we have
\[
\beta'_i \le \alpha^2\max\left\{\beta_i,\ind{i}\right\}.
\]
\end{claim}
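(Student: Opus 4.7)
The plan is to split into the cases $i\in T$ and $i\in V\setminus T$, since these behave very differently under the updates of \textsc{Elementary step}. The core observation is that $\beta'_i = e'_i/(\Delta'\mu'_i)$, and one must control both the change in the numerator and the change in the denominator.

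For $i\in T$, the updates give $\mu'_i=\alpha\mu_i$ and $\Delta'=\Delta/\alpha$, so the denominator $\Delta'\mu'_i=\Delta\mu_i$ is preserved. Combined with Lemma~\ref{lem:elementary}(iii), which gives $e'_i\le e_i$, we obtain $\beta'_i\le\beta_i$, and the claim follows trivially from $\alpha>1$.

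For $i\in V\setminus T$, the relevant quantity is $\Delta'\mu'_i=\Delta\mu_i/\alpha$, so $\beta'_i=\alpha e'_i/(\Delta\mu_i)$. Using the partition $F_1(i),F_2(i),F_3(i),F_4(i)$ from (\ref{def:f-r}), the flows on $F_1(i)$ and $F_3(i)$ are divided by $\alpha$ while those on $F_2(i)$ and $F_4(i)$ are untouched, so a direct computation gives
\[
e'_i = e_i+\bigl(r_3(i)-r_1(i)\bigr)\bigl(1-1/\alpha\bigr).
\]
Dropping the nonnegative term $r_1(i)$ yields $e'_i\le e_i+r_3(i)(1-1/\alpha)$.

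The key step is to bound $r_3(i)$. Every arc $ij\in F_3(i)$ is either a non-tight arc in $F^\mu[V\setminus T]$ (hence $f^\mu_{ij}\le \Delta$ by the $\Delta$-feasibility of $(f,\mu)$) or an arc in $E[V\setminus T,T]$ (hence $f^\mu_{ij}\le\Delta$ by the fourth hypothesis of Lemma~\ref{lem:elementary}). Since $|F_3(i)|\le d_i$, we obtain $r_3(i)\le d_i\Delta\mu_i$. Substituting yields
\[
\beta'_i \le \alpha\beta_i+(\alpha-1)d_i \le (2\alpha-1)\max\{\beta_i,d_i\}\le \alpha^2\max\{\beta_i,d_i\},
\]
where the last inequality uses $(\alpha-1)^2\ge 0$.

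The main (minor) obstacle is recognizing that the two types of arcs in $F_3(i)$ need to be bounded by two different hypotheses — $\Delta$-feasibility for the intra-$(V\setminus T)$ non-tight arcs, and the explicit assumption on arcs in $E[V\setminus T,T]$ — but both yield the same bound $f^\mu_{ij}\le\Delta$ and the argument collapses into the uniform estimate $r_3(i)\le d_i\Delta\mu_i$.
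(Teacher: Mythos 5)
Your proof is correct and takes essentially the same approach as the paper: split on $i\in T$ versus $i\in V\setminus T$, use Lemma~\ref{lem:elementary}(iii) with $\Delta'\mu'_i=\Delta\mu_i$ in the first case, and in the second case control $e'_i$ via the flow decrease on $F_3(i)$, bound $r_3(i)\le d_i\Delta\mu_i$, and finish with $2\alpha-1\le\alpha^2$. You only differ in spelling out the $r_3(i)$ bound explicitly by the two sources of arcs in $F_3(i)$, whereas the paper simply asserts $f_{ij}\le\Delta\mu_i$ without elaboration -- the substance is identical.
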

\begin{proof}
Let $\Delta$ and $\Delta'=\Delta/\alpha$ denote the scaling factor
before and after performing the subroutine \textsc{Elementary
  Step}$(T,T_0,f,\mu,\Delta)$.
If $i\in T$, then $e'_i\le e_i$ by Lemma~\ref{lem:elementary}(iii) and $\Delta'\mu'_i=\Delta\mu_i$, and hence $\beta'_i\le \beta_i$, implying the claim.
Assume therefore that $i\in V\setminus T$.
We have
$f'\le f$, and the flow changes on arcs entering $i$ may only decrease
$e_i$. Recall that $F_3(i)$ denotes the set of outgoing arcs
 $ij$ where $f'_{ij}<f_{ij}$. Note that $f_{ij}\le \Delta\mu_i$ on
 every such arc.
We get the upper bound
\[
e'_i\le e_i+\sum_{j:ij\in F_3(i)} (1-1/\alpha)f_{ij}\le
e_i+(1-1/\alpha)|F_3(i)|\Delta\mu_i\le e_i+(\alpha-1)|F_3(i)|\Delta\mu_i.
\]
In the last inequality, we used $1-1/\alpha\le \alpha-1$, which is
true for every $\alpha>0$.
Using further that $\Delta'\mu'_i=\Delta\mu_i/\alpha$, we get
\begin{align*}
\beta'_i=
\frac{e'_i}{\Delta'\mu'_i}\le
\frac{\alpha(e_i+(\alpha-1)|F_3(i)|\Delta\mu_i)}{\Delta\mu_i}=
\alpha\frac{e_i}{\Delta\mu_i}+(\alpha-1)|F_3(i)|\le \\
\alpha\beta_i+(\alpha-1)\ind{i}\le (2\alpha-1)\max\{\beta_i,\ind{i}\}\le \alpha^2\max\{\beta_i,\ind{i}\},
\end{align*}
completing the proof.
\end{proof}

\begin{proof}[Proof of Lemma~\ref{lem:sigma-bound}]
Let $\tau_1< \tau_2< \ldots <
\tau_\lambda\le \tau$
denote the iterations when $i$ enters $T_0$ up to iteration $\tau$.
 This means that
 $\beta_{i}\itr{\tau_\ell+1}=\ketd{i}$ for
 $1\le \ell\le \lambda$.

For $1\le\ell\le\lambda$, let us define $\tau'_\ell$ to be the largest
value $\tau'_\ell\le \tau_\ell$ such that
$\beta_{i}\itr{\tau'_\ell}< \egyd{i}$. 
Note that these values must exist and satisfy
$\tau_{\ell-1}<\tau'_\ell\le \tau_\ell$ for $\ell>1$.
Indeed, for $\ell=1$, we assumed that at the beginning of the
algorithm
$\beta_{i}\itr{1}<\egyd{i}$. For $\ell>1$, note that $i$
must leave $T_0$ in some iteration $\theta$ between $\tau_{\ell-1}$ and
$\tau_{\ell}$, and this can happen only if $\beta_i^{(\theta)}<\egyd{i}$.

In iteration $\tau'_\ell$, we have $i\notin T_0$, since once the
excess $e_i$ drops below 
$\egyd{i}\Delta\mu_i$, the node $i$ is immediately removed from $T_0$.
By definition, $i$ will be added to $T_0$ in iteration $\tau_\ell$.

The $e_i$ values may change in two ways between iterations
$\tau'_\ell$ and $\tau_\ell$: either during
a path augmentation or in the subroutine \estep. We claim that no path
augmentation changes $e_i$ in the iterations  $\tau'_\ell\le \theta\le
\tau_\ell$.
Indeed, the only values that change are at the starting point $p$ and
endpoints $q$ of
the tight path $P$. We cannot have $i=p$ as $i\notin T_0$ during these
iterations. Assume now $i=q$ is the endpoint; therefore
$e_i\itr{\theta}<\negd{i}\Delta\itr{\theta}\mu_i\itr{\theta}$.
This clearly cannot be the case for $\tau'_\ell<\theta\le \tau_\ell$
by the maximal
choice of $\tau'_\ell$. Let us consider the case $\theta=\tau'_\ell$. The path augmentation
terminating in $i=q$ increases $e_i\itr{\tau'_\ell}$ by $\Delta\itr{\tau'_\ell}\mu_i\itr{\tau'_\ell}$.
However, we had $e_i\itr{\tau'_\ell}<\negd{i}\Delta\itr{\tau'_\ell}\mu_i\itr{\tau'_\ell}$, and
therefore
\[
 e_i\itr{\tau'_\ell+1}=e_i\itr{\tau'_\ell}+\Delta\itr{\tau'_\ell}\mu_i\itr{\tau'_\ell}<\egyd{i}\Delta\itr{\tau'_\ell+1}\mu_i\itr{\tau'_\ell+1},
\]
again a contradiction to the choice of $\tau'_\ell$. (Note that if a path augmentation
is done in iteration $\tau'_\ell$, then the values of $\Delta$ and $\mu$ do
not change).

Hence all changes in the value of $e_i$ are due to modifications in
\estep. 
Consequently,
\begin{equation}\label{eq:four}
4=\frac{\ketd{i}}{\egyd{i}}< \frac{\beta\itr{\tau_\ell+1}}{\max\{\beta\itr{\tau'_\ell},\ind{i}\}}\le
\frac{\beta\itr{\tau'_\ell+1}}{\max\{\beta\itr{\tau'_\ell},\ind{i}\}}\prod_{\theta\in[\tau'_\ell+1,\tau_\ell]}\frac{\beta\itr{\theta+1}}{\beta\itr{\theta}}
\end{equation}
For $\theta\in[\tau'_\ell+1,\tau_\ell]$, we assumed $\beta\itr{\theta}>\ind{i}$, and hence Claim~\ref{cl:e-inc}
gives that $\frac{\beta\itr{\theta+1}}{\beta\itr{\theta}}\le
\left(\alpha\itr{\theta}\right)^2$. The same claim bounds the first
term by
 $\le \left(\alpha\itr{\tau'_\ell}\right)^2$. Hence we get
\[
4\le \left(\prod_{\theta\in [\tau'_\ell,\tau_\ell]} \alpha\itr{\theta}\right)^2.
\]
Adding the logarithms of these inequalities  for all $\ell=1,\ldots,\lambda$, we obtain
\[
\lambda\le \sum_{\theta\in [1,\tau]} \log_2\alpha\itr{\theta}=\log_2 \frac{\bar\Delta}{\Delta\itr{\tau+1}},
\]
completing the proof.
\end{proof}

\subsection{The termination of the algorithm}\label{sec:final-opt}
The algorithm either terminates in \estep{} or by the final subroutine 
\textsc{Tight-flow}$(V,\mu)$. Optimality for the first case was
already proved in Lemma~\ref{lem:elementary}(i). The next claim
addresses the second case.
\begin{lemma}\label{lem:final-opt}
The final $f'$ and $\mu$ returned by the subroutine
\textsc{Tight-flow}$(V,\mu)$
are a primal and a dual optimal  solution to (\ref{primal}) and (\ref{dual}), respectively.
\end{lemma}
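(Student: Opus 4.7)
The plan is to reduce the statement to Theorem~\ref{thm:tight-flow}(iii): I will show that when the while loop terminates, the maximum-flow subproblem built inside \textsc{Tight-Flow}$(V,\mu)$ is feasible and the returned $f'$ already has small enough surplus to force $Ex(f')=0$.

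For feasibility, note that at termination the current pair $(f,\mu)$ is still $\Delta$-feasible. Lemma~\ref{lem:make-conservative} produces a generalized flow $\tilde f$, obtained from $f$ by zeroing out the non-tight arcs, for which $\mu$ is a conservative labeling and
\[
Ex^\mu(\tilde f) \le Ex^\mu(f) + |F^\mu|\Delta.
\]
Since the support of $\tilde f$ lies in the tight arcs $\tilde E$ and $e_i(\tilde f)\ge 0$ for every $i\in V-\sink$, translating $\tilde f$ into the auxiliary $s$-$\sink$ network (set $x_{ij}:=\tilde f_{ij}/\mu_i$ for $ij\in\tilde E$ and $x_{si}:=-(e_i^\mu(\tilde f)+b_i^\mu)$) yields a feasible solution with $x_{si}\le -b_i^\mu=u_{si}$. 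Hence the subroutine does not error out.

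Next I bound $Ex^\mu(f')$. Because the $s$-$\sink$ maximum flow value differs from $-\sum_{i\in V-\sink}e^\mu_i(f')$ only by the constant $-\sum_i b_i^\mu$, the flow $f'$ returned by \textsc{Tight-Flow}$(V,\mu)$ minimizes the relabeled surplus over all conservative flows supported on $\tilde E$. Therefore
\[
Ex^\mu(f')\le Ex^\mu(\tilde f)\le Ex^\mu(f)+|F^\mu|\Delta.
\]
The algorithm maintains the invariant $e_i^\mu(f)\le\ketd{i}\Delta$ for every $i\in V-\sink$: Lemma~\ref{lem:init} establishes the stronger bound $e_i^\mu\le\egyd{i}\Delta$ initially; path augmentations only increase $e_q^\mu$ by $\Delta$ starting from $e_q^\mu<\negd{q}\Delta$ and decrease $e_p^\mu$; and Lemma~\ref{lem:elementary}(ii)-(iii) preserves the bound in every \textsc{Elementary Step}. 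Summing over nodes and using $\sum_i d_i=2m$ together with $n\le m$ gives $Ex^\mu(f)\le 16m\Delta$, and with $|F^\mu|\le m$ we conclude
\[
Ex^\mu(f')\le 17m\Delta.
\]

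The loop terminates only once $\Delta<\threshold=1/(17m\bar B^3)$, so the above bound yields $Ex^\mu(f')<1/\bar B^3$. Applying Theorem~\ref{thm:tight-flow}(iii) forces $e_i(f')=0$ for every $i\in V-\sink$. By construction of \textsc{Tight-Flow}, the support of $f'$ lies in the tight arc set, so $\mu$ is a finite conservative labeling for $f'$, and all excesses vanish; Theorem~\ref{thm:genflow-opt}(i) then certifies that $f'$ is optimal for (\ref{primal}) and, by primal-dual slackness, $\mu$ is optimal for (\ref{dual}). The only slightly delicate step is the identification of the max-flow objective with the relabeled surplus that licenses the inequality $Ex^\mu(f')\le Ex^\mu(\tilde f)$; everything else is a direct accounting against invariants already proved.
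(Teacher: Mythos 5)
Your proof is correct and follows essentially the same route as the paper's: reduce to Theorem~\ref{thm:tight-flow}(iii), pass from $f$ to $\tilde f$ via Lemma~\ref{lem:make-conservative} to certify feasibility of the max-flow problem, equate the max-flow value with the negated relabeled surplus to get $Ex^\mu(f')\le Ex^\mu(\tilde f)$, and close with the termination threshold $\Delta<\threshold$. The only difference is cosmetic — you spell out why the invariant $e_i^\mu\le\ketd{i}\Delta$ persists and round $(8m+8n)\Delta$ up to $16m\Delta$, whereas the paper keeps the $(9m+8n)\Delta$ bound; both yield $Ex^\mu(f')<1/\bar B^3$.
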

\begin{proof}
We show that the flow problem in \textsc{Tight-flow}$(V,\mu)$ is feasible and $Ex^\mu(f')<1/\bar B^3$. Then optimality follows by Theorem~\ref{thm:tight-flow}(iii).
At the termination of the While iterations of the
algorithm \textsc{Continuous Scaling}, we have 
\[
Ex^\mu(f)=\sum_{i\in V-\sink}e_i^\mu\le 4\Delta\sum_{i\in
  V}\egyd{i}=(8m+8n)\Delta.
\]
Let us define $\tilde f$ by $\tilde f_{ij}=0$ if $ij\in F^\mu$ and
$\tilde f_{ij}=f_{ij}$ otherwise. 
By Lemma~\ref{lem:make-conservative},
\[
Ex^\mu(\tilde f)<Ex^\mu(f)+|F^\mu|\Delta\le (9m+8n)\Delta< 1/\bar B^3,
\]
since $\Delta<\threshold$ at the termination.
The proof is complete by verifying the feasibility of the flow problem and showing that $Ex^\mu(f')\le Ex^\mu(\tilde f)$.

Let us define the feasible solution $\tilde x$ to the flow problem in
\textsc{Tight Flow} as follows. We use the notation introduced in the
description of the subroutine in Section~\ref{sec:tight-flow}.
Let $\tilde x_{ij}:=\tilde f^\mu_{ij}$ for $ij\in E$. Further, for $i\in
V-\sink$, let us set 
$\tilde x_{si}:=\sum_{j:ij\in E}\tilde f^\mu_{ij}-\sum_{j:ji\in  E} \tilde f^\mu_{ji}$. The conservativeness of $\tilde f$ implies that 
$\tilde x_{si}\le -b^\mu_i=u_{si}$. Therefore
$\tilde x$ is a feasible solution to the flow problem.
The value of this flow $\tilde x$ (i.e. the sum of the flow on the
arcs leaving $s$) is
\[
\sum_{i\in \tilde V-\sink}\tilde x_{si}=-\sum_{i\in 
  V-\sink}(b_i^\mu+e_i^\mu(\tilde f))=-Ex^\mu(\tilde f)-\sum_{i\in 
  V-\sink}b_i^\mu.
\]
Similarly, the value of the flow $x$ found by \textsc{Tight Flow}   is $-Ex^\mu(f')-\sum_{i\in 
  V-\sink}b_i^\mu$. Since $x$ is maximal, it follows that 
$Ex^\mu(f')\le Ex^\mu(\tilde f)$.
\end{proof}

\subsection{Running time analysis}

\begin{proof}[Proof of Theorem~\ref{thm:weak-running}]
The starting value of the scaling factor is $\bar\Delta\le \Deltabound$ by Lemma~\ref{lem:init}, and
we terminate once $\Delta\itr{\tau+1}<\threshold$. Therefore
$\log\frac{\bar \Delta}{\Delta\itr{\tau+1}}\in O(\log \bar B)$ (we may assume $\log \bar B$ is larger than $m$).
According to
Lemma~\ref{lem:iteration-bound}, the number of iterations of the
algorithm is $O(mn\log \bar B)$, out of them $O(m\log \bar B)$
shrinking ones.
We have to execute two maximum flow computations, that can be done in
$O(nm)$ time using the recent algorithm by Orlin \cite{Orlin13}. The
initial cycle canceling subroutine can be executed in time
$O(m^2n\log^2 n)$, see Radzik
\cite{Radzik93}.
The proof is complete by showing that the part of the algorithm
between two shrinking iterations can be implemented in $O(m+n\log n)$ time.
 
We implement all these iterations together via a Dijkstra-type
algorithm, using the Fibonacci-heap data structure \cite{Fredman87},
see also \cite[Chapter 4.7]{amo}. The precise details are given in
Section~\ref{sec:rounding}, see Figure~\ref{code:block}; here we
outline the main ideas only. 
Each label is modified only once, at the beginning of the
subsequent shrinking iteration; for every $i$, it is sufficient to
record the value of $\alpha$ at the moment when $i$ enters $T$. We
have to modify the $f_{ij}$ values accordingly.
We maintain a heap with elements $i\in V\setminus T$, with five keys associated
to each of them. The main key for $i\in V\setminus T$  corresponds
to the minimum of the  $1/\gamma_{ji}^\mu$'s for $j\in T$, and of
$\delta_i$.
The four auxiliary keys store the flow values $r_1(i),\ldots,r_4(i)$, as in the definition 
(\ref{eq:delta-i}) of $\delta_i$. 
We choose the next $i$ who enters $T$ with the minimal main key. If
the minimal key corresponds to the $\delta_i$ value, then $i$ enters
both $T$ and $T_0$; otherwise, it enters only $T$.
We remove $i$ from the heap, and update the keys on the adjacent
nodes. We maintain another heap structure on $T$ to identify events
when for a node  $i\in T_0$, $e_i^\mu<\egyd{i}\Delta$ happens, or when
a node in $T\setminus T_0$ enters $N$.

Overall, these modifications entail $O(m)$ key modifications only;
the keys can be initialized in total time $O(m)$. We
therefore obtain the running time $O(m+n\log n)$ as for Dijkstra's algorithm.
\end{proof}

\section{The strongly polynomial algorithm}\label{sec:strong}
The while loop of the algorithm \textsc{Enhanced Continuous Scaling}  proceeds very similarly to \textsc{Continuous Scaling}, with the addition of the special subroutine \textsc{Filtration}, described in Section~\ref{sec:filtration}.
However, the termination criterion is quite different.
As discussed in Section~\ref{sec:overview}, the goal is to
find a node $i\in V-t$ with $\frac{|b_i^\mu|}{\Delta}\ge 32mn$. There must be an abundant arc incident to such a node that we can contract and 
continue the algorithm in the smaller graph.  Section~\ref{sec:abundant} describes the abundant arcs and the contraction operation.

Let us now give some motivation for the algorithm; we focus on the sequence of iterations leading to the first abundant arc.
Consider the set 
\[
D:=\left\{i\in V-t: \frac{|b_i^\mu|}{\Delta}\ge \frac{1}{n}\right\}.
\]
Our aim is to  guarantee that most iterations when $\Delta$ is multiplied by $\alpha$ will multiply $\frac{|b_i^\mu|}{\Delta}$ by $\alpha$ for some $i\in D$. 
This will ensure that  $\frac{|b_i^\mu|}{\Delta}\ge  32mn$ happens within $O(nm\log n)$ number of steps.
Note that in the subroutine \estep$(T,f,\mu,\Delta)$, the $\frac{|b_i^\mu|}{\Delta}$ ratio is
multiplied by $\alpha$ for all nodes $i\in V\setminus T$ and remains
unchanged for $i\in T$. 

Therefore we modify the while loop of \textsc{Continuous Scaling} as follows.
If $(V\setminus T)\cap D\neq\emptyset$,  \estep$(T,f,\mu,\Delta)$ is performed identically.
If $(V\setminus T)\cap D= \emptyset$, then before \estep$(T,f,\mu,\Delta)$, the special
subroutine \textsc{Filtration}$(V\setminus T,f,\mu)$ is executed, performing the following changes.

The value of $f$ is set to 0 for every arc entering $T$, and
$f_{ij}$ is left unchanged for $i\in T$. The flow value on arcs  inside $E[V\setminus
T]$ is replaced by an entirely new flow $f'$ computed by \textsc{Tight Flow}$(V\setminus T,\mu)$.

An important part of the analysis is Theorem~\ref{thm:tight-flow}(ii), asserting that $e^\mu_i(f')\le n\max_{j\in (V\setminus T)-t}|b_j^\mu|$.
This will imply that either the set $D$
must be extended in the iteration following \textsc{Filtration}$(V\setminus T,f,\mu)$, or
there must be a shrinking one among the next two
iterations (Lemma~\ref{lem:fluvial}(ii)).
Note that once a node enters $D$, it stays there until the next contraction.

\subsection{Abundant arcs and contractions}\label{sec:abundant}
Given a $\Delta$-feasible pair $(f,\mu)$, we say that an arc $pq\in E$
is {\em abundant}, if  $f^\mu_{pq}\ge\arcbound$. The importance of abundant arcs is that they must be tight in all dual optimal
solutions. This is a corollary of the following theorem.
\begin{theorem}\label{thm:close-opt}
Let $(f,\mu)$ be a $\Delta$-feasible pair. Then there exists an optimal solution
$f^*$ such that 
\[
||f^\mu-{f^*}^\mu||_\infty \le Ex^\mu(f)+(|F^\mu|+1)\Delta.
\]
\end{theorem}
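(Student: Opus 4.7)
My plan is to reduce the general $\Delta$-feasible case to the case of a conservative labeling, then argue by generalized flow decomposition, and combine the two parts via the triangle inequality.

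Let $\tilde f$ be the flow constructed in Lemma~\ref{lem:make-conservative}: $\tilde f_{ij}=0$ for $ij\in F^\mu$ and $\tilde f_{ij}=f_{ij}$ otherwise. $\Delta$-conservativity forces $f^\mu_{ij}\le \Delta$ on every non-tight arc, while $f$ and $\tilde f$ coincide elsewhere, so $\|f^\mu-\tilde f^\mu\|_\infty\le \Delta$. The lemma also gives $Ex^\mu(\tilde f)\le Ex^\mu(f)+|F^\mu|\Delta$ and confirms that $\mu$ is conservative for $\tilde f$. It therefore suffices to exhibit an optimal $f^*$ with $\|\tilde f^\mu-f^{*\mu}\|_\infty\le Ex^\mu(\tilde f)$, since the triangle inequality then yields $\|f^\mu-f^{*\mu}\|_\infty\le \Delta+Ex^\mu(\tilde f)\le Ex^\mu(f)+(|F^\mu|+1)\Delta$.

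Pick any optimal $f^*$ (which exists by (\ref{cond:init}) and (\ref{cond:bounded})) and view $d:=f^{*\mu}-\tilde f^\mu$ as a signed pseudoflow on the residual graph $E_{\tilde f}$. The decisive structural point is that every cycle of $E_{\tilde f}$ has relabeled gain at most $1$: forward arcs $ij\in E$ satisfy $\gamma^\mu_{ij}\le 1$ by feasibility of $\mu$ in (\ref{dual}), while a reverse arc $ji$ appears only when $\tilde f_{ij}>0$, and then conservativity of $\mu$ for $\tilde f$ forces $\gamma^\mu_{ij}=1$ and hence $\gamma^\mu_{ji}=1$. The boundary of $d$ has gain-adjusted net outflow $e_i^\mu(\tilde f)\ge 0$ at each $i\ne t$ and net inflow at $t$, so the generalized flow decomposition theorem, applied in a network with no flow-generating cycles, writes $d$ as a combination of paths from excess nodes to $t$, unit-gain cycles, and flow-absorbing cycles anchored at excess nodes.

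The remaining task is to reduce this decomposition to paths only, staying within the optimal face. Unit-gain cycles are cancelled by standard cycle cancellation (which preserves every excess and hence optimality, and is pushed until some arc saturates), and flow-absorbing cycles are removed analogously, with the slack they release at their base node rerouted to $t$ along an augmenting tight path. Once $d$ is a sum of paths $P$ from excess nodes to $t$ with starting relabeled values $\delta_P$, we have $\sum_P\delta_P=Ex^\mu(\tilde f)$, and since $\gamma^\mu\le 1$ on every arc of $E_{\tilde f}$, each path contributes at most $\delta_P$ in magnitude on any single arc. The positive and the negative contributions to $d$ on any fixed arc are therefore each bounded by $Ex^\mu(\tilde f)$, giving $\|d\|_\infty\le Ex^\mu(\tilde f)$ and closing the argument.

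The main obstacle I anticipate is the cleanup of flow-absorbing cycles: naively reducing such a cycle injects fresh excess at its base and threatens optimality, so one must simultaneously run a companion tight-path augmentation to $t$. Executing this cleanly probably requires selecting $f^*$ as the minimiser of a carefully chosen lexicographic distance to $\tilde f^\mu$ within the optimal polytope, or invoking a more refined generalized flow decomposition that handles absorbing cycles natively. A direct Hoffman-style bound on the optimality polytope is another option, but matching the exact combinatorial constant $Ex^\mu(f)+(|F^\mu|+1)\Delta$ along that route looks no easier.
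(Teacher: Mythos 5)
Your plan tracks the paper's proof closely: reduce to a conservative pair via Lemma~\ref{lem:make-conservative}, compare $\tilde f$ with an optimal $f^*$, decompose the difference, and bound each arc by $Ex^\mu(\tilde f)$. The one place your route diverges is the treatment of cycles, and that is also where the proposal has a gap. You take an arbitrary optimal $f^*$ and plan to cancel unit-gain \emph{and} flow-absorbing cycles, flagging the latter as a genuine obstacle that might force a ``companion tight-path augmentation'' or a lexicographically minimal choice of $f^*$. The paper sidesteps all of this by picking $f^*$ to minimize $\|\tilde f - f^*\|_1$, after which a short argument shows the support $H$ of the difference flow is acyclic, so the decomposition is into paths only from the start.

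The gap in your argument is actually smaller than you anticipate, because flow-absorbing cycles cannot appear at all, regardless of how $f^*$ is chosen. You already observed that $H\subseteq E_{\tilde f}$ with $\mu$ conservative for $\tilde f$ forces $\gamma^\mu_e\le 1$ on every $e\in H$. But one can equally observe that $\overleftarrow H\subseteq E_{f^*}$, and any optimal $f^*$ admits a conservative labeling $\mu^*$, so $\gamma^{\mu^*}_e\le 1$ on every $e\in\overleftarrow H$ as well. For any cycle $C\subseteq H$ this gives both $\gamma(C)=\gamma^\mu(C)\le 1$ and $\gamma(\overleftarrow C)=\gamma^{\mu^*}(\overleftarrow C)\le 1$, forcing $\gamma(C)=1$. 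Hence the only cycles that can occur in the decomposition are unit-gain cycles (Type~IV), and the flow-absorbing cases you worry about (Types~III and~V) are automatically excluded. Your unit-gain cancellation argument is fine as stated, since cancelling such a cycle leaves every excess, and in particular $e_\sink$, unchanged and therefore preserves optimality without needing to mention $\mu^*$ at all. With that observation your route closes; alternatively, the paper's choice of a $\|\cdot\|_1$-minimizing $f^*$ uses the same two-sided conservativeness inequalities to rule out unit-gain cycles too, which avoids even the single cancellation step. Either way, be explicit that the decomposition contains no cycles of non-unit gain; as written, the proposal leaves that as an open worry and proposes a fix that is heavier than necessary.
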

The standard proof using flow decompositions is given in
the Appendix; it can also be derived from Lemma 5 in
Radzik \cite{Radzik04}.
For the flow $f$ in an iteration with scaling factor $\Delta$, we
have $Ex^\mu(f)\le \sum_{i\in V-\sink}\ketd{i}\Delta< (8m+8n-8)\Delta\le (16m-8)\Delta$. Further,
$|F^\mu|\le m$. This gives the following corollary; the last part follows by primal-dual slackness conditions.
\begin{corollary}\label{cor:abundant}
Let $(f,\mu)$ be the $\Delta$-feasible pair during the algorithm. If for an arc $pq\in E$, $f^\mu_{pq}\ge \arcbound$, then $f^*_{pq}>0$ for some optimal solution $f^*$ to (\ref{primal}). Consequently, $\gamma_{pq}\mu^*_p=\mu^*_q$ for every optimal solution $\mu^*$ to (\ref{dual}).
\end{corollary}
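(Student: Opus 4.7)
The plan is to apply Theorem~\ref{thm:close-opt} directly, plug in the bounds quoted in the paragraph above the corollary, and then invoke complementary slackness to get the statement on dual optimal solutions.

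First I would verify that the hypotheses invoked in the paragraph preceding the corollary actually produce the bound $17m\Delta$. Since $(f,\mu)$ is the $\Delta$-feasible pair maintained during the algorithm, every $i\in V-\sink$ satisfies $e_i^\mu\le 4(d_i+2)\Delta$, and summing over $i$ (using $\sum_i d_i = 2m$ and $n\le m$) gives $Ex^\mu(f)\le (16m-8)\Delta$. Also $F^\mu\subseteq E$, so $|F^\mu|\le m$. Substituting into Theorem~\ref{thm:close-opt} yields an optimal $f^*$ with
\[
\|f^\mu-{f^*}^\mu\|_\infty \le Ex^\mu(f)+(|F^\mu|+1)\Delta \le (16m-8)\Delta + (m+1)\Delta < 17m\Delta.
\]

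Second, if the arc $pq$ satisfies $f^\mu_{pq}\ge 17m\Delta$, then from the coordinate-wise bound above we get ${f^*}^\mu_{pq} > f^\mu_{pq} - 17m\Delta \ge 0$. Since $\mu_p>0$, this is equivalent to $f^*_{pq}>0$, proving the first claim.

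Finally, for the statement about dual optimal solutions, I would appeal to LP complementary slackness between \eqref{primal} and \eqref{dual}: the primal variable $f_{pq}$ is complementary to the dual constraint $\gamma_{pq}\mu_p\le\mu_q$. Having exhibited an optimal primal solution $f^*$ with $f^*_{pq}>0$, complementary slackness with any optimal dual $\mu^*$ forces the corresponding dual constraint to be tight, i.e.\ $\gamma_{pq}\mu^*_p=\mu^*_q$. There is no real obstacle here: all the work was in Theorem~\ref{thm:close-opt} (whose flow-decomposition proof is deferred to the appendix), and the corollary is just a matter of arithmetic plus standard LP duality.
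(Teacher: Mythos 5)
Your proof is correct and follows exactly the same route the paper takes: plug the bounds $Ex^\mu(f)\le(16m-8)\Delta$ and $|F^\mu|\le m$ into Theorem~\ref{thm:close-opt} to get $\|f^\mu-{f^*}^\mu\|_\infty<17m\Delta$, conclude ${f^*}^\mu_{pq}>0$ for the abundant arc $pq$, and invoke complementary slackness for the dual statement. Nothing is missing; this is the paper's own one-paragraph derivation preceding the corollary.
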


Once  an
abundant arc $pq$ is identified in the \textsc{Enhanced Continuous Scaling} algorithm,
it is possible to reduce the problem by contracting $pq$.
Consider the problem instance $(V,E,\sink,b,\gamma)$.
The {\em contraction of the arc $pq$} returns a problem instance
$(V',E',\sink',b',\gamma')$ with $\sink':=\sink$, as follows.

{\bf Case I: $p\neq \sink$}.
Let $V'=V\setminus \{p\}$, and add an arc
$ij\in E'$ if $ij\in E$ and $i,j\neq p$. For every arc $ip\in E$, add
an arc $iq\in E'$, and for every arc $pi\in E$, $i\neq q$, add an arc $qi\in E'$. Set
the gain factors as $\gamma'_{ij}:=\gamma_{ij}$ if $i,j\neq p$,
$\gamma'_{iq}:=\gamma_{ip}\gamma_{pq}$ and
$\gamma'_{qi}:=\gamma_{pi}/\gamma_{pq}$. Let us set
$b'_i:=b_i$ if $i\neq q$, and
$b'_q:=b_q+\gamma_{pq}b_p$. 

{\bf Case II: $p=\sink$}.
Let $V'=V\setminus \{q\}$, and add an arc
$ij\in E'$ if $ij\in E$ and $i,j\neq q$. For every arc $iq\in E$,
$i\neq p$, add
an arc $ip\in E'$, and for every arc $qi\in E$, add an arc $pi\in E'$. Set
the gain factors as $\gamma'_{ij}:=\gamma_{ij}$ if $i,j\neq p$,
$\gamma'_{ip}:=\gamma_{iq}/\gamma_{pq}$ and
$\gamma'_{pi}:=\gamma_{qi}\gamma_{pq}$. Let us set
$b'_i:=b_i$ if $i\neq p$, and
$b'_p:=b_p+b_q/\gamma_{pq}$. 

In both cases, if parallel arcs are created,
keep only one that maximizes the $\gamma'$ value. 
Let $s:=q$ in the first and $s:=p$ in the second case.
If a loop incident to $s$ is created (corresponding to a $qp$ arc),
remove it.

Assume further we are given a generalized flow $f$ and a labeling
$\mu$ with $\gamma_{pq}^\mu=1$ in
the instance. We define the image labels $\mu'$, by simply setting
$\mu'_i=\mu_i$ for all $i\in V'$ in both cases.
Note that we will have ${b'_s}^{\mu'}=b_p^\mu+b_q^\mu$ in both cases.

As for the generalized flow, let $f'_{ij}:=f_{ij}$ whenever $i,j\neq
s$. For every $i\in V'\setminus \{s\}$, we let
$f'_{is}:=f_{ip}+f_{iq}$. Further, in Case I, we let
$f'_{si}:=\gamma_{pq}f_{pi} +f_{qi}$, whereas in Case II, we let 
$f'_{si}:=f_{pi} +f_{qi}/\gamma_{pq}$. If one of these arcs is not in
$E$, then we substitute the corresponding value by 0. 
Recall that in the construction, we keep the larger gain factor 
from two parallel incoming or outgoing arcs.
%

The above transformation of an instance, generalized flow and labels will be
executed by the subroutine \textsc{Contract}$(pq)$. Note that if the
original instance satisfies (\ref{cond:root}), (\ref{cond:init}), and (\ref{cond:bounded}),
then these also hold for the contracted instance; the contracted image of
the initial feasible solution $\bar f$ is feasible for the contracted instance.

Let us also describe the reverse operation,
\textsc{Reverse}$(pq)$, that transforms a dual solution on the
contracted instance to a dual solution in the original one.
 Assume $\mu'$ is a dual solution in
the graph obtained by the contraction of $pq$. 
Let us set $\mu_i:=\mu'_i$ for all $i\in V-s$. In the
first case ($p\neq \sink$, $s=q$), let us set $\mu_p:=
\mu_q'/\gamma_{pq}$,
whereas in the second case ($p=\sink$, $s=p$), let us set $\mu_q:=\mu'_p\gamma_{pq}=\gamma_{pq}$.

\subsection{The Filtration subroutine}\label{sec:filtration}
A typical iteration of the \textsc{Enhanced Continuous Scaling} algorithm (Figure~\ref{code:strong})  will be the same as in  \textsc{Continuous Scaling}, with adding one additional subroutine, \textsc{Filtration}$(V\setminus T,f,\mu)$ before performing \estep $(T,T_0,f,\mu,\Delta)$.
 This subroutine is executed if
$|b^\mu_i|<\Delta/(16^\ka n)$ holds for all $i\in  (V\setminus
T)-\sink$, where $\ka$ is the number of arcs contracted so far,
initially $\ka=0$.

\textsc{Filtration}$(V\setminus T,f,\mu)$  (Figure~\ref{code:filtration}) performs the  subroutine
\textsc{Tight Flow}$(V\setminus T,\mu)$, as described in
Section~\ref{sec:tight-flow}. This replaces $f$ by an entirely new
flow $f'$ on the arcs in $E[V\setminus T]$. We further set $f_{ij}=0$
on all arcs entering $T$, and keep the original $f$ value on all other
arcs (that is, arcs in $E[T]\cup E[T,V\setminus T]$).

\begin{figure}[htb]
\begin{center}
\fbox{\parbox{\textwidth}{
\begin{tabbing}
xxxxx \= xxx \= xxx \= xxx \= xxxxxxxxxxx \= \kill
\> \textbf{Subroutine} \textsc{Filtration}$(V\setminus T,f,\mu)$\\
\> $f'\leftarrow$ \textsc{Tight Flow}$(V\setminus T,\mu)$ ;\\
\> \textbf{for} $ij\in E$ \textbf{do} \\
\> \> \textbf{if} $ij\in E[V\setminus T]$ \textbf{then}
$f_{ij}\leftarrow f'_{ij}$ ;\\
\> \> \textbf{if} $ij\in E[V\setminus T,T]$ \textbf{then}
$f_{ij}\leftarrow 0$ ;
\end{tabbing}
}}
\caption{The Filtration subroutine}\label{code:filtration}
\end{center}
\end{figure}

\subsection{The Enhanced Continuous Scaling Algorithm}
We are ready to describe our strongly polynomial algorithm, shown on
Figure~\ref{code:strong}.
The algorithm consists of iterations similar to  \textsc{Continuous Scaling}, with the addition of the above described \textsc{Filtration} subroutine.
This subroutine might decrease $e_i^\mu$ values below $\egyd{i}\Delta$
for some $i\in T_0$; also, $e_i^\mu<\negd{i}\Delta$ might happen for some $i\in T$, that is, $i$ is added to the set $N\cap T$. If either of these events happen, we proceed to the next iteration without performing the subroutine \estep$(T,T_0,f,\mu,\Delta)$.
Further, if there are nodes $i\in T_0$ where the $e_i^\mu$ values drop below $\egyd{i}\Delta$, then we remove all such nodes from $T_0$, and reset $T=T_0$.

\begin{figure}[!htb]
\begin{center}
\fbox{\parbox{\textwidth}{
\begin{tabbing}
xxxxx \= xxx \= xxx \= xxx\= xxx \= xxx \= xxxxxxxxxxx \= \kill
\> \textbf{Algorithm} \textsc{Enhanced Continuous Scaling}\\
\> \textsc{Initialize}$(V,E,b,\gamma,\bar f)$ ; \\
\> $T_0\leftarrow \emptyset$ ; $T\leftarrow \emptyset$ ;\\
\> $\ka\leftarrow 0$ ;\\
\> \textsc{While} $|V|>1$ \textsc{do}\\
\> \> $N\leftarrow\{t\}\cup \{i\in V-\sink: e_i^\mu<\negd{i}\Delta\}$ ;\\
\>  \> \textbf{if} $N\cap T\neq \emptyset$ \textbf{then}\\
\> \> \>  \textbf{pick} $p\in T_0$, $q\in N\cap T$ connected by a
tight path $P$ in $E^\mu_f(\Delta)$ ;\\
\> \> \> \textbf{send} $\Delta$ units of relabeled flow from $p$ to
$q$ along $P$ ;\\
\> \> \> \textbf{if} $e_p^\mu<\egyd{p}\Delta$ \textbf{then} $T_0\leftarrow
T_0\setminus \{p\}$ ;\\
\> \> \> $T\leftarrow T_0$ ;\\
\> \> \textbf{else}\\
\> \> \> \textbf{if} $\exists ij\in E_f^\mu(\Delta)$,
$\gamma_{ij}^\mu=1$, $i\in T$, $j\in V\setminus T$ \textbf{then}
$T\leftarrow T\cup \{j\}$ ;\\
\> \> \> \textbf{else}\\
\> \> \> \> \textbf{if} $\left(\forall i\in (V\setminus T)-\sink: |b^\mu_i|<\frac{\Delta}{16^\ka n}\right)$ \textbf{then} \textsc{Filtration}$(V\setminus T,f,\mu)$ ;\\
\> \> \> \> \textbf{if} ($e_i^\mu\ge \egyd{i}\Delta$ for all $i\in T_0$) {\em and} ($e_i^\mu\ge \negd{i}\Delta$  for all $i\in T$)\\
\> \> \> \> \> \textbf{then} \estep $(T,T_0,f,\mu,\Delta)$ ;\\
\> \> \> \>\textbf{elseif} $\exists i\in T_0: e_i^\mu< \egyd{i}\Delta$  \textbf{then}\\
\> \> \> \>\>  $T_0\leftarrow T_0\setminus \{i:e_i^\mu<\egyd{i}\Delta\}$ ;\\
\> \> \> \>\>  $T\leftarrow T_0$ ;\\
\> \> \textbf{while} $\exists\ pq\in E$: $f_{pq}^\mu\ge \arcbound$
\textbf{do} \\
\> \> \> \textbf{for all} $ij\in E:\ \gamma_{ij}^\mu<1$ \textbf{do}
$f_{ij}\leftarrow 0$ ;\\
\> \> \> \textsc{Contract}($pq$) ;\\
\> \> \> $\Delta\leftarrow 16\Delta$ ;\\
\> \> \> $\ka\leftarrow \ka+1$ ;\\
\> \> \> $T_0\leftarrow \emptyset$ ; $T\leftarrow \emptyset$ ;\\
\> \textsc{Expand-to-Original}$(\mu)$ ;
\end{tabbing}
}}
\caption{Description of the strongly polynomial algorithm}\label{code:strong}
\end{center}
\end{figure}

The termination criterion is not on the value of $\Delta$, but on the
size of the graph: we terminate once it is reduced to a single node.
The main progress is done when an abundant arc $pq$ appears: in this
case, we first set the flow value on every non-tight arc to 0, and then reduce the number of nodes by one using the above described subroutine \textsc{Contract}$(pq)$.
Further, the value of the scaling factor $\Delta$ is multiplied
by 16, and the counter $k$ is increased by one.
 The sets $T_0$ and $T$
are reset to $\emptyset$.
A sequence of such contractions is performed until all abundant arcs are contracted.
The iterations between two phases where contractions are performed (and those up to the first contraction) will be referred to as a {\em major cycle} of the algorithm.
In the description and the analysis, $n$ and $m$ will always
refer to the size of the original instance and not the actual
contracted one.

At termination, the subroutine \textsc{Expand-to-Original} finds 
primal and dual optimal solutions in the original graph. This is done
by first expanding all contracted arcs $pq$ by the subroutine
\textsc{Reverse}$(pq)$, taking these arcs in the reverse order of
their contraction. Hence we obtain a dual optimal solution $\mu^*$ in
the original graph (see Lemma~\ref{lem:contract-opt}). Finally, the
subroutine \textsc{Tight-flow}$(V,\mu^*)$ obtains a primal optimal
solution, as guaranteed by Theorem~\ref{thm:tight-flow}(i).

\begin{theorem}\label{thm:strong-running}
The algorithm \textsc{Enhanced Continuous Scaling} 
finds an optimal solution
for the uncapacitated formulation (\ref{primal}) in
running time $O(n^3m^2)$ elementary arithmetic operations and comparisons.
\end{theorem}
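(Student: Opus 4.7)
My plan has three components: bounding the number of major cycles, bounding the iterations within a major cycle, and bounding the per-iteration cost. For the first, the outer while-loop terminates after at most $n-1$ contractions, since \textsc{Contract} strictly reduces $|V|$ and the loop exits when $|V|=1$; a \emph{major cycle} is the block of while-iterations between two consecutive contraction events, so there are at most $n$ major cycles. At termination, \textsc{Expand-to-Original}$(\mu)$ recovers an optimal primal/dual pair on the original graph via iterated \textsc{Reverse} and a single call to \textsc{Tight-Flow}, using Theorem~\ref{thm:tight-flow}(i).

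For the second, and most important, piece, I would use the progress strategy outlined in Section~\ref{sec:overview}. The ratio $|b_i^\mu|/\Delta$ is nondecreasing for every $i \in V-\sink$: in \textsc{Elementary Step} it is multiplied by $\alpha$ for $i \in V\setminus T$ and unchanged on $T$, while path augmentations and $T$-extensions leave $b$, $\mu$, $\Delta$ invariant. A flow-decomposition argument combined with Corollary~\ref{cor:abundant} shows that once some ratio reaches $32mn$ an abundant arc incident to that node must exist, so the major cycle ends. Writing $D := \{i : |b_i^\mu|/\Delta \ge 1/(16^\ka n)\}$, each $i \in D$ requires only $O(\log(mn))$ further doublings of its ratio before a contraction is forced, and $|D|$ is monotonically increasing with $|D|\le n$. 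The role of \textsc{Filtration} is to ensure these doublings actually happen often: when triggered, Theorem~\ref{thm:tight-flow}(ii) drives the relabeled excesses in $V\setminus T$ below $\Delta/16^\ka$, and the anticipated Lemma~\ref{lem:fluvial} asserts that within $O(1)$ subsequent iterations either $D$ strictly grows or a shrinking iteration occurs. Combining this with the weak-analysis bookkeeping of Section~\ref{sec:iter-weak}---the potential $\Psi$ of \eqref{def:psi} together with the fact that at most $2n$ non-shrinking iterations lie between consecutive shrinking ones---yields a polynomial bound, of order $O(n^2m\log(mn))$, on the iterations per major cycle.

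For the third piece, the per-iteration arithmetic cost is $O(m)$ for the routine updates ($\delta_i$ values, updates of $f$, $\mu$, $\Delta$, and the scan for new tight arcs leaving $T$), further amortized by the Fibonacci-heap implementation from the proof of Theorem~\ref{thm:weak-running}; the dominant single cost is \textsc{Filtration}, which is $O(nm)$ via the maximum-flow subroutine of \cite{Orlin13}. Multiplying the $O(n)$ major cycles by the iteration count and per-iteration cost, and accounting separately for the Filtration cost (which can be charged against the growth of $D$ or against shrinking iterations), yields the claimed $O(n^3m^2)$ bound, with any polylogarithmic overhead absorbed by pinning each shrinking iteration and each Filtration directly to one of the $O(\log(mn))$ doublings of $|b_i^\mu|/\Delta$ available to each node of $D$.

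The principal obstacle is the \textsc{Filtration}-progress claim sketched as Lemma~\ref{lem:fluvial}: one must show that after the flow on $E[V\setminus T]$ is rewritten, the very next \textsc{Elementary Step} is forced to choose $\alpha$ large enough either to push some node of $V\setminus T$ across the $1/(16^\ka n)$-threshold into $D$, or to drop $e_i^\mu$ below $\egyd{i}\Delta$ for some $i \in T_0$ and so produce a shrinking iteration. This requires carefully tracking how the sharply reduced relabeled excesses in $V\setminus T$ interact with the $\Delta$-feasibility invariants and with the formulas defining $\delta_i$ and $\alpha_2$ in \textsc{Elementary Step}. Once this local lemma is in hand, the remainder of the analysis is a fairly mechanical amortization extending the arguments of Section~\ref{sec:anal-weak}.
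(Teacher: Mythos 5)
Your high-level architecture (bound the number of major cycles, bound iterations within a cycle, bound per-iteration cost) is the natural first thing to try, and you correctly identify Lemma~\ref{lem:fluvial} as the key technical step ensuring Filtration forces progress.  However, the paper's proof of this theorem is \emph{not} a per-major-cycle estimate multiplied by the number of cycles; it is a direct corollary of Theorem~\ref{thm:major}, which performs a single \emph{global} amortization over the entire run of the algorithm, using the potential $\Gamma^{(\tau)}=\sum_{i\in D^{(\tau)}}\Gamma_i^{(\tau)}$ and the fact (Claim~\ref{cl:gamma-increase}) that the total increase of $\Gamma$ across \emph{all} contractions is $O(n^2)$.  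This distinction is not cosmetic: the paper explicitly remarks at the end of Section~\ref{sec:strong} that if one restarts from scratch after each contraction --- the natural implementation of your per-major-cycle plan --- the running time degrades to $O(n^3m^2\log n)$.  Your estimate of $O(n^2m\log(mn))$ iterations per major cycle, multiplied by $O(n)$ cycles and $O(m)$ (or $O(nm)$ for Filtration blocks) per iteration, accordingly accumulates a $\log$ factor, and the final sentence proposing to ``absorb'' it by pinning events to doublings is not a valid argument.

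There is a second, quantitative error that feeds into the first.  You assert that ``each $i\in D$ requires only $O(\log(mn))$ further doublings of its ratio before a contraction is forced,'' but this is false once $k\ge 1$ contractions have occurred.  Each call to \textsc{Contract} multiplies $\Delta$ by $16$, which divides every surviving $|b_i^\mu|/\Delta$ by $16$ even for nodes $i$ not incident to the contracted arc.  This is precisely why the threshold in the definition of $D$ in \eqref{D-def} carries the factor $16^\ka$: membership in $D$ is preserved under contraction (Lemma~\ref{lem:gamma-change}(iii)), but the gap $\Gamma_i=\log_2(32mn\Delta/|b_i^\mu|)$ \emph{increases} by $4$ each time.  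The correct lifetime ``doubling budget'' for a node of $D$ is thus $O(n+\log(mn))$, not $O(\log(mn))$.  The paper's Lemmas~\ref{lem:baralpha} and \ref{lem:strong-sigma-bound} bound $\sum_{\tau\notin\mathcal{C}\cup\mathcal{F}}\log_2\alpha^{(\tau)}\le 14n^2$ and hence the number of $T_0$-entries per node by $O(n^2)$, globally; your account replaces this with a per-cycle budget that is both harder to track and gives a weaker result.  To close the gap you would need to reproduce the paper's global bookkeeping via $\Gamma$ (Theorem~\ref{thm:major}), and then the running-time estimate reduces to: $O(n^2m)$ shrinking iterations $\times$ $O(nm)$ worst-case cost for the block preceding each one.
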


To get a truly strongly polynomial algorithm, we also need to
guarantee that the size of the numbers during the computations remain
polynomially bounded. We shall modify the algorithm in
Section~\ref{sec:rounding} by incorporating additional rounding steps
to achieve that.

We remark that the algorithm can be simplified by terminating once
the first abundant arc is found, and restarting from scratch on the
contracted graph. This
would give a running time bound $O(n^3m^2\log n)$: hence, we are able
to save a factor $\log n$ by continuing with the contracted image of
the current flow instead of a fresh start.
\section{Analysis of the strongly polynomial algorithm}\label{sec:strong-analysis}
Many properties of the \textsc{Continuous Scaling} algorithm derived in 
Section~\ref{sec:anal-weak} remain valid. In particular,
Lemmas~\ref{lem:init} and \ref{lem:elementary}, and
Claims~\ref{cl:path-aug} and \ref{cl:e-inc} are applicable with
repeating the proofs verbatim. The argument bounding the number of iterations will be an extension of the one in Section~\ref{sec:iter-weak}.

\subsection{Properties of dual solutions}\label{sec:dual-prop}
Let us first verify that expanding the dual optimal
solution of the contracted instance results in a valid dual optimal
solution of the original instance.

\begin{lemma}\label{lem:contract-opt}
Assume that $pq\in E$ satisfies  $\gamma_{pq}\mu^*_p=\mu^*_q$ for every optimal solution $\mu^*$ to (\ref{dual}) for the problem instance $(V,E,\sink,b,\gamma)$.
Let $\mu'$ be an optimal solution to (\ref{dual}) to the contracted instance 
$(V',E',\sink',b',\gamma')$ obtained by the subroutine
\textsc{Contract}$(pq)$. If $p\neq \sink$, then let $\mu_i:=\mu'_i$
for every
$i\in V-p$ and let $\mu_p:=\mu'_q/\gamma_{pq}$.
 If $p=\sink$, then let $\mu_i:=\mu'_i$ for every $i\in V-q$ and let $\mu_q=\gamma_{pq}$. 
Then $\mu$ is an optimal solution to (\ref{dual}) in the original instance $(V,E,\sink,b,\gamma)$.
\end{lemma}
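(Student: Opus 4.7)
The plan is to show that the extended labeling $\mu$ is both feasible and optimal for~(\ref{dual}) on the original instance $(V,E,\sink,b,\gamma)$. This splits into two steps: first verify that $\mu$ is a feasible dual labeling, then show that its dual objective value equals the optimum of the contracted instance, which in turn coincides with the optimum of the original instance.

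For feasibility, I would verify $\gamma_{ij}\mu_i\le\mu_j$ on every arc $ij\in E$ by a case analysis; I describe Case~I ($p\neq\sink$, $s=q$), with Case~II being symmetric. For arcs $ij\in E$ with $i,j\notin\{p\}$, the contracted instance contains an arc $ij\in E'$ with $\gamma'_{ij}\ge \gamma_{ij}$ (where the inequality accounts for merged parallel arcs), so feasibility of $\mu'$ transfers directly. For an arc $ip\in E$ with $i\neq q$, contraction produces an arc $iq\in E'$ with $\gamma'_{iq}\ge \gamma_{ip}\gamma_{pq}$, hence $\gamma_{ip}\mu_i \le \gamma'_{iq}\mu'_i/\gamma_{pq}\le \mu'_q/\gamma_{pq}=\mu_p$; the argument for $pi\in E$ with $i\neq q$ is symmetric using $\gamma'_{qi}\ge \gamma_{pi}/\gamma_{pq}$. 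The arc $pq$ itself is tight by construction: $\gamma_{pq}\mu_p=\gamma_{pq}\mu'_q/\gamma_{pq}=\mu'_q=\mu_q$. The delicate case is a possible arc $qp\in E$, which is removed as a self-loop by \textsc{Contract}$(pq)$; there I would use that any dual feasible $\mu^*$ (which exists by hypothesis) forces $\gamma_{pq}\gamma_{qp}\le (\mu^*_q/\mu^*_p)(\mu^*_p/\mu^*_q)=1$, giving $\gamma_{qp}\mu_q=\gamma_{qp}\gamma_{pq}\mu'_q\le\mu'_q=\mu_q$; combined with $\mu'_q\le \mu'_q/\gamma_{pq}=\mu_p$ (since $\gamma_{pq}\le 1$ would still need checking, so actually the direct bound is $\gamma_{qp}\mu_q\le \mu_p$ via the same multiplicative argument).

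For optimality, the key identity is that the dual objectives agree. In Case~I, since $\gamma_{pq}/\mu'_q=1/\mu_p$, a direct computation yields
\[
\sum_{i\in V'}\frac{b'_i}{\mu'_i}=\sum_{\substack{i\in V\\ i\neq p,q}}\frac{b_i}{\mu_i}+\frac{b_q}{\mu_q}+\frac{\gamma_{pq}b_p}{\mu'_q}=\sum_{i\in V}\frac{b_i}{\mu_i};
\]
the analogous identity in Case~II uses $\mu_q=\gamma_{pq}$ and $\mu_p=1$. Hence $\mu$ attains the contracted dual optimum. To conclude, I would show the reverse projection: given any original optimal $\mu^*$, the hypothesis $\gamma_{pq}\mu^*_p=\mu^*_q$ guarantees that the restriction ${\mu^*}'$ to $V'$ is feasible for the contracted dual (each merged constraint $\gamma'_{ij}=\max\{\cdot,\cdot\}$ is satisfied because each of its pre-images is, using the tightness of $pq$ to convert bounds at $p$ into bounds at $q$), and the same computation shows ${\mu^*}'$ has objective equal to that of $\mu^*$. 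Thus the contracted optimum is at least the original optimum, so $\mu$ achieves at least the original optimum value, proving optimality.

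The main obstacle is the bookkeeping around parallel arcs created by contraction and the degenerate $qp$ self-loop: one must repeatedly exploit that $\gamma'$ is the \emph{maximum} over merged preimages so that feasibility transfers in both directions, and that the tightness hypothesis $\gamma_{pq}\mu^*_p=\mu^*_q$ is exactly what lets bounds at $p$ and $q$ be identified. Once this case analysis is done carefully, the objective-value computation is mechanical and the optimality statement follows from sandwiching the two optima.
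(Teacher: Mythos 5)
Your proof is correct and follows essentially the same route as the paper: verify dual feasibility of $\mu$ by case analysis over the arc types created by contraction, then equate the dual objectives of $(\mu',V')$ and $(\mu,V)$, and finally project an arbitrary original optimal dual $\mu^*$ onto $V'$ (using the tightness hypothesis $\gamma_{pq}\mu^*_p=\mu^*_q$) to conclude the contracted optimum is no smaller — which is just a direct-inequality reformulation of the paper's by-contradiction argument. You also treat two edge cases the paper glosses over (the removed $qp$ self-loop, and the $\max$ over merged parallel arcs); your handling of the self-loop is a bit garbled mid-sentence but the corrected version you settle on, $\gamma_{qp}\gamma_{pq}\le 1 \Rightarrow \gamma_{qp}\mu_q = \gamma_{qp}\mu'_q \le \mu'_q/\gamma_{pq} = \mu_p$, is right.
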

\begin{proof}
We give the proof to the $p\neq t$ case only; the other case follows similarly.
First, let us verify that $\mu$ is a feasible solution  to (\ref{dual}). 
It is straightforward
that $\mu_\sink=1$ and $\mu_i>0$ if $i\in V-\sink$.
Also, $\gamma_{ij}^\mu\le 1$ is straightforward if $i,j\neq q$, and
$\gamma_{pq}^\mu= 1$.  For an arc $ip\in E$, let $iq\in E'$ denote its image. Then  $\gamma'_{iq}\frac{\mu'_i}{\mu'_q}\le 1$, which can be written as $\gamma_{ip}\gamma_{pq}\frac{\mu_i}{\mu_p\gamma_{pq}}\le 1$, giving $\gamma_{ip}^\mu\le 1$.
One can verify $\gamma_{pi}^\mu\le 1$ for every $pi\in E$ analogously.

Assume for a contradiction that $\mu$ is not optimal to (\ref{dual}): there exists an optimal solution $\mu^*$ with $\sum_{i\in V}b_i^{\mu^*}>\sum_{i\in V}b_i^{\mu}$.
By our assumption, $\gamma_{pq}\mu^*_p=\mu^*_q$ must hold.
Consider the restriction of $\mu^*$ to $V'=V\setminus\{p\}$; it is
easy to check that it is feasible to (\ref{dual}) in the contracted
instance. Using $b'_p=b_p+\gamma_{pq}b_q$, and thus
${b'_s}^{\mu^*}=b_p^{\mu^*}+b_q^{\mu^*}$, and ${b'_s}^{\mu'}=b_p^{\mu}+b_q^{\mu}$,
we obtain a contradiction by
\[
\sum_{i\in V}b_i^{\mu'}<\sum_{i\in V}b_i^{\mu^*}=\sum_{i\in
  V'}{b'_i}^{\mu^*}\le \sum_{i\in V'}{b'_i}^{\mu'}=\sum_{i\in V}{b_i}^{\mu'}.
\]
\end{proof}

Our next claim justifies that the feasibility properties are
maintained during the algorithm.
\begin{claim}\label{cl:contract-feas}
Let $\Delta':=16\Delta$, and let $f'$ and $\mu'$ denote the flow and
labels after contracting the abundant arc $pq$. Then $\mu'$ is a
conservative labeling for $f'$, with
$e_i^{\mu'}(f')< \egyd{i}\Delta'$ for all $i\in V-\sink$.
\end{claim}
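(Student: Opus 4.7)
The plan is to (i) establish that the abundant arc $pq$ is tight for $\mu$, (ii) verify $\mu'$ is conservative for $f'$ on the contracted graph, (iii) express $e_i^{\mu'}(f')$ in terms of excesses in the zeroed flow $\tilde f$, and (iv) combine with a degree-counting inequality to bound the excess at the contracted node.

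Since $pq$ is abundant, $f_{pq}^\mu \ge \arcbound > \Delta$, so $\Delta$-feasibility of $(f,\mu)$ forces $\gamma_{pq}^\mu = 1$. The zeroing step $f_{ij}\gets 0$ on $F^\mu$ thus yields a flow $\tilde f$ for which $\mu$ is conservative by Lemma~\ref{lem:make-conservative}. Conservativeness of $\mu'$ on $f'$ is nontrivial only for arcs incident to the merged node $s$. A direct calculation from the contraction formulas (using $\mu_q = \gamma_{pq}\mu_p$) shows, for instance in Case~I, that an arc $iq \in E'$ arising from $ip \in E$ has ${\gamma'}^{\mu'}_{iq} = \gamma_{ip}\gamma_{pq}\mu_i/\mu_q = \gamma_{ip}^\mu \le 1$; flow-bearing merged arcs remain tight because their constituent original tight arcs share the same gain $\mu_{\mathrm{head}}/\mu_{\mathrm{tail}}$, which is exactly the value retained by the max-gain rule.

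For the excess bound, I would derive two identities by directly summing relabeled flows: $e_i^{\mu'}(f') = e_i^\mu(\tilde f)$ for every $i \in V' \setminus \{s\}$, since the relabeled flow on a merged arc at $i$ equals the sum of the relabeled flows on its constituents; and $e_s^{\mu'}(f') = e_p^\mu(\tilde f) + e_q^\mu(\tilde f)$, using that the tight arcs between $p$ and $q$ become cancelling loops and that ${b'_s}^{\mu'} = b_p^\mu + b_q^\mu$. (In Case~II, $s = \sink$, so no bound is required at $s$.) Combining the algorithmic invariant $e_i^\mu(f) \le \ketd{i}\Delta$ with the straightforward estimate $e_i^\mu(\tilde f) \le e_i^\mu(f) + d_i\Delta$ (each outgoing non-tight arc contributes at most $\Delta$ of relabeled excess upon zeroing) yields $e_i^\mu(\tilde f) \le (5d_i + 8)\Delta$. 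For $i \ne s$, the inequality $5d_i + 8 < 16(d'_i + 2)$ follows from $d'_i \ge d_i - 2$, since at most two pairs of parallel arcs at $i$ can merge in the contraction.

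The main obstacle is the bound at $s$: we get $e_s^{\mu'}(f') \le (5(d_p + d_q) + 16)\Delta$ and must show this is strictly below $(d'_s + 2)\Delta' = 16(d'_s + 2)\Delta$. The crux is the combinatorial inequality $d_p + d_q \le 2d'_s + 4$. Indeed, the at-most-two arcs between $p$ and $q$ contribute $2$ each to $d_p + d_q$ but $0$ to $d'_s$, while for every other neighbor $j$ of $\{p,q\}$ a short case analysis on the four possible arcs $pj, jp, qj, jq$ shows that its contribution to $d_p + d_q$ is at most twice its contribution to $d'_s$ (the extremal $2{:}1$ ratio being realized when all four arcs are present and merge into $sj, js$). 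Substituting gives $5(d_p + d_q) + 16 \le 10d'_s + 36$, which is strictly less than $16d'_s + 32$ provided $d'_s \ge 1$. The last condition holds in Case~I (the only case where a bound at $s$ is needed) because assumption~\eqref{cond:root} supplies both $p\sink, q\sink \in E$, which merge into the arc $s\sink \in E'$.
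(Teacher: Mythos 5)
Your proof is correct and follows the paper's overall plan quite closely: check $pq$ is tight, pass to the zeroed flow $\tilde f$, use the invariance $\gamma^{\mu'}_{e'} = \gamma^\mu_e$ for conservativeness, derive $e^\mu_i(\tilde f) \le (5d_i+8)\Delta$, and finish with a degree count. The one place you diverge from the paper is exactly the place that matters most, and you get it right where the paper's written argument does not. For the contracted node the paper asserts $d_s \ge \max\{d_p,d_q\}-1 \ge (d_p+d_q)/2-1$, but the first inequality can fail: if both $pq$ and $qp$ are in $E$, contraction removes both as loops, and further merges can occur, so $d_s$ may drop to $d_q-2$ (for instance $V=\{p,q,\sink\}$, $E=\{pq,qp,p\sink,q\sink\}$ gives $d_p=d_q=3$ but $d_s=1$). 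Your inequality $d_p+d_q \le 2d_s+4$, together with $d_s \ge 1$ supplied by \eqref{cond:root}, is the right one: $5(d_p+d_q)+16 \le 10d_s+36 < 16(d_s+2)$ for $d_s\ge 1$, and it is tight in the example above. In short, your version of the degree-counting step is the corrected form of the paper's, and is what is actually needed to establish the stated bound at $s$; everything else matches the paper's argument.
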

\begin{proof}
Before the contraction, the flow on every non-tight arcs is set to 0;
this increases $e^\mu_i$ on every node by at most $d_i\Delta$.
Let $s=p$ or $s=q$ denote the contracted node. 
It is straightforward by the properties of the contraction that if
$e'$ is the image of the arc $e$, then
$\gamma_{e'}^{\mu'}=\gamma_{e}^\mu$. Since $\mu$ is conservative for
$f$ before the contraction, it follows that $\mu'$ is conservative for $f'$.
%

Consider a node $i\neq s$. Setting the flow values on non-tight
arcs to 0 increased $e^\mu_i$ by at most $d_i\Delta$, and
$e_i^{\mu'}(f')=e_i^\mu(f)$, and hence $e_i^{\mu'}(f')\le 
(5\ind{i}+8)\Delta<\egyd{i}\Delta'$. 
Let us now consider the contracted node $s$.  There is nothing to prove about $e_s^{\mu'}(f')$ if $s=\sink$, hence we may assume $s\neq\sink$.
Before the contraction, we had 
$e_p^{\mu}(f)\le (5\ind{p}+8)\Delta$, $e_q^{\mu}(f)\le 
(5\ind{q}+8)\Delta$, and it is easy to verify that
$e_s^{\mu'}(f')=e_p^{\mu}(f)+e_q^{\mu}(f)\le (5\ind{p}+5\ind{q}+16)\Delta$.
Note that $d_s\ge \max\{d_p,d_q\}-1\ge \frac{d_p+d_q}2-1$, implying
that $e_s^{\mu'}(f')\le \egyd{s}\Delta'$, as required.
\end{proof}

\subsection{Bounding the number of iterations}\label{sec:major}
Recall the notions of shrinking, expanding and neutral
iterations from Section~\ref{sec:iter-weak}.  We shall prove the following bound.
\begin{theorem}\label{thm:major}
The total number of iterations in \textsc{Enhanced Continuous Scaling}
is at most \lengthtotal, among them at most $195n^2m$ shrinking ones.
\end{theorem}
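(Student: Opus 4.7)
The plan is to combine a global control on the total scaling-factor contraction across all major cycles with the per-cycle machinery already developed in Section~\ref{sec:iter-weak}. The proof proceeds in three steps.

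\emph{Step 1 (Number of major cycles).} Each major cycle ends with a contract phase that invokes \textsc{Contract}$(pq)$ at least once, strictly decreasing $|V|$. Since the while loop terminates when $|V|=1$, there are at most $n-1$ major cycles.

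\emph{Step 2 (Total log-scaling).} Track the ratios $\rho_i := |b_i^\mu|/\Delta$ for $i \in V-\sink$. These are nondecreasing throughout the algorithm: within \estep, $\rho_i$ is unchanged for $i\in T$ (since $\mu_i$ is multiplied by $\alpha$ while $\Delta$ is divided by $\alpha$) and is multiplied by $\alpha$ for $i \in V\setminus T$. Once some $\rho_i$ reaches $32mn$, the standard computation alluded to in Section~\ref{sec:overview} produces an abundant arc incident to $i$, ending the major cycle via Corollary~\ref{cor:abundant}. In the $k$-th major cycle, Filtration is invoked only when $\rho_i < 1/(16^{k-1} n)$ for every $i \in (V\setminus T)-\sink$. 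The key new ingredient to establish is a Filtration lemma: after every call to \textsc{Filtration}, within at most two subsequent iterations either $D := \{i : \rho_i \ge 1/n\}$ gains a new node or a shrinking iteration occurs. Using monotonicity of $\rho_i$ together with this lemma, the factor by which $\Delta$ shrinks during the $k$-th major cycle is bounded by $O(16^k n^2 m)$, so $\log_2(\Delta^{\text{start}}_k/\Delta^{\text{end}}_k) = O(n+k)$. Summing over $k \in [1,n-1]$ gives a total log-scaling of $O(n^2)$.

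\emph{Step 3 (Iteration counting).} The potential $\Psi$ defined in \eqref{def:psi} is reset to $0$ at the start of each major cycle (since $T_0$ is reinitialized to $\emptyset$) and remains nonnegative throughout. Adapting Claim~\ref{cl:e-inc} and Lemma~\ref{lem:sigma-bound} so that the $\Psi$-bookkeeping charges shrinking iterations against the total log-scaling across all major cycles, the number of shrinking iterations is at most $13m \cdot O(n^2) \le 195 n^2 m$. Between two consecutive shrinking iterations there are at most $2n$ expanding or neutral ones: each neutral iteration is immediately followed by a shrinking or an expanding one by Lemma~\ref{lem:elementary}(iii)--(iv), and expanding iterations strictly enlarge $T$ which has at most $n$ elements. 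Therefore the total iteration count is at most $2n \cdot 195 n^2 m = 390 n^3 m$.

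\emph{Main obstacle.} The principal difficulty is Step 2: establishing the Filtration lemma and controlling the effect of contractions on the ratios $\rho_i$. Contracting an arc $pq$ merges $b_p^\mu$ and $b_q^\mu$ into $b_s^\mu = b_p^\mu + b_q^\mu$ and multiplies $\Delta$ by $16$; one must verify that these operations combine so that the $16^k$ threshold in Filtration correctly tracks the lower bound available on $\rho_i$ at the start of cycle $k+1$. In addition, Filtration can itself trigger shrinking iterations that are not path augmentations (by dropping $e_i^\mu$ below $(d_i+2)\Delta$ for some $i \in T_0$, or by adding a node to $N \cap T$), and these must be charged carefully against $\Psi$ so as not to break the $O(n^2 m)$ bound.
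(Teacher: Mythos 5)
Your overall blueprint matches the paper's: track the ratios $|b_i^\mu|/\Delta$, maintain the set $D$ of nodes whose ratio is above the filtration threshold, establish a Filtration lemma (the paper's Lemma~\ref{lem:fluvial}), and feed everything into the potential $\Psi$ of \eqref{def:psi}. Steps~1 and 3 are sound modulo constants. However, Step~2 contains a genuine gap.

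You claim that the total factor by which $\Delta$ shrinks during the $k$-th major cycle is $O(16^k n^2 m)$, hence $\log_2(\Delta^{\text{start}}_k/\Delta^{\text{end}}_k)=O(n+k)$, and thus the total log-scaling across all major cycles is $O(n^2)$. This is not proven, and it is in fact false in general. During a filtration iteration the subroutine \estep{} is still executed, and the multiplier $\alpha$ chosen there can be arbitrarily large: after \textsc{Filtration}$(V\setminus T,f,\mu)$ the non-tight flow inside $V\setminus T$ is wiped out, so if moreover $b_i=0$ for all $i\in (V\setminus T)-\sink$, the denominators $r_2(i)-r_4(i)-b_i$ in \eqref{eq:delta-i} all vanish, $\delta_i=\infty$ for every such $i$, and the only bound on $\alpha$ is $\alpha_2$, which need not be polynomially bounded. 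The constraint $\Gamma_i\ge 0$ (i.e., $|b_i^\mu|/\Delta<32mn$) places no upper bound on $\alpha$ for those nodes. Consequently the total $\Delta$-shrinkage over a major cycle in \textsc{Enhanced Continuous Scaling} is not polynomially bounded, and your quantity $\log_2(\Delta^{\text{start}}_k/\Delta^{\text{end}}_k)$ cannot be used as the global charging budget.

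The paper handles this by never charging iteration counts against the total log-scaling of $\Delta$. It charges only against $\sum_{\tau\notin\mathcal{C}\cup\mathcal{F}}\log_2\alpha^{(\tau)}$, i.e., the log-scaling accrued \emph{outside} filtration (and contraction) iterations; this quantity is shown to be at most $14n^2$ via the $\Gamma$-potential (Claim~\ref{cl:gamma-increase} and Lemma~\ref{lem:baralpha}). The filtration iterations are handled by a separate counting term: Lemma~\ref{lem:strong-sigma-bound} bounds the number of times a node enters $T_0$ by $|\mathcal{D}| + \sum_{\tau\notin\mathcal{C}\cup\mathcal{F}}\log_2\alpha^{(\tau)}$, where the $|\mathcal{D}|$ term (at most $2n-1$) exactly absorbs the contribution of filtration iterations, using Lemma~\ref{lem:fluvial}(i) to argue that a filtration iteration in the dangerous window $[\tau'_\ell,\tau_\ell]$ forces $\mathcal{D}$ to be extended. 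To make your Step~2 rigorous you would need to reproduce precisely this split: bound the non-filtration log-scaling via a weighted potential summed only over $D$-nodes, and account filtration iterations by their count rather than by their $\alpha$.
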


The ground set $V$ changes due to the arc contractions. Let us say
that a node $s$ is {\em born} in iteration $\tau+1$ if $s\in \{p,q\}$ for
an abundant arc contracted in iteration $\tau$; the original nodes are born in
iteration 1. Note that we keep the same notation $p$ or $q$ for the
new node.
Further, we say that a node is {\em alive} until the
first iteration when an
incident arc gets contracted, when it {\em dies}.
 Also note that multiple contractions may happen in the same
iteration; in this case, some nodes die immediately after they are
born; such nodes will be ignored in the analysis.
A key quantity in the analysis is 
\[
\Gamma_i:=\log_2 \frac{32mn\Delta}{|b_i^\mu|},
\]
for all nodes $i\in V-\sink$.
Let $\Gamma_i\itr{\tau}$ denote the value at the beginning of
iteration $\tau$.
We first show that $\Gamma_i\ge 0$ must hold for every $i\in V-\sink$, as otherwise some
abundant arcs would appear.
\begin{claim}\label{cl:whenabundant-mod}
$\Gamma_i\ge 0$ holds for all $i\in V-t$ in every iteration after the first one.
\end{claim}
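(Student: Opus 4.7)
The plan is a proof by contradiction: assume that at the start of some iteration $\tau\ge 2$ a node $i\in V-\sink$ satisfies $|b_i^\mu|>32mn\Delta$, and derive that an arc incident to $i$ must be abundant. Since every abundant arc is removed by a contraction at the end of iteration $\tau-1$ and $\Delta$ is afterwards multiplied by $16$, no arc can be abundant at the start of iteration $\tau$, giving the desired contradiction.

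First I would collect the invariants holding at the start of iteration $\tau\ge 2$: (a) $(f,\mu)$ is $\Delta$-feasible, so in particular $\gamma_{ji}^\mu\le 1$ for every $ji\in E$; (b) $f_e^\mu<17m\Delta$ for every arc $e$ of the current graph, because the contraction phase of iteration $\tau-1$ eliminates every arc with $f_e^\mu\ge 17m\Delta$ and the subsequent $\Delta\leftarrow 16\Delta$ leaves all surviving arcs strictly below the new abundance threshold; and (c) $e_i^\mu\le 4(d_i+2)\Delta$, which holds within an ordinary iteration by Lemma~\ref{lem:elementary}(ii) and is reestablished right after each contraction by Claim~\ref{cl:contract-feas}.

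Next, I would write the relabeled flow-balance identity
\[
b_i^\mu=\sum_{j:ji\in E}\gamma_{ji}^\mu f_{ji}^\mu-\sum_{j:ij\in E}f_{ij}^\mu-e_i^\mu,
\]
set $A:=\sum_{j:ji\in E}\gamma_{ji}^\mu f_{ji}^\mu$ and $B:=\sum_{j:ij\in E}f_{ij}^\mu$, and use the elementary inequality $|A-B|\le\max(A,B)$, valid since $A,B\ge 0$, to obtain $|b_i^\mu|\le e_i^\mu+\max(A,B)$. By (a), $A\le\sum_{j:ji\in E}f_{ji}^\mu$; by (b), every relabeled flow value appearing in either sum is less than $17m\Delta$. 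Since the in-neighborhood and the out-neighborhood of $i$ each contain at most $n-1$ arcs, $\max(A,B)\le 17m(n-1)\Delta$. Combining with (c) and $d_i\le 2(n-1)$, which gives $e_i^\mu\le 8n\Delta$, we get
\[
|b_i^\mu|\le 8n\Delta+17m(n-1)\Delta\le 32mn\Delta,
\]
contradicting the assumption $|b_i^\mu|>32mn\Delta$.

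The step I expect to be the main obstacle is the use of $|A-B|\le\max(A,B)$ in place of the naive triangle inequality $|A-B|\le A+B$: the latter would essentially double the $17m$-factor from the neighborhoods and push the estimate to roughly $34mn\Delta$, failing to deliver the stated constant $32mn$. Exploiting the cancellation between the total incoming and outgoing relabeled flow at $i$ is precisely what makes the constant work out; once this is noticed, the remainder is straightforward bookkeeping of invariants and degree bounds.
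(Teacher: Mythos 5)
Your proof is correct and takes essentially the same approach as the paper: assume no arc incident to $i$ is abundant (since all such arcs were contracted at the end of iteration $\tau-1$), write the relabeled flow-balance identity at $i$, and bound each term using $\gamma^\mu\le 1$, $f_e^\mu<17m\Delta$, $e_i^\mu\le 4(d_i+2)\Delta$, and the degree bounds to derive a contradiction with $|b_i^\mu|>32mn\Delta$. Your single inequality $|b_i^\mu|\le e_i^\mu+\max(A,B)$ is just a compact repackaging of the paper's case split on the sign of $b_i^\mu$ (which, in turn, uses $e_i^\mu\ge 0$ in one branch and the upper bound on $e_i^\mu$ in the other), so the underlying estimate is the same.
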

\begin{proof}
Assume $\Gamma_i\le0$, that is, $|b_i^\mu|\ge32mn\Delta$ holds for some node $i\in V-t$ at a certain iteration after the first one.
We show that there is an abundant incoming or outgoing arc incident to $i$. This contradicts the fact that all such arcs were contracted at the end of the previous iteration.
Since $f$ is generalized flow in every iteration, we have $e_i^\mu\ge 0$. If there are no abundant arcs incident, then  $f^\mu_{ji}<\arcbound$ on every incoming
arc $ji$ and $f^\mu_{ij}<\arcbound$ on all outgoing arcs $ij$. First,
consider the case when $b_i^\mu>0$.
Now
\[
0\le e_i^\mu=\sum_{j:ji\in E}\gamma_{ji}^\mu f_{ji}^\mu-\sum_{j:ij\in
  E} f_{ij}^\mu-b_i^\mu< 17\ind{i}m\Delta-32nm\Delta<0
\]
a contradiction. On the other hand, if $b_i^\mu<0$, then 
\[
(4\ind{i}+8)\Delta\ge e_i^\mu= \sum_{j:ji\in E}\gamma_{ji}^\mu f_{ji}^\mu-\sum_{j:ij\in
  E} f_{ij}^\mu-b_i^\mu>-17\ind{i}m\Delta+32nm\Delta\ge (15nm+17m)\Delta,
\]
using $d_i\le n-1$. This is a contradiction since $m\ge n\ge d_i+1$.
\end{proof}

Let us introduce the following set; recall that 
$\ka$ is the number of  abundant
arcs contracted so far.
\begin{equation}\label{D-def}
D:=\left\{i\in V-\sink: |b_i^\mu|\ge\frac{\Delta}{16^\ka n}\right\}.
\end{equation}
Let $D\itr{\tau}$ denote this set at the beginning of iteration
$\tau$. Note that the condition for calling
\textsc{Filtration} in the algorithm is precisely $(V\setminus T)\cap D=\emptyset$.

\begin{lemma}\label{lem:gamma-change}
\begin{enumerate}[(i)]
\item
The $\Gamma_i\itr{\tau}$ values are monotone decreasing inside every major
cycle, and they increase by 4 when an abundant arc is contracted. 
\item
After the contraction of $\ka$ abundant arcs,
\[
\Gamma_i\itr{\tau}\le 4\ka+5+4\log_2 n
\]
holds for every $i\in D\itr\tau$.
\item
$D\itr{\tau}\subseteq D\itr{\tau+1}$ inside a major cycle. When an abundant
arc $pq$ is contracted at the end of iteration $\tau$, then
$D\itr{\tau}\setminus\{p,q\}\subseteq D\itr{\tau+1}\setminus\{p,q\}$.
\end{enumerate}
\end{lemma}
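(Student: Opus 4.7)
The plan is to prove all three parts by a direct calculation, tracking how $\Delta$, $\mu$, and $b$ evolve under the two relevant operations: the subroutine \textsc{Elementary Step} (the only operation inside a major cycle that modifies $\Delta$ or $\mu$) and the subroutine \textsc{Contract}. Nothing else in the while loop (path augmentations, the setting of $T, T_0$, or \textsc{Filtration}) alters $\Delta$, $\mu$, or $b$, so all three claims reduce to checking these two operations.

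For part \emph{(i)}, I would start by observing that in \textsc{Elementary Step} the update is $\Delta \mapsto \Delta/\alpha$ together with $\mu_i \mapsto \alpha \mu_i$ for $i \in T$ and $\mu_i$ unchanged for $i \in V \setminus T$. Hence $|b_i^{\mu}|/\Delta = |b_i|/(\mu_i\Delta)$ is invariant on $T$ and is multiplied by $\alpha > 1$ on $V \setminus T$. Taking $\log_2$ of $32mn\Delta/|b_i^\mu|$ shows that $\Gamma_i$ either stays the same (for $i \in T$) or decreases by $\log_2 \alpha$ (for $i \in V \setminus T$). When an abundant arc $pq$ is contracted, $\Delta$ is multiplied by $16$ while for every surviving node $i \notin \{p,q\}$ both $b_i$ and $\mu_i$ are unchanged; thus $\Gamma_i$ increases by exactly $\log_2 16 = 4$.

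For part \emph{(ii)}, the statement is purely a consequence of the defining inequality of $D^{(\tau)}$. If $i \in D^{(\tau)}$ then $|b_i^\mu| \ge \Delta/(16^\ka n)$, so
\[
\Gamma_i^{(\tau)} = \log_2 \frac{32mn\Delta}{|b_i^\mu|} \le \log_2\bigl(32 m n^2 \cdot 16^\ka\bigr) = 4\ka + 5 + \log_2 m + 2\log_2 n.
\]
Since the input graph is simple and directed we have $m \le n^2$, hence $\log_2 m \le 2\log_2 n$, giving the claimed bound $4\ka + 5 + 4\log_2 n$.

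For part \emph{(iii)}, note the threshold $\Delta/(16^\ka n)$ and rewrite $i \in D$ as $|b_i^\mu|/\Delta \ge 1/(16^\ka n)$. Inside a major cycle $\ka$ is constant, so the threshold is fixed; and the argument in part \emph{(i)} shows that $|b_i^\mu|/\Delta$ is non-decreasing on all nodes under \textsc{Elementary Step}. Thus $D^{(\tau)} \subseteq D^{(\tau+1)}$. When $pq$ is contracted at the end of iteration $\tau$, $\Delta$ is multiplied by $16$ and $\ka$ increases by one, so the threshold $\Delta/(16^\ka n)$ is preserved; combined with the fact that $b_i$ and $\mu_i$ are unchanged for every surviving node $i \notin \{p,q\}$, this gives $D^{(\tau)} \setminus \{p,q\} \subseteq D^{(\tau+1)} \setminus \{p,q\}$. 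The only subtlety worth flagging is that one must verify the contraction leaves $b_i, \mu_i$ untouched for $i \notin \{p,q\}$ (which is explicit in the \textsc{Contract} construction in Section~\ref{sec:abundant}); no real obstacle arises.
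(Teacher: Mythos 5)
Your proof is correct and follows essentially the same route as the paper: track how $\Delta\mu_i$ (equivalently $|b_i^\mu|/\Delta$) changes under \textsc{Elementary Step} and \textsc{Contract}, note the threshold $\Delta/(16^\ka n)$ is invariant across contractions, and bound $\log_2(32mn^2)$ using $m\le n^2$. The paper states the same observations more tersely, but the substance is identical.
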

\begin{proof}
 Inside a major cycle of the algorithm,  the  ratio $|b_i^\mu|/\Delta$
 can never decrease: in \estep$(T,T_0,f,\mu,\Delta)$, it is unchanged for $i\in T$ and
 increases for $i\in V\setminus T$. At the end of a major cycle,
every ratio $|b_i^\mu|/\Delta$
 decreases by a factor of 16. This proves {\em (i)}.
Part {\em (ii)} is straightforward by
$|b_i^\mu|\ge\Delta/(16^\ka n)$ and $\log_2 (mn^2)\le 4\log_2n$.

For part {\em (iii)}, it is straightforward that if no arcs are
contracted, then no node may leave
$D$. Further, when an abundant arc is contracted,
the threshold in the definition of $D$ is unchanged since
$\Delta/16^k=(16\Delta)/16^{k+1}$. Therefore if $i\in
D\setminus\{p,q\}$ before the contraction, then $i$ remains in $D$ after the contraction.
\end{proof}
Let us introduce some further classification of iterations. Let $\cal C$ denote the set of iterations when contractions are
performed. Clearly, $|{\cal C}|\le n-1$.
Let $\cal F$ denote the set of iterations when the subroutine  \textsc{Filtration} is performed; such iterations will be called {\em filtrating}.
 Notice that $\tau\in {\cal F}$, that is, iteration $\tau$ is filtrating if and only if $(V\setminus T\itr{\tau})\cap D\itr{\tau}=\emptyset$.
Let $\cal D$ denote the set of iterations $\tau$ when $D$ is extended:
$D\itr{\tau}\subsetneq D\itr{\tau+1}$. By the above claim, this may
happen at most $2n-1$ times, as every node may enter $D$ only once
during its lifetime. Hence $|{\cal D}|\le 2n-1$. 
Let us define 
\[
\Gamma\itr\tau:=\sum_{i\in D\itr{\tau}} \Gamma_i\itr{\tau}
\]
\begin{claim}\label{cl:gamma-increase}
During the entire algorithm, the total increase in the value of
$\Gamma\itr\tau$ can be bounded by $14n^2$.
\end{claim}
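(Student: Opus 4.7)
The plan is to decompose the positive jumps of $\Gamma\itr\tau$ into two disjoint sources and bound each separately. By Lemma~\ref{lem:gamma-change}(i), inside a major cycle each individual $\Gamma_i$ is non-increasing, so the only way $\Gamma$ can grow strictly within a major cycle is by a new node entering $D$. At a contraction event, the same lemma tells us that $\Gamma_i$ increases by exactly $4$ for every $i$ that remains in $D$ (because $\Delta$ is multiplied by $16$); additionally the old $p,q$ are removed from $D$ (a nonpositive contribution that we discard), and the newly born merged node $s$ may enter $D$ (an entry event). Thus every positive jump of $\Gamma$ is of one of two types: an entry event, or the ``$+4$ per surviving $D$-member'' at a contraction.

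For entries: by Lemma~\ref{lem:gamma-change}(iii) together with the monotonicity above, each node can enter $D$ at most once during its lifetime, and the algorithm sees at most $2n-1$ lifetimes in total (the $n$ original nodes plus at most $n-1$ nodes born through contractions). By Lemma~\ref{lem:gamma-change}(ii), any entry jump is at most $4\ka+5+4\log_2 n \le 4(n-1)+5+4\log_2 n$. Bounding $4\log_2 n \le 2n$ (valid for $n \ge 4$; the few smaller cases can be checked directly) yields a per-entry bound of at most $6n+1$, so the total entry contribution is at most $(2n-1)(6n+1) \le 12n^2$.

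For the contraction contributions: after the $\ka$-th contraction, the current vertex set has size $n-\ka$, hence $|D\itr{\tau+1}| \le n-\ka-1$ since the sink is excluded from $D$. Summing $4|D\itr{\tau+1}|$ over at most $n-1$ contractions gives $4\sum_{\ka=1}^{n-1}(n-\ka-1) = 2(n-1)(n-2) \le 2n^2$. Adding the two contributions yields the desired $14n^2$.

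The main obstacle is organizing the bookkeeping at a contraction event, where several things happen simultaneously: $\Delta$ is multiplied by $16$, the nodes $p$ and $q$ die, and a merged node $s$ is born. We must cleanly separate the effect of $\Delta\to 16\Delta$ (which accounts exactly for the ``$+4$ per survivor'' term) from the removal of $p,q$ from $D$ (discarded, since negative) and the possible insertion of $s$ into $D$ (charged to the entry budget via the fresh lifetime of $s$). Once this accounting is in place, the rest of the bound reduces to two straightforward summations.
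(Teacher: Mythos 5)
Your proof is correct and uses the same key ingredients as the paper (Lemma~\ref{lem:gamma-change} parts (i)--(iii) plus the $2n-1$ bound on node lifetimes); the only difference is that you sum the ``$+4$'' contraction bonuses per contraction event (getting $\le 2n^2$) while the paper sums them per node and folds them into a single per-node bound of $7n$, but these are just two groupings of the same double sum.
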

\begin{proof}
When a node $i$ enters $D$ after the contraction of $\ka$ arcs, by
Lemma~\ref{lem:gamma-change}(ii) we have  $\Gamma_i\le 4\ka+5+
4\log_2 n$. There are $\le n-1-\ka$ more contractions, accounting
for a total increase of $\le 4(n-1-\ka)$ in all later
iterations. Hence the total increase for a node $i$ is bounded by
$4n+1+4\log_2 n\le 7n$.
On the other hand, there are altogether $\le 2n-1$ nodes born during
the entire algorithm.
\end{proof}

The following claim is straightforward, since for
 every $i\in V\setminus T$, $b_i^\mu$ is unchanged during \estep$(T,T_0,f,\mu,\Delta)$,
 whereas $\Delta$ decreases by a factor $\alpha$. 

\begin{claim}
If iteration $\tau\notin{\cal F}$, then for at least one $i\in D\itr\tau$, the $\Gamma_i$ value decreases by
$\log_2 \alpha\itr\tau$.
\end{claim}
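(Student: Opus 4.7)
The plan is to unpack what it means for iteration $\tau$ to lie outside $\mathcal{F}$ and then trace exactly what happens to $b_i^\mu$ and $\Delta$ during that single iteration.

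First, I would observe that the guard used in the pseudocode to call \textsc{Filtration}, namely ``$|b_i^\mu|<\Delta/(16^\ka n)$ for every $i\in(V\setminus T)-\sink$'', is precisely the statement $(V\setminus T)\cap D=\emptyset$, where $D$ is defined in \eqref{D-def}. Hence $\tau\notin\mathcal{F}$ either because iteration $\tau$ never reaches the innermost else (it performed a path augmentation or a tight-arc extension of $T$), or because it does reach that branch but the guard above fails, i.e.\ $(V\setminus T^{(\tau)})\cap D^{(\tau)}\neq\emptyset$.

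Second, I would dispose of the trivial sub-case: if \estep is not executed in iteration $\tau$, then $\alpha^{(\tau)}=1$ by definition, so $\log_2\alpha^{(\tau)}=0$ and the claimed decrease is vacuous (any $i\in D^{(\tau)}$ works, since by Lemma~\ref{lem:gamma-change}(i) the $\Gamma_i$ values are non-increasing inside a major cycle). The substantive case is when \estep fires in iteration $\tau$; in that case the innermost else was reached, so the filtration guard must have failed, producing a node $i\in(V\setminus T^{(\tau)})\cap D^{(\tau)}$.

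Third, I would fix such an $i$ and write down the one-step change of $\Gamma_i$. The demand $b_i$ is a fixed input, so it is unchanged. The only place $\mu$ is modified in the main loop is inside \estep, and only for nodes in $T$; since $i\in V\setminus T^{(\tau)}$, $\mu_i$ is unchanged in iteration $\tau$, so $|b_i^\mu|$ is unchanged. The only place $\Delta$ is modified in the main loop is the assignment $\Delta\leftarrow\Delta/\alpha$ inside \estep, so $\Delta^{(\tau+1)}=\Delta^{(\tau)}/\alpha^{(\tau)}$. Combining,
\[
\Gamma_i^{(\tau+1)}=\log_2\frac{32mn\,\Delta^{(\tau)}/\alpha^{(\tau)}}{|b_i^\mu|}=\Gamma_i^{(\tau)}-\log_2\alpha^{(\tau)},
\]
which is the claimed decrease.

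The one point that requires a little care, and is really the only ``obstacle,'' is the bookkeeping verification that no action in iteration $\tau$ other than \estep can perturb $b_i^\mu$ or $\Delta$: the path-augmentation branch only changes $f$ (and at endpoints $e_i$), the tight-arc extension only changes $T$, and the update of $T_0,T$ in the elseif only reshuffles set membership. Once this is observed, the equality above is immediate and the claim follows.
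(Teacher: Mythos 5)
Your argument is correct and follows the same route the paper sketches in the sentence immediately preceding the claim: in \estep{}, $b_i^\mu$ is unchanged for $i\in V\setminus T$ while $\Delta$ drops by the factor $\alpha$, so $\Gamma_i$ falls by $\log_2\alpha$; and the failed \textsc{Filtration} guard (which is exactly $(V\setminus T)\cap D\neq\emptyset$) supplies the witness $i\in D^{(\tau)}$. You additionally dispose of the $\alpha^{(\tau)}=1$ case and carefully check that no other branch of the iteration touches $\mu_i$ or $\Delta$, which is a welcome (if routine) bookkeeping step the paper leaves implicit. One very minor loose end: in the $\alpha^{(\tau)}=1$ subcase you write ``any $i\in D^{(\tau)}$ works,'' tacitly assuming $D^{(\tau)}\neq\emptyset$; the paper glosses over this too, and in any event the claim is only applied with $\log_2\alpha^{(\tau)}$ summed, so the empty case contributes nothing. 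Overall this is essentially the paper's proof, fleshed out.
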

Together with Claim~\ref{cl:gamma-increase}, it
yields the following.
\begin{lemma}\label{lem:baralpha}
During the entire algorithm, we have
\[
\sum_{\tau\notin{\cal C}\cup{\cal F}} \log_2\alpha\itr\tau\le 14n^2,
\]
\end{lemma}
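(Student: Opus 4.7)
I would prove this via a potential function argument, where the potential is $\Gamma^{(\tau)}=\sum_{i\in D^{(\tau)}}\Gamma_i^{(\tau)}$. The key facts I will combine are: (a) $\Gamma$ is bounded below by $0$, because Claim~\ref{cl:whenabundant-mod} guarantees $|b_i^\mu|\le 32mn\Delta$ for every $i\in V-\sink$ after the first iteration, so each $\Gamma_i\ge 0$; (b) Claim~\ref{cl:gamma-increase} bounds the sum of all positive jumps of $\Gamma$ throughout the algorithm by $14n^2$; and (c) the claim immediately preceding the lemma shows that in any iteration $\tau\notin\mathcal{F}$, some $\Gamma_i$ for $i\in D^{(\tau)}$ drops by $\log_2\alpha^{(\tau)}$.

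The plan is to telescope $\Gamma$. Fix an iteration $\tau\notin\mathcal{C}\cup\mathcal{F}$. Since $\tau\notin\mathcal{C}$, it lies strictly inside a major cycle, so by Lemma~\ref{lem:gamma-change}(i) every $\Gamma_i$ with $i\in D^{(\tau)}$ is non-increasing across the iteration. Moreover, by Lemma~\ref{lem:gamma-change}(iii), no node of $D^{(\tau)}$ leaves $D$ during this iteration, so every $i\in D^{(\tau)}$ still lies in $D^{(\tau+1)}$. Therefore
\[
\Gamma^{(\tau+1)}-\Gamma^{(\tau)}\;=\;\sum_{i\in D^{(\tau)}}\bigl(\Gamma_i^{(\tau+1)}-\Gamma_i^{(\tau)}\bigr)\;+\sum_{i\in D^{(\tau+1)}\setminus D^{(\tau)}}\Gamma_i^{(\tau+1)},
\]
where the first sum is $\le 0$ and, by the preceding claim applied at $\tau\notin\mathcal{F}$, is in fact $\le -\log_2\alpha^{(\tau)}$; the second sum is $\ge 0$ and contributes to the positive jumps already accounted for in Claim~\ref{cl:gamma-increase}.

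Summing over all $\tau\notin\mathcal{C}\cup\mathcal{F}$, the cumulative negative contribution is at least $\sum_{\tau\notin\mathcal{C}\cup\mathcal{F}}\log_2\alpha^{(\tau)}$, while the cumulative positive contribution across the \emph{entire} algorithm (including iterations in $\mathcal{C}$, where the jump of $+4$ per $i\in D$ per contraction occurs, and iterations where new nodes enter $D$) is at most $14n^2$ by Claim~\ref{cl:gamma-increase}. Since $\Gamma$ starts at $0$ (the sets $D^{(0)}=\emptyset$ as no iteration has begun) and ends $\ge 0$, the total decrease is bounded by the total increase, giving
\[
\sum_{\tau\notin\mathcal{C}\cup\mathcal{F}}\log_2\alpha^{(\tau)}\;\le\;14n^2,
\]
as required.

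The only subtlety is verifying that the argument cleanly separates the three possible sources of change in $\Gamma$: label/scaling updates in \textsc{Elementary step} (always $\le 0$ on $D^{(\tau)}$), contraction jumps (only at $\tau\in\mathcal{C}$, positive), and new nodes entering $D$ (positive). Since \textsc{Filtration} only modifies flows, not labels or $\Delta$, it does not affect any $\Gamma_i$ and therefore does not interfere. All positive contributions are bounded by Claim~\ref{cl:gamma-increase}, so the accounting goes through.
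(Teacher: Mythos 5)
Your proof is correct and takes essentially the same approach as the paper: both use the potential $\Gamma^{(\tau)}=\sum_{i\in D^{(\tau)}}\Gamma_i^{(\tau)}$, bound its total increase by $14n^2$ via Claim~\ref{cl:gamma-increase}, lower-bound it by $0$ via Claim~\ref{cl:whenabundant-mod} (which is also what guarantees the single drop of $\log_2\alpha^{(\tau)}$ is not clipped), and charge each iteration $\tau\notin\mathcal{C}\cup\mathcal{F}$ a decrease of at least $\log_2\alpha^{(\tau)}$ using the unlabeled claim preceding the lemma. Your write-up simply makes the telescoping and the bookkeeping of $D^{(\tau)}\subseteq D^{(\tau+1)}$ (Lemma~\ref{lem:gamma-change}(iii)) more explicit than the paper's terse version.
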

\begin{proof}
The right hand side bounds the total increase in $\Gamma$
according to Claim~\ref{cl:gamma-increase}. By the previous claim, at
least one $\Gamma_i\itr{\tau}$ 
decreases by at least $\log_2\alpha\itr\tau$ in iteration
$\tau\notin{\cal F}$. 
By Claim~\ref{cl:whenabundant-mod}, $\Gamma_i\itr{\tau+1}\ge 0$ and therefore
$\Gamma_i\itr{\tau}\ge \log_2\alpha\itr\tau$, as otherwise an abundant arc incident to $i$ should have been contracted at the end of iteration $\tau$, giving $\tau\in {\cal C}$.
 \end{proof}

The following lemma is the analogue of Lemma~\ref{lem:sigma-bound}.
\begin{lemma}\label{lem:strong-sigma-bound}
While alive, every node $i\in V-t$ may enter 
the set $T_0$ at most $|{\cal D}|+\sum_{\tau\notin{\cal C}\cup{\cal F}} \log_2\alpha\itr\tau$ times.
\end{lemma}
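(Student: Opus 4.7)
My plan is to mirror the proof of Lemma~\ref{lem:sigma-bound}, carrying through the main quantity $\beta_i:=e_i/(\Delta\mu_i)$ together with Claim~\ref{cl:e-inc}, while accounting for the two new features of \textsc{Enhanced Continuous Scaling}: contractions (where $\Delta$ is multiplied by $16$) and filtration events (where $e_i^\mu$ can drop sharply on $V\setminus T$).

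Let $\tau_1<\cdots<\tau_\lambda$ be the iterations during $i$'s lifetime when $i$ enters $T_0$, and for each $\ell$ define $\tau'_\ell$ as the largest iteration $\le\tau_\ell$ with $\beta_i^{(\tau'_\ell)}<\egyd{i}$. I would first verify the intervals $[\tau'_\ell,\tau_\ell]$ are pairwise disjoint. The subtle case is when a contraction occurs between $\tau_\ell$ and $\tau_{\ell+1}$: by the same calculation used in Claim~\ref{cl:contract-feas}, after contraction $e_i^{\mu'}(f')\le(5d_i+8)\Delta$ while $\Delta'=16\Delta$, so $\beta_i<\egyd{i}$ immediately, pinning $\tau'_{\ell+1}$ after it. The same computation rules out any iteration in $\cal C$ from the strict interior of an interval, since it would force $\beta_i<\egyd{i}$ and contradict the maximality of $\tau'_\ell$. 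I would then repeat the weak-case exclusion: path augmentations in $[\tau'_\ell+1,\tau_\ell]$ do not alter $e_i$ (the inequality $\beta_i\ge\egyd{i}>\negd{i}$ rules out $i=q$ and $i\notin T_0$ rules out $i=p$), and a filtration iteration only decreases $\beta_i$ (for $i\in T$, the zeroing of inflow from $V\setminus T$; for $i\in V\setminus T$, Theorem~\ref{thm:tight-flow}(ii) caps $e_i^\mu$ by $n\max_j|b_j^\mu|<\Delta/16^\ka$).

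For each $\theta\in[\tau'_\ell+1,\tau_\ell]$, Claim~\ref{cl:e-inc} with $\beta_i^{(\theta)}\ge\egyd{i}>d_i$ gives $\beta_i^{(\theta+1)}\le(\alpha^{(\theta)})^2\beta_i^{(\theta)}$; in a filtration iteration this still holds (the filtration step only decreases the pre-\estep value), and a contraction iteration can occur only at $\theta=\tau_\ell$ itself, where the additional division by $16$ preserves the inequality. Letting $\hat\beta_i^{(\tau_\ell)}$ denote the value of $\beta_i$ right after \estep adds $i$ to $T_0$ (which equals $\ketd{i}$ exactly, before any end-of-iteration contraction), telescoping yields
\[
4=\frac{\ketd{i}}{\egyd{i}}\le\frac{\hat\beta_i^{(\tau_\ell)}}{\max\{\beta_i^{(\tau'_\ell)},d_i\}}\le\prod_{\theta\in[\tau'_\ell,\tau_\ell]}(\alpha^{(\theta)})^2,
\]
and summing the logarithms over $\ell$ (using disjointness) gives $\lambda\le\sum_{\theta\in\bigcup_\ell[\tau'_\ell,\tau_\ell]}\log_2\alpha^{(\theta)}$.

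Finally, I would split this sum into iterations outside versus inside $\cal C\cup\cal F$. The outside part is absorbed into the second term of the claimed bound. For iterations inside: at most one contraction lies in each interval (only at $\theta=\tau_\ell$), and each filtration iteration in an interval is associated via Lemma~\ref{lem:fluvial} with a $\cal D$-extension in one of the next two iterations; using the fact that the alternative (a shrinking iteration) is severely constrained while $\beta_i\ge\egyd{i}$ throughout the interval, one can charge the filtration contributions injectively to distinct $\cal D$ events. The main obstacle is precisely this final charging step: replacing the potentially unbounded $\log_2\alpha^{(\theta)}$ on filtration and contraction iterations by a unit charge per induced $\cal D$ event requires tightly coupling the local interval structure for node $i$ with the global bound $|{\cal D}|\le 2n-1$, which is what produces the $|{\cal D}|$ term cleanly in the stated inequality.
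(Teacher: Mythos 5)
Your plan begins correctly: the interval structure with $\tau'_\ell < \tau_\ell$, the exclusion of contractions from the interval via Claim~\ref{cl:contract-feas}, and the argument that path augmentations do not move $e_i$ all track the paper. But your overall strategy — run the telescoping inequality $4 \le \prod_{\theta\in[\tau'_\ell,\tau_\ell]}(\alpha^{(\theta)})^2$ for \emph{every} interval (including through filtration and contraction iterations), obtain $\lambda \le \sum_{\theta \in \bigcup_\ell[\tau'_\ell,\tau_\ell]}\log_2\alpha^{(\theta)}$, and then try to subtract out the ${\cal C}\cup{\cal F}$ contributions by charging them to ${\cal D}$ — has a genuine gap which you yourself flag: there is no uniform bound on $\log_2\alpha^{(\theta)}$ at a filtration iteration, so the leftover sum $\sum_{\theta\in\bigcup_\ell[\tau'_\ell,\tau_\ell]\cap({\cal C}\cup{\cal F})}\log_2\alpha^{(\theta)}$ cannot be bounded by $|{\cal D}|$ on a per-event basis. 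The charging you describe would need to replace an unbounded real quantity by a unit charge.

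The paper avoids this by a case split on each interval individually, and by never letting ${\cal C}\cup{\cal F}$ iterations enter the telescoping product. Concretely: (a) ${\cal C}\cap[\tau'_\ell,\tau_\ell]=\emptyset$ always, because a contraction inside the interval forces $\beta_i<\egyd{i}$ and contradicts the maximality of $\tau'_\ell$; (b) \emph{if} ${\cal D}\cap[\tau'_\ell,\tau_\ell]=\emptyset$, then \emph{also} ${\cal F}\cap[\tau'_\ell,\tau_\ell]=\emptyset$, because a filtration iteration $\theta$ in the interval that is not a ${\cal D}$-extension would force $\beta_i^{(\theta+1)}<\negd{i}$ by Lemma~\ref{lem:fluvial}(i), again contradicting the maximality of $\tau'_\ell$. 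So for ${\cal D}$-free intervals the factor-4 product $4\le(\prod_{\theta\in[\tau'_\ell,\tau_\ell]}\alpha^{(\theta)})^2$ only involves iterations outside ${\cal C}\cup{\cal F}$, and summing logarithms over these intervals directly contributes to $\sum_{\theta\notin{\cal C}\cup{\cal F}}\log_2\alpha^{(\theta)}$. The intervals that \emph{do} meet ${\cal D}$ are at most $|{\cal D}|$ in number (the intervals are pairwise disjoint, and each ${\cal D}$ event can lie in at most one), and they are simply counted, not telescoped. This yields $\lambda - |{\cal D}| \le \sum_{\theta\notin{\cal C}\cup{\cal F}}\log_2\alpha^{(\theta)}$ with no charging argument needed. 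The structural observation you are missing is precisely the implication ``${\cal D}$-free $\Rightarrow$ ${\cal F}$-free'' for an interval where $\beta_i$ stays above $\egyd{i}$; once you have it, the proof collapses to the weakly polynomial argument on the clean intervals plus a trivial count on the dirty ones.
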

Before proving the lemma, let us show how it can be used to bound the total
number of iterations.
\begin{proof}[Proof of Theorem~\ref{thm:major}]
The proof follows the same lines as that of Lemma~\ref{lem:iteration-bound}, analyzing the invariant  $\Psi$ as defined by (\ref{def:psi}).
Consider an iteration $\tau\in {\cal C}$ when some abundant arcs are
contracted. According to Claim~\ref{cl:contract-feas}, the value of
$\Psi$ decreases to 0 in all such iterations.

Every shrinking iteration decreases $\Psi$ by one, and the only steps
when $\Psi$ increases is when some node $i\in V-t$ enters $T_0$. Let
$\lambda_i$ denote the number of times this
happens. Lemmas~\ref{lem:baralpha} and \ref{lem:strong-sigma-bound}
imply $\lambda_i\le  |{\cal D}|+14 n^2\le 2n+14n^2\le 15n^2$.
Consequently, the total increase in $\Psi$ is bounded by
\[
\sum_{i\in V-\sink}(3d_i+7)\lambda_i\le 
15n^2\sum_{i\in V-\sink}(3d_i+7)\le 15n^2(6m+7n)\le 195n^2m.
\] 
As in the proof of  Lemma~\ref{lem:iteration-bound}, this bounds the
number of shrinking iterations, and there can be $\le 2n$ iterations
between two subsequent shrinking iterations. This completes the proof.
\end{proof}

The next claims are needed for the proof of Lemma~\ref{lem:strong-sigma-bound}.
\begin{claim}\label{cl:max-e}
Consider a filtrating iteration $\tau\in {\cal F}$. The maximum flow problem in 
 \textsc{Filtration}$(V\setminus T,f,\mu)$ is feasible, and after the subroutine,
every $i\in V\setminus T$ satisfies
\[
e^\mu_i\le R^\mu_i+n\max_{j\in (V\setminus T)-t}|b_j^\mu|.
\]
\end{claim}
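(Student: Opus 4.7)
My plan hinges on a structural observation enforced by the algorithm's else branch leading to \textsc{Filtration}: since no arc $ij\in E_f^\mu(\Delta)$ satisfies simultaneously $\gamma_{ij}^\mu=1$, $i\in T$, $j\in V\setminus T$, and every forward arc $ij\in E$ lies in $E_f^\mu(\Delta)$, there is no tight forward arc from $T$ to $V\setminus T$. This single fact drives both parts of the claim.

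For feasibility, I will construct a feasible solution to the maximum flow instance in \textsc{Tight Flow}$(V\setminus T,\mu)$ directly from the current pair $(f,\mu)$. Applying Lemma~\ref{lem:make-conservative} produces $\tilde f$ (with flow zero on non-tight arcs) for which $\mu$ is conservative and $e_i(\tilde f)\ge 0$ for all $i\in V-\sink$. Restricting $\tilde f$ to $E[V\setminus T]$ modifies the excess at each $i\in V\setminus T$ by adding the outgoing flow to $T$ (non-negative, hence harmless) and subtracting the incoming flow from $T$; but since $\tilde f$ is supported on tight arcs and there are no tight arcs from $T$ to $V\setminus T$, the subtracted quantity is zero. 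Thus the restriction witnesses $e_i\ge 0$ throughout $V\setminus T$, giving a feasible flow in the \textsc{Tight Flow} setup.

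For the bound, let $\hat f$ denote the flow after \textsc{Filtration}. Splitting the arc sums at each $i\in V\setminus T$ according to whether an arc lies inside $E[V\setminus T]$ or crosses the cut, and using that arcs in $E[V\setminus T,T]$ are zeroed while arcs from $T$ to $V\setminus T$ retain their $f$ values, yields
\[
e_i^\mu(\hat f) \;=\; e_i^\mu(f') \;+\; \sum_{j\in T:\, ji\in E} \gamma_{ji}^\mu f_{ji}^\mu,
\]
where $e_i^\mu(f')$ is the excess of the flow returned by \textsc{Tight Flow}$(V\setminus T,\mu)$ in the restricted subproblem. Theorem~\ref{thm:tight-flow}(ii) bounds $e_i^\mu(f')\le n\max_{j\in (V\setminus T)-t}|b_j^\mu|$. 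For the remaining sum, the structural observation forces every arc $ji\in E$ with $j\in T$, $i\in V\setminus T$ to be non-tight, and since $f'$ only uses tight arcs inside $E[V\setminus T]$, the non-tight incoming arcs to $i$ from $V\setminus T$ carry no flow under $\hat f$. Therefore this sum equals precisely $R_i^\mu(\hat f)$, yielding the stated inequality.

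The main obstacle, I expect, is the careful bookkeeping: in \textsc{Filtration} only the flow inside $E[V\setminus T]$ is replaced by $f'$, arcs in $E[V\setminus T,T]$ are zeroed, while arcs from $T$ to $V\setminus T$ retain their old $f$ value. Correctly attributing each term in $e_i^\mu(\hat f)$ and $R_i^\mu(\hat f)$ to the right group, and confirming that the cut-crossing incoming contributions are exactly absorbed by $R_i^\mu(\hat f)$, is the delicate step; but once the key structural fact is in hand the identification becomes clean.
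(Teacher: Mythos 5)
Your proof is correct and follows essentially the same route as the paper's: you isolate the key fact that when \textsc{Filtration} is reached there is no tight arc from $T$ to $V\setminus T$, use a restriction of a conservative flow (the paper uses the tight restriction of the current $f$, you pass through $\tilde f$ from Lemma~\ref{lem:make-conservative}, which amounts to the same thing) to certify feasibility of the maximum flow instance, and then decompose $e^\mu_i$ after the subroutine into the excess $e^\mu_i(f')$ of the \textsc{Tight Flow} output plus the incoming flow from $T$, which you identify as exactly $R^\mu_i$ before invoking Theorem~\ref{thm:tight-flow}(ii). The bookkeeping you flag as delicate is carried out correctly.
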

\begin{proof}
Feasibility is verified by the restriction of $f\itr{\tau}$ to tight arcs in $E[V\setminus T]$.
This gives a feasible solution as in the proof of
Lemma~\ref{lem:final-opt}; note that the arcs entering $V\setminus T$
are all non-tight, as otherwise we would have extended $T$ in this
iteration instead. Let $f'$ denote the generalized flow on $V\setminus T$ returned by \textsc{Tight Flow}$(V\setminus T,\mu)$, and $f$
the generalized flow returned by \textsc{Filtration}$(V\setminus T,f,\mu)$.
Inside $E[V\setminus T]$, $f$ is nonzero only on tight arcs, and
equals $f_{ij}=f'_{ij}$ for all $ij\in E[V\setminus T]$.
The value of $f$ is set to zero on arcs leaving $V\setminus T$ and the
original values $f\itr{\tau}_{ij}$ are kept if $i\in T$. 
We obtain $e^\mu_i(f)=e^\mu_i(f')+R^\mu_i$ for $i\in V\setminus T$,
since the non-tight arcs are precisely those coming from $T$.
The claim then follows by Theorem~\ref{thm:tight-flow}(ii).
\end{proof}

\begin{lemma}\label{lem:fluvial}
Let $\tau\in {\cal F}\setminus {\cal C}$ be a filtrating iteration when no contraction is performed.
\begin{enumerate}[(i)]
\item If $\beta_i\itr{\tau+1}\ge\negd{i}$ for
  some 
$i\in V\setminus T\itr{\tau}$, then  $\tau\in {\cal D}$, that is, $D\itr{\tau+1}\supsetneq D\itr{\tau}$.
\item Either $\tau\in {\cal D}$, or
one of the
  iterations  $\tau$, $(\tau+1)$ and $(\tau+2)$ must be shrinking.
\end{enumerate}
\end{lemma}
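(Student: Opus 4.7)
For part (i), the strategy is to combine three observations about the state of the algorithm right after the Filtration subroutine of iteration $\tau$ is executed, in order to force some node in $V\setminus T\itr{\tau}$ to move into $D$.

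First, I would note that after Filtration, every arc in $F^\mu[V\setminus T\itr\tau]$ carries zero flow (because \textsc{Tight-Flow}$(V\setminus T\itr\tau,\mu)$ only puts flow on tight arcs), and every arc in $E[V\setminus T\itr\tau,T\itr\tau]$ is explicitly zeroed. Consequently, in the notation of \eqref{def:f-r}, $r_1(i)=r_3(i)=0$ for every $i\in V\setminus T\itr\tau$. If the iteration subsequently runs \estep{}, the excess $e_i$ is therefore unchanged for such $i$, and $\beta_i\itr{\tau+1}=\alpha\,\beta_i^{\text{after-F}}$ where $\alpha$ is the chosen scaling factor. (If \estep{} is not run in iteration $\tau$, then $\Delta$ and $\mu$ are unchanged, so $\beta_i\itr{\tau+1}=\beta_i^{\text{after-F}}<\ind{i}+1/16^\ka\le\negd{i}$ by Claim~\ref{cl:max-e}, contradicting the hypothesis; so we are in the \estep{} sub-case.)

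Second, and this is the key sharpening, since the else-branch was taken every arc $ji\in E[T\itr\tau,V\setminus T\itr\tau]$ is non-tight, and the Elementary Step constraint $\alpha\le \alpha_2\le 1/\gamma_{ji}^\mu$ yields $\gamma_{ji}^\mu\le 1/\alpha$. Together with $f_{ji}^\mu\le\Delta$, this refines Claim~\ref{cl:R-i-bound} after Filtration to
\[
R_i^\mu \;\le\; \ind{i}\,\Delta/\alpha \quad\text{for every } i\in V\setminus T\itr\tau.
\]
Third, the proof of Theorem~\ref{thm:tight-flow}(ii) actually produces a sharper inequality than the one in its statement: $e_i^\mu(f')\le -\sum_{j\in Z_i}b_j^\mu\le\sum_{j\in Z_i}|b_j^\mu|$ for some set $Z_i\subseteq V\setminus T\itr\tau-\sink$ with $i\in Z_i$ and $|Z_i|\le n$.

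Combining the three ingredients with the hypothesis $\beta_i\itr{\tau+1}\ge\negd{i}$ gives
\[
\ind{i}+1 \;\le\; \alpha\bigl(R_i^\mu/\Delta+e_i^\mu(f')/\Delta\bigr) \;\le\; \ind{i}+\alpha\sum_{j\in Z_i}|b_j^\mu|/\Delta,
\]
so $\alpha\sum_{j\in Z_i}|b_j^\mu|/\Delta\ge 1$. Because $\mu_j$ is unchanged and $\Delta\itr{\tau+1}=\Delta/\alpha$ for every $j\in V\setminus T\itr\tau$, this rewrites as $\sum_{j\in Z_i}|b_j^\mu|/\Delta\itr{\tau+1}\ge 1$, and pigeonhole on $|Z_i|\le n$ yields some $j^*\in Z_i$ with $|b_{j^*}^\mu|/\Delta\itr{\tau+1}\ge 1/n\ge 1/(16^\ka n)$. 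Thus $j^*\in D\itr{\tau+1}$, while $j^*\in V\setminus T\itr\tau$ guarantees $j^*\notin D\itr\tau$ by the filtrating condition; so $\tau\in\mathcal{D}$.

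For part (ii), I would argue by case analysis assuming $\tau\notin\mathcal{D}$; by the contrapositive of part (i) this gives $\beta_i\itr{\tau+1}<\negd{i}$ for every $i\in V\setminus T\itr\tau$, so $V\setminus T\itr\tau\subseteq N\itr{\tau+1}$. If iteration $\tau$ is shrinking we are done; otherwise examine iteration $\tau+1$. If $T\itr{\tau+1}\cap N\itr{\tau+1}\ne\emptyset$, a path augmentation fires and $\tau+1$ is shrinking. If instead $\tau+1$ extends $T$ via a tight arc, the newly joined node $j$ lies in $V\setminus T\itr\tau\subseteq N\itr{\tau+1}$, and since no flow/label/$\Delta$ changes occur during a tight-arc extension, $\beta_j\itr{\tau+2}=\beta_j\itr{\tau+1}<\negd{j}$; hence $j\in T\itr{\tau+2}\cap N\itr{\tau+2}$ and iteration $\tau+2$ is shrinking. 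The remaining possibility is that $\tau+1$ itself is filtrating (using $D\itr{\tau+1}=D\itr\tau$ and $V\setminus T\itr{\tau+1}\subseteq V\setminus T\itr\tau$ to check the filtration condition still holds); then the Filtration of iteration $\tau+1$ zeroes the inflow into $T\itr{\tau+1}$ from $V\setminus T\itr{\tau+1}$, dropping the excess of at least one $T_0\itr{\tau+1}$ or $T\itr{\tau+1}$ node below the $\egyd{i}\Delta$ or $\negd{i}\Delta$ threshold respectively; this causes either a $T_0$ reset (shrinking in $\tau+1$) or creates a $T\cap N$ intersection that triggers a path augmentation in $\tau+2$.

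The hardest step I anticipate is this last sub-case: ruling out that two consecutive filtrating iterations both pass through Filtration plus Elementary Step without any shrinking. The plan is to exploit that after iteration $\tau$'s Filtration, most inflow into $T$ has already been eliminated, so the Filtration of iteration $\tau+1$ cannot further reduce excesses without dropping some node below its threshold — combined with the Elementary Step condition checks that are enforced after the second Filtration.
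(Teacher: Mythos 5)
Your proof of part (i) is correct and takes essentially the paper's route: after \textsc{Filtration}, $r_1(i)=r_3(i)=0$ for $i\in V\setminus T$ so $e_i$ is untouched by \estep{} and $\beta_i$ just scales by $\alpha$; combined with the bound from Theorem~\ref{thm:tight-flow}(ii) and $R_i^\mu\le\ind{i}\Delta'$, this forces some $|b_j^\mu|/\Delta'\ge 1/n$ for $j\in (V\setminus T)-\sink$, hence $j$ joins $D$. (Your sharpening of Theorem~\ref{thm:tight-flow}(ii) to a sum over $Z_i$, and your refinement of Claim~\ref{cl:R-i-bound} using $\gamma_{ji}^\mu\le 1/\alpha$, are both valid but unnecessary — the stated bounds already give $\le\ind{i}\Delta'+n\max_j|b_j^\mu|$.)

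Part (ii) has a genuine gap, and you correctly flagged it: your case analysis includes the sub-case ``$\tau+1$ is again filtrating,'' which you cannot resolve and, as it happens, the paper never has to face — because it \emph{cannot occur}. The missing observation is that, assuming $\tau\notin\mathcal{D}$ and $\tau$ is not shrinking, the contrapositive of part (i) gives $\beta_i\itr{\tau+1}<\negd{i}<\ketd{i}$ for every $i\in V\setminus T\itr{\tau}$; hence in the call to \estep{} during iteration $\tau$ no node reaches the threshold $\ketd{i}\Delta'\mu_i'$, which by Lemma~\ref{lem:elementary}(ii) means $\alpha\neq\alpha_1$, i.e.\ $\alpha=\alpha_2$. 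Then Lemma~\ref{lem:elementary}(iv) guarantees a tight arc $ij\in E$ with $i\in T\itr{\tau}$, $j\in V\setminus T\itr{\tau}$ at the start of iteration $\tau+1$. Consequently iteration $\tau+1$ falls into one of the first two branches of the main loop (a path augmentation if $N\itr{\tau+1}\cap T\itr{\tau+1}\neq\emptyset$, else a $T$-extension along the tight arc); it never reaches the \textsc{Filtration} test, and your speculative argument about ``two consecutive filtrations without a shrinking step'' is not needed. (One further small omission: iteration $\tau$ could also be neutral with \estep{} not called, namely when some $i\in T\itr{\tau}\setminus T_0$ satisfies $e_i^\mu<\negd{i}\Delta$ after \textsc{Filtration} while all of $T_0$ still clears the $\egyd{i}\Delta$ bar; in that case $T\itr{\tau+1}\cap N\itr{\tau+1}\neq\emptyset$, so $\tau+1$ is shrinking. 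Your write-up jumps straight to examining $\tau+1$ without distinguishing whether \estep{} was actually executed in $\tau$.)
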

\begin{proof}
{\em (i):} 
Let $\Delta=\Delta\itr{\tau}$ and $T=T\itr{\tau}$.
First, let us prove that \estep$(T,T_0,f,\mu,\Delta)$ must have been performed in iteration $\tau$.
This follows by Claim~\ref{cl:max-e}. Indeed, if \estep$(T,T_0,f,\mu,\Delta)$ is
skipped after calling \textsc{Filtration}$(V\setminus T,f,\mu)$, then for every $i\in V\setminus T$ we have
\[
e_i^\mu\le R_i^\mu+n\max_{j\in (V\setminus T)-t} |b_j^\mu|<(\ind{i}+1)\Delta
\]
at the beginning of iteration $\tau+1$. This follows by
Claim~\ref{cl:R-i-bound} ($R_i^\mu<d_i\Delta$), and since 
$\max_{j\in (V\setminus T)-t}|b_j^\mu|<\Delta/n$ by $(V\setminus
T)\cap D\itr{\tau}=\emptyset$. This is a contradiction to $\beta_i\itr{\tau+1}\ge \negd{i}$.
This shows \estep$(T,T_0,f,\mu,\Delta)$ must have been performed in iteration $\tau$, setting
$\Delta'=\Delta\itr{\tau+1}=\Delta\itr{\tau}/\alpha\itr{\tau}$ (note
that we assumed $\tau\notin {\cal C}$ as well).

Consider a node $i\in V\setminus T$ in iteration $\tau$ for which
$\beta_i$ increased above $\negd{i}$.  
After \textsc{Filtration}$(V\setminus T,f,\mu)$, $F^\mu[V\setminus T]=\emptyset$ and $f_{pq}=0$ for every $pq\in E[V\setminus T,T]$. Therefore \estep$(T,T_0,f,\mu,\Delta)$
does not change the flow $f$ at all; also by definition, the labels $\mu_i$ are
unchanged for $i\in V\setminus T$. 
 Hence 
$e_i^\mu$ and $b_i^\mu$ do not change for $i\in V\setminus T$.
Let $\Delta$ and $\Delta'$ denote
the scaling factor before and after \estep$(T,T_0,f,\mu,\Delta)$. We have
 \[
\ind{i}+1\le \frac{e_i^\mu}{\Delta'}\le
\frac{R^\mu_i+n\max_{j\in (V\setminus T)-t}|b_j^\mu|}{\Delta'}\le
\ind{i}+\frac{n\max_{j\in (V\setminus T)-t}|b_j^{\mu}|}{\Delta'}.
\]
In the second inequality we use that $R_i^\mu$ is unchanged in
\estep$(T,T_0,f,\mu,\Delta)$ and it must be at most $\ind{i}\Delta'$ by Claim~\ref {cl:R-i-bound}.
This implies $\Delta'/n\le \max_{j\in (V\setminus T)-t}|b_j^{\mu}|$.
Since $(V\setminus T)\cap D\itr{\tau}=\emptyset$ was assumed, it
follows that $D$ must be extended in this iteration, that is, $\tau\in
{\cal D}$. 

For part {\em (ii)}, assume $\tau\notin{\cal D}$.
Some nodes $i\in T_0$ might be removed in iteration $\tau$ if $e_i^\mu$
decreases below $\egyd{i}$; in this case, iteration $\tau$ itself is
shrinking. 
Otherwise, part
{\em(i)} implies that $V\setminus T\itr{\tau}\subseteq N\itr{\tau+1}$, and that
$\alpha=\alpha_2$ in iteration $\tau$. Therefore at the
beginning of iteration $\tau+1$, there exists a tight arc $ij\in E$
with $i\in T\itr{t}$, $j\in V\setminus T\itr{t}$. Now either $T\itr\tau\cap
N\itr\tau\neq\emptyset$ already holds, in
which case a path augmentation is performed; or iteration $\tau+1$
extends $T$ using the tight arc $ij$. In this case, $j\in
T\itr{\tau+2}\cap N\itr{\tau+2}$, and  iteration
$\tau+2$ is shrinking.
\end{proof}

We are ready to prove Lemma~\ref{lem:strong-sigma-bound}. The proof is
based on that of Lemma~\ref{lem:sigma-bound}, also making use of the
above claims.

\begin{proof}[Proof of Lemma~\ref{lem:strong-sigma-bound}.]
Let $\tau_1< \tau_2< \ldots <
\tau_\lambda$
denote the iterations when $i$ enters $T_0$. This number is not
necessarily finite; hence $\lambda=\infty$ is allowed.
We have
 $\beta_{i}\itr{\tau_\ell+1}=\ketd{i}$ for
 $1\le \ell\le \lambda$.

For $1\le\ell\le\lambda$, let us define $\tau'_\ell$ to be the largest
value $\tau'_\ell\le \tau_\ell$ such that
$\beta_{i}\itr{\tau'_\ell}< \egyd{i}$. The
existence of these values follows as in the proof of
Lemma~\ref{lem:sigma-bound}. 

In iteration $\tau'_\ell$, we have $i\notin T_0$, since once the
excess $e_i$ drops below 
$\egyd{i}\Delta\mu_i$, the node $i$ is immediately removed from $T_0$.
Also, $i$ will be added to $T_0$ in iteration $\tau_\ell$.
We claim that ${\cal C}\cap
[\tau'_\ell,\tau_\ell]=\emptyset$. Indeed, if $\theta\in {\cal C}$,
then
$\beta_i\itr{\theta+1}<\egyd{i}$ by
Claim~\ref{cl:contract-feas}; this would contradict the maximal choice
of $\tau'_\ell$.

Let us analyze the case when  ${\cal D}\cap
[\tau'_\ell,\tau_\ell]=\emptyset$ holds. According to
Lemma~\ref{lem:fluvial}(i), this implies
${\cal F}\cap
[\tau'_\ell,\tau_\ell]=\emptyset$. Indeed, if $\theta\in {\cal F}\cap
[\tau'_\ell,\tau_\ell]$, then $\beta_i\itr{\theta+1}<\negd{i}$ would
follow, a contradiction again to the maximal choice of $\tau'_\ell$.

With the same argument as in the proof of Lemma~\ref{lem:sigma-bound},
making use of Claim~\ref{cl:e-inc}, we obtain
\[
4\le \left(\prod_{\theta\in [\tau'_\ell,\tau_\ell]} \alpha\itr{\theta}\right)^2.
\]
Note that we have $[\tau'_\ell,\tau_\ell]\cap ({\cal C}\cup {\cal
  F}\cup {\cal D})=\emptyset$.
Let us add the logarithms of these inequalities for those values
$\ell=1,\ldots,\lambda$ where $[\tau'_\ell,\tau_\ell]\cap {\cal
  D}=\emptyset$.
Hence we obtain
\[
\lambda-|{\cal D}|\le \sum_{\theta\notin {\cal C}\cup {\cal F}} \log_2\alpha\itr{\theta}
\]
completing the proof.
\end{proof}

\subsection{Running time analysis}

\begin{proof}[Proof of Theorem~\ref{thm:strong-running}.]
As shown in Theorem~\ref{thm:major}, the total number of shrinking
steps is $O(n^2m)$. If \textsc{Filtration} is not called between two
shrinking iterations, then this part of the algorithm
can be implemented in $O(m+n\log n)$
time using Fibonacci heaps, using the variant described in Section~\ref{sec:rounding}. If \textsc{Filtration} is called, then we
must execute a maximum flow computation in $O(nm)$ time
\cite{Orlin13}. 
According to Lemma~\ref{lem:fluvial}(ii), in this case we must have a
shrinking one within the next three iterations. Consequently,
the running time between two shrinking iterations is dominated by
$O(nm)$.
This gives a total estimation of $O(n^3m^2)$; all other steps of the
algorithm (contractions, initial and final flow computations, etc.)
are dominated by this term.
\end{proof}

\section{Bounding the encoding size}\label{sec:rounding}

In this section, we complete the proof of Theorem~\ref{thm:main}: we modify the algorithm to
guarantee that the encoding size of the numbers during the
computations remain polynomially bounded in the input size.
Further, we present a more efficient implementation, by jointly
performing the elementary steps between two shrinking iterations;
this enables a better running time bound, as already indicated in the
proofs of Theorems~\ref{thm:weak-running} and \ref{thm:strong-running}.
We describe the
modifications for the \textsc{Enhanced Continuous Scaling} algorithm,
but they are naturally applicable for the weakly polynomial
\textsc{Continuous Scaling} algorithm as well.

In this section, let us assume that
\begin{equation}
\bar B\ge
500n^5.\label{nagy-B}
\end{equation}
Indeed, if $\bar B$ is polynomially bounded in $n$, then any of the
previous weakly polynomial algorithms has strongly polynomial running time.
We define the following quantities needed for the roundings; as in the previous
section, $n$ and
$m$ will always refer to the size of the original input instance (and
not the actual contracted one).
\[
q:={40m\bar B^4},\quad \bar q:={40m\bar B^2}=q/\bar B^2
\]
For a real number $a\in \R_{\ge0}$, let
 $\lfloor a\rfloor_q$ denote the largest number $p/q$ with $p\in \Z$,
 $p/q\le a$, and similarly, let
$\lceil a\rceil_q$ denote the smallest number $p/q$ with $p\in \Z$, $p/q\ge a$.
The same notation will also be used for $\bar q$.



\begin{figure}[!htp]
\begin{center}
\vskip-1cm
\fbox{\parbox{\textwidth}{
\begin{tabbing}
xxxxx \= xxx \= xxx \= xxx \= xxxxxxxxxxx \= \kill
\> \textbf{Subroutine} \textsc{Aggregate steps}$(T_0,f,\mu,\Delta)$\\
\> \textbf{for all} $i\in (V\setminus T_0)-\sink$ \textbf{do} \\
\> \> \textbf{update} $r_1(i),r_2(i),r_3(i),r_4(i)$ and $\delta_i$ as
in (\ref{def:f-r})  and (\ref{eq:delta-i}) ; \\
\> \> $\alpha_i\leftarrow
 \min\left\{\lfloor \delta_i\rfloor_q,\min\{1/\gamma_{ji}^\mu: ji\in
   E_f^\mu(\Delta), j\in T_0\}\right\}$ ;\\
\> $\alpha_t\leftarrow\min\{1/\gamma_{jt}^\mu: jt\in E_f^\mu(\Delta), j\in T_0\}$ ;\\
\> \textbf{for all} $i\in T_0$ \textbf{do}\\
\> \> \textbf{update} $\rho_i, \nu_i$ as in (\ref{def:nu-T}) ;
$\alpha_i\leftarrow 1$ ;\\
\> $T\leftarrow T_0$ ; $\alpha^*\leftarrow 1$ ; $\lambda\leftarrow\min\{\nu_j: j\in T_0\}$ ;\\
\> \textbf{while} ($\alpha^*\le \lambda$) and ($t\notin T$) \textbf{do}  \\
\> \> $\alpha^*\leftarrow \min\{\alpha_i: i\in V\setminus T\}$ ;\\ 
\> \> \textbf{if} $\alpha^*=\infty$ \textbf{then} \\
\> \> \> set $f_{ti}=0$ for all $ti\in E$: $\gamma^\mu_{ti}<1$ ;\\
\> \> \> \textbf{return} optimal flow $f$ and optimal relabeling $\mu$
;  \textbf{TERMINATE}.\\
\> \> $i\leftarrow \mbox{argmin}\{\alpha_i: i\in V\setminus T\}$ ;\\ 
\> \> $T\leftarrow T\cup \{i\}$ ;\\
\> \> \textbf{if} $\alpha_i=\lfloor \delta_i\rfloor_q$ \textbf{then}
$T_0\leftarrow T_0\cup \{i\}$ ; \\
\> \> \textbf{for all} $ij\in E: j\in T$ \textbf{do}\\
\> \> \> 
$f_{ij}\leftarrow f_{ij}/\alpha^*$ ;\\
\> \> \> \textbf{update} $\rho_j,\nu_j$ as in (\ref{def:nu-T}) ;\\
\> \> \> $\lambda\leftarrow \min\{\lambda, \nu_j\}$ ; \\
\> \>  \textbf{for all} $ij\in E: j\in V\setminus T$ \textbf{do}\\
\> \> \> \textbf{if} $\gamma_{ij}^\mu<1$ \textbf{then}
$f_{ij}\leftarrow f_{ij}/\alpha^*$ ;\\
\> \> \> \textbf{if} $j\neq t$ \textbf{then}\\
\> \> \> \> \textbf{update} $r_1(j),r_2(j),\delta_j$ as in (\ref{def:f-r}) and
\eqref{eq:delta-i} ;\\
\> \> \> \> $\alpha_j\leftarrow  \min\left\{\alpha_j, \lfloor \delta_j\rfloor_q,\alpha_i/\gamma_{ij}^\mu\right\}$ ;\\
\> \> \> \textbf{if} $j= t$ \textbf{then} $\alpha_t\leftarrow  \min\left\{\alpha_t,\alpha_i/\gamma_{it}^\mu\right\}$ ;\\
\> \>  \textbf{for all} $ji\in E: j\in V\setminus T$ \textbf{do}\\
\> \> \>  \textbf{if} $\gamma_{ji}^\mu=1$ \textbf{then}\\
\> \> \> \>  $f_{ji}\leftarrow f_{ji}\alpha^*$ ; \\
\> \> \> \> \textbf{if} $f_{ji}>\Delta\mu_j$ \textbf{then}
$\alpha_{j}\leftarrow \alpha^*$ ;\\ 
\> \> \> \textbf{if} $j\neq t$ \textbf{then}\\
\> \> \> \> \textbf{update} $r_3(j),r_4(j),\delta_j$ as in (\ref{def:f-r}) and
 \eqref{eq:delta-i} ;\\
\> \> \> \> $\alpha_j\leftarrow  \min\left\{\alpha_j, \lfloor \delta_j\rfloor_q\right\}$ ;\\
\> \>  \textbf{update} $\rho_i,\nu_i$ as in (\ref{def:nu-T}) ;\\
\> \> $\lambda\leftarrow\min\{\lambda,\nu_i\}$ ;\\
\> \>  \textbf{if} $\left(\forall j\in (V\setminus T)-\sink:
  |b^\mu_j|<\frac{\Delta}{16^\ka n(1+1/\bar B)^g}\right)$ \textbf{then}\\
\> \> \> \textsc{Filtration}$(V\setminus T,f,\mu)$ ;\\
\> \> \> \textbf{for all} $j\in T$ \textbf{do} \\
\> \> \> \> $e_j\leftarrow e_j-\rho_j$ ; $\rho_j\leftarrow 0$ ; \\
\> \> \> \> \textbf{update} $\nu_j$ as in (\ref{def:nu-T}) ;\\
\> \> \> \textbf{for all} $j\in V\setminus T$ \textbf{do}\\
\> \> \> \>  \textbf{update} $r_1(i),r_2(i),r_3(i),r_4(i), \delta_i$
as in (\ref{def:f-r}) and \eqref{eq:delta-i} ;\\
\> \> \> \> $\alpha_i\leftarrow \min\left\{\lfloor \delta_i\rfloor_q,\alpha_i\right\}$ ;\\
\> \> \> $\lambda\leftarrow \min\{ \nu_j: j\in T\}$ ; \\
\> $\Delta\leftarrow \lceil \Delta/\alpha^*\rceil_q$ ;\\
\> \textbf{for all} $ij\in F^\mu[V\setminus T]\cup E[V\setminus T,T]$
\textbf{do} $f_{ij}\leftarrow f_{ij}/\alpha^*$ ;\\
\> \textbf{for all} $i\in T$ \textbf{do} $\mu_i\leftarrow
\mu_i\alpha^*/\alpha_i$ ;\\
\> \textbf{for all} $j\in T_0: \nu_j<\alpha^*$ \textbf{do} $T_0\leftarrow
T_0\setminus \{j\}$ ; \\
\> \textsc{Round Label}$(f,\mu)$ ;\\
\> \textbf{if }$t\in T$ \textbf{then RETURN} $s=t$ ;  \textbf{elseif } $\exists s\in T\setminus T_0: \nu_s<\alpha^*$
\textbf{then RETURN} such an $s$ ;\\
\> \textbf{otherwise RETURN }$s=NULL$.
\end{tabbing}
}}
\caption{The Aggregate steps subroutine}\label{code:block}
\end{center}
\end{figure}

\begin{figure}[htb]
\begin{center}
\fbox{\parbox{\textwidth}{
\begin{tabbing}
xxxxx \= xxx \= xxx \= xxx\= xxx \= xxx \= xxxxxxxxxxx \= \kill
\> \textbf{Algorithm} \textsc{Modified Enhanced Continuous Scaling}\\
\> \textsc{Initialize}; \\
\> $T_0\leftarrow \emptyset$ ; $\ka\leftarrow 0$ ;  $g\leftarrow 0$ ; \\
\> \textsc{While} $|V|>1$ {\em and} $\Delta\ge \threshold$ \textsc{do}\\
\> \> $s\leftarrow$ \textsc{Aggregate Steps}$(T_0,f,\mu,\Delta)$ ; \\
\> \> $g\leftarrow g+1$ ; \\
\> \> \textbf{if} $s\neq NULL$ \textbf{then} \\
\> \> \>  \textbf{pick} a  tight $p-s$ path $P$ in $E^\mu_f(\Delta)$
with $p\in T_0$ ; \\
\> \> \> \textbf{send} $\Delta$ units of relabeled flow from $p$ to $s$ along $P$;\\
\> \> \> \textbf{if} $e_p^\mu<\egyd{p}\Delta$ \textbf{then} $T_0\leftarrow
T_0\setminus \{p\}$ ;\\
\> \> \textbf{while} $\exists\ pq\in E$: $f_{pq}^\mu\ge \arcbound$
\textbf{do} \\
\> \> \> \textbf{for all} $ij\in E:\ \gamma_{ij}^\mu<1$ \textbf{do}
$f_{ij}\leftarrow 0$ ;\\
\> \> \> \textsc{Contract}($pq$) ;\\
\> \> \> $\Delta\leftarrow 16\Delta$ ;\\
\> \> \> $\ka\leftarrow \ka+1$ ;\\
\> \> \> $T_0\leftarrow \emptyset$ ; \\
\> \textbf{if} $\Delta<\threshold$  \textbf{then} \textsc{Tight-Flow}$(V,\mu)$ \\
\> \> \textbf{else}  \textsc{Expand-to-Original}$(\mu)$ ;
\end{tabbing}
}}
\caption{Description of the modified strongly polynomial
  algorithm}\label{code:modified strong}
\end{center}
\end{figure}

The  main subroutine \textsc{Aggregate steps}$(T_0,f,\mu,\Delta)$ is shown in Figure~\ref{code:block}. 
The input is a set $T_0$ with $e_i^\mu\ge\egyd{i}\Delta$ for every
$i\in T_0$. The output is either a node $s\in V$ or $s=NULL$. If
$s\neq NULL$, then we can find a tight path in $E_f^\mu(\Delta)$ between a node $p\in T_0$
and  $s$, where either $s=t$ or $e_s^\mu<\negd{s}\Delta$.
The case $s=NULL$ means that some node $i$ leaves $T_0$, that is, $e_i^\mu$ drops below
$\egyd{i}\Delta$.
In the first case, we can perform a path augmentation from $s$ to $t$. The entire
algorithm is exhibited in Figure~\ref{code:modified strong}. Note that
termination can happen either because the graph is shrunk to a single
node, or because $\Delta$ decreases below a certain threshold as in the weakly
polynomial algorithm
\textsc{Continuous Scaling}.

We now describe the main features of the subroutine \textsc{Aggregate
  steps} and compare it to the algorithm \textsc{Enhanced Continuous
  Scaling}.
Apart from the rounding and contraction steps, it performs essentially the same as a
sequence of \estep s  starting with $T=T_0$, until a next
shrinking iteration. The difference is that in \textsc{Enhanced Continuous
  Scaling}, whenever the set $T$ is extended by a node, \estep{} needs to update the
labels of every node in $T$ and change flow values on certain
arcs. During a sequence of iterations between two shrinking steps,
this can lead to $O(n)$ value updates in certain nodes and arcs. 
In contrast, \textsc{Aggregate steps} changes the labels only once and
the flow
values at most twice. The key quantity in \textsc{Aggregate steps}  is
$\alpha^*$. This corresponds to the
product of the $\alpha$ multipliers of the sequence of \estep s thus
far in \textsc{Enhanced Continuous Scaling}. 

The subroutine \textsc{Aggregate steps} starts with $T=T_0$ and extends $T$ by adding
nodes one-by-one, until  $t\in T$, or $e_s^\mu<\negd{s}\Delta$
for some $s\in T\setminus T_0$, or $e_i^\mu<\egyd{i}\Delta$ for some $i\in T_0$.
 Node labels are changed at the end of the
subroutine only. 
For a node $i\in T$, we store the value
$\alpha_i$ at the time when $i$ enters $T$, and multiply the label
$\mu_i$ at the end by $\alpha^*/\alpha_i$. The flow on an arc $ij$ may
change when its endpoints enter $T$, or at the end of the subroutine,
altogether at most twice.

For nodes $i\in  V\setminus T$, we use $\alpha_i$ to denote the candidate value of
$\alpha^*$ when $i$ must enter $T$, either due to a new tight arc
$ji\in E_f^\mu(\Delta)$, $j\in T$, or because the excess $e_i$ reaches
the threshold
$\ketd{i}\Delta\mu_i$ and hence $i$ must be included in $T_0$. We 
define $\delta_i$ as in \eqref{eq:delta-i},  representing the value of $\alpha^*$
when $i$ would enter $T$ because of $e_i=\ketd{i}\Delta\mu_i$,
provided that no other node enters $T$ before. We let
\[\alpha_i:=
 \min\left\{\lfloor \delta_i\rfloor_q,\min\{\alpha_j/\gamma_{ji}^\mu: ij\in
   E_f^\mu(\Delta), j\in T\}\right\}.
\]
Note the rounding $\lfloor \delta_i\rfloor_q$ in the first case. This
means that $e_i$ might be slightly less than $\ketd{i}\Delta\mu_i$
when $i$ enters $T$.
The second event corresponds to the case when $i$ enters $T$ due to a
new tight arc from a node $j\in T$. Note that either $ji\in E$, or $ji$ is a reverse arc
with $ij\in E$, $\gamma_{ij}^\mu=1$,  $f_{ij}^\mu> \Delta$. In the latter
case the corresponding term equals $\alpha_j$.

For $i\in T\setminus T_0$, wish to estimate
when $e_i<\negd{i}\Delta\mu_i$ would be attained, and for $i\in T_0$,
we wish to estimate when $e_i<\egyd{i}\Delta\mu_i$ would be attained.
To provide a unified notation for these two cases, let us define $\xi_i=1$ if $i\in
T\setminus T_0$ and
$\xi_i=2$ if $i\in T_0$. We let
\begin{equation}\label{def:nu-T}
 \begin{aligned}
& \rho_i:= \sum_{j\in V\setminus T} \gamma_{ji}f_{ji},\\
& \nu_i :=\begin{cases}
\infty &\quad \mbox{ if }e_i-\rho_i\ge (d_i+\xi_i)\Delta\mu_i ;\\
 \frac{\rho_i}{(d_i+\xi_i)\Delta\mu_i+\rho_i-e_i} &\quad \mbox{ otherwise.}
\end{cases} 
\end{aligned}
\end{equation}
Here $\rho_i$ denotes the total flow entering $i$ on arcs from
$V\setminus T$, these are the ones where the flow value will be
reduced.
We define 
$\nu_i$ as the smallest value of $\alpha^*$ when
$e_i=(d_i+\xi_i)\Delta\mu_i$ is reached. $\lambda$ will denote the minimum
value of $\{\nu_i: i\in T\}$. The iterations terminate once
$\lambda<\alpha^*$.

In every iteration, we set the new value $\alpha^*:=\min\{\alpha_i:
i\in V\setminus T\}$, pick a node $i$ minimizing this value, and
include it into $T$.
The modifications of the incident $f_{ij}$ and $f_{ji}$ values are in order to guarantee the
same change as in the  sequence of
\estep{} operations in \textsc{Enhanced Continous Scaling}. 
We update the corresponding $r_1(j),\ldots,r_4(j),\delta_j$ and $e_j,\rho_j,\nu_j$ values on the
neighbours of $i$ accordingly. These updates can be performed in
$O(1)$ time. Indeed, for each of the sums
$r_1(j),\ldots,r_4(j),e_j,\rho_j$, only one term changes. Provided
these, $\delta_j$ and $\nu_j$ are obtained by simple formulae.

The \textsc{Filtration} subroutine is used similarly as in
\textsc{Enhanced Continuous Scaling}, but the bound on $|b_j^\mu|$ is
different, containing a term $(1+1/\bar B)^g$ necessary due to the roundings.
The counter $g$ denotes the total number of times the subroutine
\textsc{Aggregate steps} was performed.

Compared to \textsc{Enhanced Continuous Scaling}, there is a further minor difference
regarding contractions. In \textsc{Enhanced Continuous Scaling},  a contraction
can be performed after every \estep, whereas in the modified algorithm,
only after the entire sequence represented by  \textsc{Aggregate steps}$(T_0,f,\mu,\Delta)$.

 If \textsc{Filtration} is not called, then the subroutine
 \textsc{Aggregate Steps} can be
 implemented in $O(m+n\log n)$ time using the Fibonacci heap data structure. To see this, we
 maintain two heap structures, one for the $\alpha_i$'s for $i\in V\setminus T$, an one for the
$\nu_i$'s, $i\in T$. Besides, we maintain the $r_1(i),\ldots,r_4(i)$
values for $i\in V\setminus T$, and the $e_i,\rho_i$ values for $i\in
T$. Every arc is examined $O(1)$ times, and the corresponding key
modifications can be implemented in $O(1)$ time. Consequently, the
bound in \cite{Fredman87} is applicable.

It is easy to verify that all $\mu_i$ and $f_{ij}$ values are modified exactly as in
a sequence of \estep\ operations. For example, consider an arc $ji$
with originally $i,j\in V\setminus T$, such that $i$ enters $T$ before
$j$. The scaling factor when $i$ enters $T$ is $\Delta/\alpha_i$. 
If $ij\in E_f^\mu(\Delta/\alpha_i)$, that is, $f_{ji}^\mu>
\Delta/\alpha_i$, then $j$ enters $T$ in the next neutral phase.
Accordingly, \textsc{Aggregate Steps} sets $\alpha_j=\alpha^*$ in the same case.
If $ji$ was a non-tight arc already at the beginning, then $f_{ji}$ is
decreased in every elementary step until $j$ enters $T$; in our subroutine, 
$f_{ji}$ is divided by $\alpha_j$. However, if $ji$ was tight initially, and 
$f_{ji}^\mu<\Delta\alpha_i$, then it becomes non-tight after $i$
enters $T$. Notice that in this case our subroutine divides
$f_{ji}$ by $\alpha_j/\alpha_i$. The other cases can be verified similarly. 


\medskip

At termination, we perform the subroutine \textsc{Round Label},
shown in Figure~\ref{code:round}. This is a Dijkstra-type algorithm that takes  labeling $\mu$, and changes it to a labeling $\mu'\ge \mu$ such that the set of tight arcs in $E_f$ may only increase. Consequently, if $(f,\mu)$ is a $\Delta$-feasible pair for some $\Delta$, then so is $(f,\mu')$. 

We repeatedly extend the set $S$ starting from $S=\{t\}$ until $S=V$ is achieved.
In every iteration we multiply all $\mu_i$'s for $i\in V\setminus S$ by $\varepsilon>1$, so that either a new tight arc between $V\setminus S$ and $S$ is created, or some value $\mu_i$ for $i\in V\setminus S$ becomes an integer multiple of $1/\bar q$.

\begin{figure}[!htp]
\begin{center}
\fbox{\parbox{\textwidth}{
\begin{tabbing}
xxxxx \= xxx \= xxx \= xxx \= xxxxxxxxxxx \= \kill
\> \textbf{Subroutine} \textsc{Round Label}$(f,\mu)$\\
\> $S\leftarrow \{\sink\}$ ;\\
\> \textbf{while} $S\neq V$ \textbf{do}\\
\> \> $\varepsilon_1\leftarrow \min\left\{\frac{\lceil \mu_i\rceil_{\bar q}}{\mu_i}: i\in V\setminus S\right\}$ ;\\
\> \> $\varepsilon_2\leftarrow\min\left\{\frac{1}{\gamma_{ij}^\mu}:
  ij\in E_f, i\in V\setminus S, j\in S\right\}$ ;\\
\> \> $\varepsilon\leftarrow\min\{\varepsilon_1,\varepsilon_2\}$ ;\\
\> \> \textbf{for} $i\in V\setminus S$ \textbf{do} $\mu\leftarrow \mu/\varepsilon$ ;\\
\> \> $S\leftarrow S\cup \{i\in V\setminus S: \lceil \mu_i\rceil_{\bar q}=\mu_i\}\cup
\{i\in V\setminus S: \exists j\in S, ij\in E_f, \gamma_{ij}^\mu=1\}$ ;
\end{tabbing}
}}
\caption{The Round Label subroutine}\label{code:round}
\end{center}
\end{figure}

\subsection{Analysis}
 It is easy to adapt 
Theorem~\ref{thm:tight-flow} and  Lemma~\ref{lem:final-opt} to show that if
in any contracted graph during the algorithm \textsc{Modified Enhanced Continuous Scaling},
we have $\Delta\le \threshold$ for the original values of $B$, $m$ and $n$, then the current labeling $\mu$ is optimal and thus we may terminate.
Also note that $2\bar B/q\le \threshold$, and therefore we may assume that $2\bar B/q\le \Delta$ in all iterations of the algorithm except the last one.

\begin{claim}\label{cl:round-label}
The subroutine \textsc{Round Label} returns a labeling $\mu'$ such
that every $\mu'_i$ is an integer multiple of $\bar B/q$. If $(f,\mu)$ is $\Delta$-conservative for some $\Delta\ge 0$, then so is $(f,\mu')$. 
Finally, $\mu_i\le \mu'_i\le \left(1+1/({40m\bar B})\right)\mu_i$.
\end{claim}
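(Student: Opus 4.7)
The plan is to prove the three assertions of the Claim in turn, following the Dijkstra-type evolution of $S$ and inducting on the order in which nodes are absorbed. First, $\mu'\ge\mu$ is immediate since in every iteration $\varepsilon_1\ge 1$ (because $\lceil \cdot\rceil_{\bar q}\ge \mathrm{id}$) and $\varepsilon_2\ge 1$ (because $\gamma_{ij}^\mu\le 1$ on the relevant arcs by $\Delta$-feasibility). The subroutine terminates in at most $n-1$ iterations, because after each multiplication either the minimizer of $\varepsilon_1$ becomes an integer multiple of $1/\bar q$, or the minimizer of $\varepsilon_2$ produces a newly tight arc from $V\setminus S$ to $S$, and in either case a new node in $V\setminus S$ is added to $S$ by the extension rule.

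For the preservation of $\Delta$-conservativeness, I will inspect each iteration case by case. For an arc $ij\in E_f^{\mu'}(\Delta)$ with both endpoints in $S$ or both in $V\setminus S$, $\gamma_{ij}^{\mu'}=\gamma_{ij}^\mu\le 1$; for $i\in S, j\in V\setminus S$ the value decreases; for $i\in V\setminus S, j\in S$ it equals $\varepsilon\gamma_{ij}^\mu\le 1$ by the choice of $\varepsilon_2$, after noting that the relevant reverse arcs from $V\setminus S$ to $S$ do lie in $E_f$ since $f^{\mu'}_{ji}=f^\mu_{ji}>\Delta$ forces $f_{ji}>0$. For the condition $e_i\ge R_i^{\mu'}$, the only arcs newly entering $F^{\mu'}$ go from $S$ to $V\setminus S$ and were tight under $\mu$; on each such arc $ji$ one must have $f^\mu_{ji}\le\Delta$ (otherwise the reverse $ij$ would be a tight arc from $V\setminus S$ to $S$ in $E_f$ and the second extension rule would have placed $i$ into $S$ at an earlier iteration), and the additional mass in $R_i^{\mu'}$ is absorbed by the slack already present in $\Delta$-feasibility.

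The integer-multiplicity claim follows by induction on the order in which nodes enter $S$. The base case is $\mu_t=1=(40m\bar B^3)\cdot(\bar B/q)$, an integer multiple of $\bar B/q$ since $q/\bar B=40m\bar B^3\in\mathbb{Z}$. A node entering via the $\varepsilon_1$ rule becomes $\lceil\mu_i\rceil_{\bar q}\in(1/\bar q)\mathbb{Z}=\bar B\cdot(\bar B/q)\mathbb{Z}$, hence an integer multiple of $\bar B/q$. A node entering via a tight arc $ij$ has $\mu_i$ equal to $\mu_j$ multiplied by $1/\gamma_{ij}$ (or $\gamma_{ji}$ in the reverse case); writing the gain with numerator and denominator dividing $\bar B$ and invoking the encoding assumption that $\bar B$ is a multiple of the product of all such numerators and denominators, the $\bar B/q$-granularity of $\mu_j$ propagates to $\mu_i$.

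For the bound $\mu'_i\le(1+1/(40m\bar B))\mu_i$, I will use that $\varepsilon\le\varepsilon_1\le 1+1/(\bar q\,\mu_j^{\mathrm{cur}})$ for the minimizing $j$ in each iteration, together with the lower bound $\mu_i\ge 1/\bar B$ inherited from \textsc{Initialize} and preserved under all updates, giving a per-iteration multiplier of at most $1+\bar B/\bar q=1+1/(40m\bar B)$. The main obstacle I foresee is the cumulative accounting: naively compounding this factor across the $O(n)$ iterations would overshoot the target. The resolution is that each iteration consumes the rounding tolerance of one node once and the next iteration reads the already-updated labels, so the effective increase experienced by any fixed node $i$ is governed by the single $\lceil\cdot\rceil_{\bar q}$-rounding applied at the moment $i$ joins $S$ rather than by the full product over iterations; making this telescoping precise is the crux of the argument.
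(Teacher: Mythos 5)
Your overall strategy---tracking the Dijkstra-style growth of $S$, arguing the $\bar B/q$-granularity via the encoding assumption on gain products, and bounding the per-iteration multiplier $\varepsilon$---is the same route the paper takes. The case analysis you give for preservation of $\Delta$-conservativeness is more explicit than the paper's one-line remark (``every tight arc in $E_f$ remains tight'') and is correct in substance, although your justification that $\varepsilon_2\ge 1$ ``by $\Delta$-feasibility'' needs a caveat: $\Delta$-conservativeness only guarantees $\gamma_{ij}^\mu\le 1$ on $E_f^\mu(\Delta)$, while the $\varepsilon_2$ minimum ranges over the strictly larger set $E_f$. For the argument to stand one should observe that a residual arc $ji\in E_f$ from $V\setminus S$ to $S$ with $\gamma_{ji}^\mu>1$ can only arise from a non-tight $ij\in E$ carrying positive flow with $i\in S$, $j\in V\setminus S$; and if instead the arc $ij$ is tight with positive flow then $\gamma_{ji}^\mu=1$ forces $\varepsilon=1$ and $j$ is absorbed immediately, so no actual relabeling happens across such an arc.

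Where the sketch genuinely falls short is the third assertion. You correctly pinpoint the cumulative-overshoot worry and declare ``making this telescoping precise is the crux of the argument,'' but you never supply that argument, only the promise that one exists---and the resolution you gesture at (``governed by the single $\lceil\cdot\rceil_{\bar q}$-rounding applied at the moment $i$ joins $S$'') is stated for the already-updated value $\mu_i^{(k)}$, which does not by itself yield a bound in terms of the original $\mu_i$. The missing observation is idempotence of the rounding map $x\mapsto\lceil x\rceil_{\bar q}$: in every iteration $\varepsilon\le\varepsilon_1\le \lceil\mu_i^{(k)}\rceil_{\bar q}/\mu_i^{(k)}$ for all $i\in V\setminus S$, hence $\mu_i^{(k+1)}\le\lceil\mu_i^{(k)}\rceil_{\bar q}$, and by induction on $k$ the value $\mu_i^{(k)}$ never leaves the interval $[\mu_i,\lceil\mu_i\rceil_{\bar q}]$ (since $\lceil\lceil\mu_i\rceil_{\bar q}\rceil_{\bar q}=\lceil\mu_i\rceil_{\bar q}$). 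Therefore $\mu'_i\le\lceil\mu_i\rceil_{\bar q}<\mu_i\bigl(1+\tfrac{1}{\bar q\mu_i}\bigr)\le\mu_i\bigl(1+\tfrac{1}{40m\bar B}\bigr)$, the last step using $\mu_i\ge 1/\bar B$. This single line is exactly what the paper dismisses as ``clear,'' and it is precisely the step you identified as the crux; having identified it, you should have written it out.
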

\begin{proof}
A node $i$ enters $S$ either if $\mu_i$ is an integer multiple of
$1/\bar q=\bar B^2/q$, or if it is connected by a tight path $P$ in
$E_f$ to a node $j$ such that $\mu_j$ is an integer multiple of
$1/\bar q$. In the latter case, $\mu_i=\mu_j/\gamma(P)$, and since
$\bar B$ is an integer multiple of $\gamma(P)$ by definition of $\bar
B$ in Section~\ref{sec:encoding}, it follows that $\mu_i$ is an integer multiple of $\bar B/q$. The claim on conservativeness follows since every tight arc in $E_f$ remains tight.
Finally, it is clear that $\mu'_i\le \lceil \mu_i\rceil_{\bar q}
<\mu_i+1/\bar q=\mu_i(1+1/(\bar q\mu_i))$. On the other hand,
$\mu_i\ge 1/\bar B$ because of the initial definition \eqref{def:mu}, and hence $1+1/(\bar q\mu_i)\le \left(1+1/({40m\bar B})\right)$.
\end{proof}

In the original algorithm, $\Delta\mu_i$ is nonincreasing during every
\estep{} iteration. Due to the roundings, this is not true anymore;
however, we have the following bound (the possible increase
corresponds to the case $\alpha_i\le 1+1/{\bar B}$).
\begin{claim}
When performing \textsc{Aggregate steps}, $\Delta\mu_i$ decreases by
at least a factor of $\alpha_i/(1+1/{\bar B})$ for every $i\in T$, and
by $\alpha^*/(1+1/{\bar B})$ for every $i\in V\setminus T$, except for
possibly the ultimate iteration.
\end{claim}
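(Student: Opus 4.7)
The plan is to track exactly how $\Delta\mu_i$ is altered across one call to \textsc{Aggregate steps}, splitting the change into (a) the explicit updates in the subroutine and (b) the additional bump introduced by \textsc{Round Label}. Let $\Delta,\mu_i$ be the values at the start of the subroutine and $\Delta',\mu'_i$ the values at the end. From the code, $\Delta'=\lceil \Delta/\alpha^*\rceil_q$, which gives the key inequality $\Delta'\le \Delta/\alpha^*+1/q$. Before \textsc{Round Label}, $\mu_i$ is multiplied by $\alpha^*/\alpha_i$ for every $i\in T$ (recall $\alpha_i=1$ for the nodes originally in $T_0$) and is left untouched for $i\in V\setminus T$. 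By Claim~\ref{cl:round-label}, \textsc{Round Label} then multiplies each label by a factor in $[1,\,1+1/(40m\bar B)]$.

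Combining these estimates, for $i\in T$ one obtains
\[
\Delta'\mu'_i\le \frac{\Delta\mu_i}{\alpha_i}\Bigl(1+\frac{\alpha^*}{q\Delta}\Bigr)\Bigl(1+\frac{1}{40m\bar B}\Bigr),
\]
and for $i\in V\setminus T$ the same bound with $\alpha_i$ replaced by $\alpha^*$. Hence the claim reduces entirely to showing
\[
\Bigl(1+\frac{\alpha^*}{q\Delta}\Bigr)\Bigl(1+\frac{1}{40m\bar B}\Bigr)\le 1+\frac{1}{\bar B}.
\]

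To control $\alpha^*/(q\Delta)$ I will invoke the non-ultimate hypothesis. As remarked just before the claim, outside the ultimate iteration the algorithm continues, so the value of $\Delta$ entering the next iteration satisfies $\Delta\ge 2\bar B/q$; tracing backwards through any subsequent contractions (which can only multiply $\Delta$ by $16$) this forces $\Delta'\ge 2\bar B/q$ already at the end of the current \textsc{Aggregate steps} call. Combined with $\Delta'\le \Delta/\alpha^*+1/q$, this yields $\alpha^*\le q\Delta/(2\bar B-1)$, i.e.\ $\alpha^*/(q\Delta)\le 1/(2\bar B-1)$.

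Plugging in and using $\bar B\ge 500n^5$ from \eqref{nagy-B}, expanding gives
\[
\Bigl(1+\tfrac{1}{2\bar B-1}\Bigr)\Bigl(1+\tfrac{1}{40m\bar B}\Bigr)
=1+\tfrac{1}{2\bar B-1}+\tfrac{1}{40m\bar B}+\tfrac{1}{(2\bar B-1)40m\bar B},
\]
and a routine numerical check (multiply through by $\bar B$; the first term contributes at most $\bar B/(2\bar B-1)<0.51$, the second at most $1/40$, the third is negligible) shows the right-hand side is bounded by $1+1/\bar B$. This establishes the claim; the only mildly delicate point, and the one I would be most careful about, is the justification that the non-ultimate assumption actually propagates through any intervening contractions to give $\Delta'\ge 2\bar B/q$ at the end of \textsc{Aggregate steps}, rather than merely at the start of the next iteration.
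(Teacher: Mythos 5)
Your approach is essentially the paper's: bound $\Delta'=\lceil\Delta/\alpha^*\rceil_q$ by an extra factor coming from the granularity $1/q$, bound the \textsc{Round Label} increase by $1+1/(40m\bar B)$ via Claim~\ref{cl:round-label}, multiply the two factors, and check numerically that they stay below $1+1/\bar B$ using \eqref{nagy-B}. The only place you diverge is precisely the point you yourself flag, and it is indeed shaky: ``tracing backwards through any subsequent contractions'' does \emph{not} give $\Delta'\ge 2\bar B/q$. A contraction \emph{multiplies} $\Delta$ by $16$, so if $k$ contractions follow the current \textsc{Aggregate steps} call, the $\Delta$ entering the next iteration is $16^k\Delta'$; knowing that this quantity is at least $\threshold$ only gives $\Delta'\ge \threshold/16^k$, which could be far below $2\bar B/q$. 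The paper avoids this by a different (and cleaner) convention, stated at the start of the section: whenever $\Delta$ falls below $\threshold$ \emph{at any point} (not just at the top of the while loop), the current labeling is already optimal and one may terminate. Under that convention the ``ultimate'' call to \textsc{Aggregate steps} is \emph{defined} as the first one whose output $\Delta'$ drops below $\threshold$, and hence in every earlier call $\Delta'\ge\threshold\ge 2\bar B/q$ holds by fiat -- with no need to reason about what the contractions do afterwards. With that fix in place the rest of your derivation (in particular $\alpha^*/(q\Delta)\le 1/(2\bar B-1)$ and the final numerical estimate) is fine and matches the paper's.
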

\begin{proof}
Without the rounding, we would set the new value of the scaling factor
to $\Delta/\alpha^*$ and the new value of $\mu_i$ as
$\mu_i\alpha^*/\alpha_i$ if $i\in T$ and leave it unchanged if $i\in
V\setminus T$. Let us focus on the case $i\in T$; the same 
argument works for $i\in V\setminus T$ as well.
These will be rounded to $\Delta'=\lceil
\Delta/\alpha^*\rceil_q$ and $\mu'_i\le (1+1/({40m\bar
  B}))\mu_i\alpha^*/\alpha_i$ by the previous claim.
As remarked above, we have $2\bar B/q\le \Delta'$ in all save the last
step of the algorithm. Therefore 
$\Delta'=\lceil \Delta/\alpha^*\rceil_q\le (1+{1}/({2\bar
  B}))\Delta/\alpha^*$. Consequently,
\[
\Delta'\mu_i'\le \left(1+ {1}/({2\bar
  B})\right)\left(1+1/({40m\bar B})\right)\Delta\mu_i/\alpha_i\le \left(1+1/{\bar B}\right)\Delta\mu_i/\alpha_i,
\]
proving the claim.
\end{proof}

Provided this, one can derive the bound $O(n^2m)$ on the total number
of calls to \textsc{Aggregate steps} as in Theorem~\ref{thm:major}.
This subroutine corresponds to a sequence of \estep{}, however, 
the argument can be easily adapted.
We now outline the changes in the analysis. Instead of (\ref{D-def}),
we define the set $D$ as
\[
D:=\left\{i\in V-\sink:
  |b_i^\mu|\ge\frac{\Delta}{16^\ka n(1+{1}/{\bar B})^g }\right\}.
\]
According to the above claim, if no arc is contracted, then no node
may leave the set $D$, as in Lemma~\ref{lem:gamma-change}.
After the contraction of $\ka$ arcs, the maximum value of $\Gamma_i$
can be at most
\[
\Gamma_i\itr{\tau}\le 4\ka+5+4\log_2 n+g\log\left(1+{1}/{\bar
    B}\right)\le 4\ka+5+4\log_2 n+{g}/{\bar B}.
\]
By the assumption (\ref{nagy-B}), the last term is at most $1/n$ even
after $500n^2m$ iterations. Hence the proof of
Claim~\ref{cl:gamma-increase} can be easily modified to prove the following.
\begin{claim}\label{cl:gamma-increase-2}
After at most  $500n^2m$ executions of \textsc{Aggregate steps},
the total increase in the value of
$\Gamma\itr\tau$ can be bounded by $14n^2$.
\end{claim}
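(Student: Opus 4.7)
The plan is to mirror the argument of Claim~\ref{cl:gamma-increase} while threading in a single extra bookkeeping term to absorb the per-call rounding errors introduced by the $(1+1/\bar B)^g$ factor in the modified definition of $D$. First I would verify that Lemma~\ref{lem:gamma-change}(iii) still holds for this redefined $D$: by the claim immediately preceding Claim~\ref{cl:gamma-increase-2}, $\Delta\mu_i$ shrinks by at least a factor $\alpha/(1+1/\bar B)\ge 1/(1+1/\bar B)$ per execution of \textsc{Aggregate steps}, which is exactly the rate at which the threshold $\Delta/(16^\ka n(1+1/\bar B)^g)$ tightens when $g$ increases by one. Hence no node can leave $D$ within a major cycle except via contraction, and each node enters $D$ at most once during its lifetime.

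Next I would bound each node's total contribution to the increase of $\Gamma\itr{\tau}$ over its lifetime in $D$. When a node $i$ first enters $D$ after $\ka$ contractions and $g$ executions of \textsc{Aggregate steps}, the displayed inequality just before the claim yields $\Gamma_i\le 4\ka+5+4\log_2 n+g/\bar B$. Subsequent positive changes to $\Gamma_i$ can come only from (a) further contractions, each contributing $\log_2 16=4$ (with at most $n{-}1{-}\ka$ of them remaining), and (b) later executions of \textsc{Aggregate steps}, each raising $\Gamma_i$ by at most $\log(1+1/\bar B)\le 1/\bar B$. Summing, the contribution of a single node is at most
\[
(4\ka+5+4\log_2 n+g/\bar B)+4(n-1-\ka)+(g_{\max}-g)/\bar B = 4n+1+4\log_2 n+g_{\max}/\bar B,
\]
where $g_{\max}\le 500n^2m$ is the total number of calls considered.

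Finally, under the standing assumption $\bar B\ge 500n^5$ of (\ref{nagy-B}) together with the trivial bound $m<n^2$ (the graph is simple), we get $g_{\max}/\bar B\le m/n^3<1/n$, so each node's contribution is at most $4n+2+4\log_2 n\le 7n$. Since at most $2n-1$ nodes are ever born during the algorithm, multiplying gives the claimed bound $(2n-1)\cdot 7n\le 14n^2$. The main subtlety is the per-call rounding accounting in step two: one must be sure that every positive jump of $\Gamma_i$ between two executions of \textsc{Aggregate steps} is captured by the bound $\log(1+1/\bar B)$, and that after summing these jumps over up to $500n^2m$ calls the total rounding mass $g_{\max}/\bar B$ is still negligible compared to the $O(n)$ per-node contribution. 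This is exactly the quantitative content that the hypothesis $\bar B\ge 500n^5$ was chosen to guarantee.
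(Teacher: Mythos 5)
Your proof is correct and takes essentially the same route the paper intends: it mirrors the per-node accounting in the proof of Claim~\ref{cl:gamma-increase}, threading the $(1+1/\bar B)^g$ rounding factor through the entry bound on $\Gamma_i$ and the per-call increments, and uses the hypothesis \eqref{nagy-B} together with $m<n^2$ to show the aggregate rounding mass $g_{\max}/\bar B$ stays below $1/n$. The paper merely states that Claim~\ref{cl:gamma-increase}'s proof ``can be easily modified'' and leaves these details implicit; your write-up supplies exactly the intended modifications.
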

Another change in the argument is due to the fact that when a node $i$
enters $T_0$ in \textsc{Aggregate steps}, it might have
$e_i<\ketd{i}\mu_i$ due to the rounding of $\delta_i$. This affects
the way  Claim~\ref{cl:e-inc} is applied in the proof of
Lemmas~\ref{lem:sigma-bound} and \ref{lem:strong-sigma-bound}. 
In \eqref{eq:four}, $4$ has to be replaced by a slightly smaller
number; consequently, we have to replace $\log_2$ by
$\log_{2-\varepsilon}$ in the argument for some small
$\varepsilon$. However, this increases the running time estimation
only by a small constant factor.

\medskip

One can show that the $O(n^3m^2)$ bound on the number of
elementary arithmetic operations and comparisions is still applicable for the
modified algorithm. The proof of Theorem~\ref{thm:main} is complete by
showing that the size of the variables remain polynomially bounded.
Due to the rounding steps, $\Delta$ and the $\mu_i$'s are always of
polynomially bounded size. It is left to show that the same holds for
the $f_{ij}$ values. 

\begin{lemma}
Every $f_{ij}$ value is a rational number of polynomially bounded size in $\bar B$.
\end{lemma}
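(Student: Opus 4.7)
The plan is to establish three facts in turn: (i) $\Delta$ and every $\mu_i$ have polynomially bounded encoding size throughout the execution; (ii) every $\alpha^*$ and $\alpha_i$ appearing inside \textsc{Aggregate steps} is a rational of polynomially bounded encoding size; and (iii) each $f_{ij}$ is modified at most polynomially many times, and each modification grows its encoding size by only a polynomial amount. Together these immediately imply the lemma.

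For (i), observe that $\Delta$ is always reset by $\Delta\leftarrow\lceil\Delta/\alpha^*\rceil_q$, hence remains an integer multiple of $1/q=1/(40m\bar B^{4})$. Its magnitude lies in $[\threshold,\,16^{n-1}\bar\Delta]$, since the algorithm terminates once $\Delta<\threshold$ and contractions multiply $\Delta$ by $16$ at most $n-1$ times; both endpoints have polynomial size. By Claim~\ref{cl:round-label}, \textsc{Round Label} makes each $\mu_i$ an integer multiple of $\bar B/q$, and $\mu_i$ grows only by the cumulative product of the $\alpha^*$ factors applied to it, whose total is bounded by $16^{n-1}\bar\Delta/\threshold$; combined with the lower bound $\mu_i\ge 1/\bar B$ from the initialization, we obtain (i). For (ii), trace the recursive updates of $\alpha_i$ in \textsc{Aggregate steps}: it is initially $\lfloor\delta_i\rfloor_q$ (a multiple of $1/q$) and can be refined to $\alpha_j/\gamma_{ij}^\mu$ when a neighbor $j$ enters $T$. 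Unwinding, $\alpha_i=\alpha_{i_0}/\prod_{e\in P}\gamma_e^\mu$ for some $i_0$ with $\alpha_{i_0}$ a multiple of $1/q$ and some $i_0$-to-$i$ path $P$ of length at most $n-1$. The product telescopes to $\gamma(P)\mu_{i_0}/\mu_i$, a quotient of rationals of polynomial size by (i) and the bound on the $\gamma_e$ values, so $\alpha_i$ and hence $\alpha^*$ have polynomially bounded encoding size.

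For (iii), $f_{ij}$ is changed only by: (a) multiplication or division by $\alpha^*$ inside \textsc{Aggregate steps}, happening at most twice per invocation of the subroutine; (b) addition of $\pm\Delta\mu_i$ during a path augmentation, at most once per invocation; (c) replacement by the output of \textsc{Tight-Flow}$(V\setminus T,\mu)$ inside \textsc{Filtration}, whose entries are rationals of polynomial size by the standard analysis of max flow with the rational capacities $-b_i^\mu$ (whose denominators divide the polynomially bounded LCM of the input denominators); (d) reset to zero at a contraction; or (e) the re-indexing by \textsc{Contract}. By Theorem~\ref{thm:major} there are $O(n^2m)$ invocations of \textsc{Aggregate steps}, so each arc undergoes $O(n^2m)$ modifications during the algorithm, and the initial value of $\bar f$ plus the initial \textsc{Tight-Flow} also have polynomial encoding size by assumption~(\ref{cond:init}). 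Since every operand has polynomially bounded encoding size, each modification grows the encoding size of $f_{ij}$ by at most a polynomial in $n,m,\log\bar B$, so the final size remains polynomial. The main obstacle is controlling the denominator blow-up from the divisions in (a): each division can multiply the denominator of $f_{ij}$ by up to $q$, so after $O(n^2m)$ divisions the denominator can be as large as $q^{O(n^2m)}$, whose encoding size $O(n^2m\log(m\bar B))$ is nonetheless polynomial; one must additionally verify that the additions in (b) and the resets in (c)--(e) never reintroduce flows with larger denominators than those already accumulated, which follows from taking a common denominator (LCM) at each addition and the max-flow/rational-input analysis for \textsc{Filtration}.
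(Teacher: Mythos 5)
Your proof is correct and follows essentially the same plan as the paper's: bound the encoding size of $\Delta$ and the $\mu_i$, then of the scalars $\alpha_i$ arising inside \textsc{Aggregate steps}, and finally count the $O(n^2m)$ per-arc modifications and argue that the per-step growth in bit size is polynomial. The one genuine difference is in how you control the $\alpha_i$. The paper establishes the sharper structural fact that each $\alpha_i$ is an integer multiple of $1/q$ with $\alpha_i\le\bar B^2$, so that every multiplicative update $f_{ij}\mapsto f_{ij}\alpha_j/\alpha_i$ multiplies $f_{ij}$ by a quotient of two integers bounded by $q\bar B^2$, and each path-augmentation addition is by an integer multiple of $\bar B/q^2$. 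You instead show, via the telescoped form $\alpha_i=\alpha_{i_0}\mu_i/(\gamma(P)\mu_{i_0})$ and the already-established bounds on $\Delta$, $\mu_i$ and the $\gamma_e$'s, the weaker but sufficient fact that $\alpha_i$ is a rational of polynomially bounded encoding size. Since only $O(n^2m)$ modifications occur per arc, either bound yields polynomial encoding size for $f_{ij}$, so both routes are sound; the paper's structural claim is just crisper. One small inaccuracy in your writeup: a division does not multiply the denominator of $f_{ij}$ \emph{by up to $q$} but rather by the numerator of $\alpha_i$ in lowest terms (which the paper bounds by $q\bar B^2$); this is still polynomially bit-sized, so your conclusion is unaffected.
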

\begin{proof}
The $f_{ij}$ values can be changed in two ways. One is via maximum
flow computations in the initial \textsc{Tight-flow} subroutine and
during the later \textsc{Filtration} iterations. We can always assume
that the flow computations return a basic optimal solution; since the
flow problem is defined by polynomially bounded capacities and
demands, such steps reset a polynomially bounded rational value for
$f_{ij}$.

Every \textsc{Aggregate Steps} iteration
 either leaves $f_{ij}$
unchanged, or modifies it to $f_{ij}/\alpha_i$, or to
$f_{ij}\alpha_j/\alpha_i$. We claim that $\alpha_i$ and $\alpha_j$ are both
integer multiples of $1/q$.
Indeed, either $\alpha_i=\lfloor \delta_i\rfloor_q$ and thus this 
 property is straightforward; or $\alpha_i=\mu_p/\gamma(P)$ for some
 $p-i$ path $P$ with $p\in T_0$; note that $\mu_i$ is an integer
 multiple of $\bar B/q$ by Claim~\ref{cl:round-label}, and $\bar B$ is
 an integer multiple of $\gamma(P)$.
Further, it is easy to verify that $\alpha_i,\alpha_j\le \bar B^2$.
Consequently, $f_{ij}$ is multiplied in  \textsc{Aggregate
  steps}$(T_0,f,\mu,\Delta)$ by a number $Q$ that is the quotient of two integers
$\le q\bar B^2$.

During a path augmentation, $f_{ij}$ is modified by adding or
subtracting $\Delta\mu_i$, that is an integer multiple of $\bar
B/q^2$. Since \textsc{Aggregate Steps} is executed $O(n^2m)$ times,
these arguments show that all $f_{ij}$'s remain polynomially bounded.
\end{proof}

\section{Problem transformations}\label{sec:further}
\subsection{Transformation to an uncapacitated
  instance}\label{sec:transform}
Consider an instance $(V',E',t',u',\gamma')$ of the standard formulation (\ref{primal-standard}) with $|V'|=n'$, $|E'|=m'$, and
encoding parameter $B$.  We now show how it can be transformed to an equivalent instance
$(V,E,t,b,\gamma)$ of the uncapacitated formulation
(\ref{primal}) with $|V|\le n'+m'$, $|E|\le 2m'$, and $\bar B\le  2{B}^{4m'}$ satisfying
 assumptions (\ref{cond:root}), (\ref{cond:init}),
 (\ref{cond:bounded}), and all
 assumptions on the encoding size in Section~\ref{sec:encoding}. 
The transformation proceeds in three steps. First, we remove all arc
capacities by introducing new nodes for arcs with finite
capacities. In the second step, the 
 boundedness condition  (\ref{cond:bounded}) is checked; if the problem turns out to be unbounded,
 we terminate by returning the optimum value $\infty$. Finally,
 new auxiliary arcs are added in order to satisfy (\ref{cond:root}).

\subsubsection*{Removing arc capacities}
Let us divide the arc set as $E'=E'_u\cup E'_\infty$, where $e\in
E'_u$ if the capacity $u'_e$ is finite, and $e\in E'_\infty$ if $u'_e=\infty$.
Let the node set $V$ consist of the original node set $V'$ and a new
node corresponding to every arc $e\in E'_u$; let $t:=t'$.
The original nodes are called {\em primary} nodes, and those corresponding to arcs {\em secondary} nodes.
Let $k=a_{ij}$ be the node corresponding to  arc $ij\in E'_u$. The transformed
graph contains two corresponding arcs, $ik$ and $jk$. We leave all arcs in
$ij\in E'_\infty$ unchanged between the primary nodes $i$ and $j$.
Let us define $\bar B$ to be twice
the product of the numerators and
denominators of all rational numbers $\gamma'_{ij}$ for every $ij\in
E'$ and $u'_{ij}$ for every
$ij\in E'_u$; clearly, $\bar B\le 2B^{4m'}$.

For a primary node $i\in V$, let us set the node demand
$b_i= -\sum_{j:ji\in E'_u} \gamma'_{ji}u'_{ji}$.
For the secondary node $k=a_{ij}$, let $b_k:=\gamma'_{ij}u'_{ij}$. 
Furthermore, let us define the gain factors by 
$\gamma_{ik}:=\gamma'_{ij}$, $\gamma_{jk}:=1$. For $ij\in E'_\infty$,
we let $\gamma_{ij}:=\gamma'_{ij}$.

The transformed instance satisfies (\ref{cond:init}), since the
following $\bar f$ is a feasible solution. For every secondary node $k=a_{ij}$, let us
set
$\bar f_{jk}:=\gamma'_{ij}u'_{ij}$, and let us set $\bar f_{pq}=0$ for
all other arcs $pq$.

\subsubsection*{Boundedness}
Let us now address the boundedness of the problem. The following lemma gives a simple
characterization of boundedness of the objective.

\begin{lemma}\label{lem:boundedness}
Consider a problem instance  $(V,E,t,b,\gamma)$ in the uncapacitated
formulation (\ref{primal}) that is feasible.
The objective in  (\ref{primal}) is bounded if and only if there is no
cycle $C\subseteq E$ with $\gamma(C)>1$ and a path $P\subseteq E$
between a node incident to $C$ and $\sink$.
\end{lemma}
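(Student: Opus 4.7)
The plan is to treat the two directions separately. Both hinge on material already at hand: the ``if'' direction uses an explicit unbounded flow construction, and the ``only if'' direction uses LP duality via the dual \eqref{dual}.

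Suppose such a cycle $C$ with $\gamma(C)>1$ and a directed path $P$ from a node $v\in C$ to $\sink$ exist; I claim the problem is unbounded. Let $\bar f$ be any feasible solution (which exists by hypothesis) and let $\phi$ be the flow template supported on $C\cup P$ that circulates one unit around $C$ starting at $v$ and then sends $\gamma(C)-1$ units from $v$ along $P$. A direct check shows $e_i(\phi)=0$ for every $i\in V-\sink$: internal nodes of $C$ and $P$ each have their incoming flow equal to their outgoing flow, and at $v$ the excess $\gamma(C)-1$ generated by the cycle is exactly consumed by the path augmentation. Meanwhile the flow arriving at $\sink$ is $\gamma(P)(\gamma(C)-1)>0$. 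Hence $\bar f+\alpha\phi$ is feasible for every $\alpha\ge 0$, with objective value growing linearly in $\alpha$.

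Conversely, suppose no such $(C,P)$ pair exists; I show the dual \eqref{dual} is feasible, which together with primal feasibility implies (by LP duality) that the primal optimum is finite. Mimicking \eqref{def:mu}, define
\[
\mu_i := \inf\bigl\{1/\gamma(W) : W \text{ is a directed walk in } E \text{ from } i \text{ to } \sink\bigr\},
\]
with $\mu_i:=\infty$ when no such walk exists (so that $y_i=1/\mu_i=0$, permitted in \eqref{dual}). Clearly $\mu_\sink=1$ via the empty walk. For any arc $ij\in E$ with $\mu_j<\infty$, prefixing $ij$ to a walk from $j$ to $\sink$ yields a walk from $i$ to $\sink$ of gain $\gamma_{ij}\gamma(W)$; taking the infimum gives $\gamma_{ij}\mu_i\le\mu_j$, and the inequality is automatic when $\mu_j=\infty$.

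The main content is showing $\mu_i>0$. Suppose for contradiction that $\mu_i=0$; then there exist walks $W_k$ from $i$ to $\sink$ with $\gamma(W_k)\to\infty$. Iteratively excising cycles at repeated vertices, decompose each $W_k$ as a simple $i$--$\sink$ path $P_k$ together with a multiset of simple cycles $\{C_{k,\ell}\}$, so that $\gamma(W_k)=\gamma(P_k)\prod_\ell\gamma(C_{k,\ell})$. Since $G$ has only finitely many simple paths and simple cycles, some fixed simple cycle $C$ with $\gamma(C)>1$ must occur as some $C_{k,\ell}$; moreover $C$ shares a vertex $u$ with the spine $P_k$ by construction, so the subpath of $P_k$ from $u$ to $\sink$ is a directed path from a node incident to $C$ to $\sink$, contradicting the hypothesis. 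The delicate step is precisely this cycle-extraction: making the walk-decomposition rigorous in a way that simultaneously produces a flow-generating simple cycle and a witnessing directed path from it to $\sink$; everything else is routine flow arithmetic and weak LP duality.
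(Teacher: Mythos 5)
Your ``if'' direction (construct $\phi$ supported on $C\cup P$ and add $\alpha\phi$ to a feasible $\bar f$) is the same argument the paper gives. Your ``only if'' direction is genuinely different in emphasis and is, in fact, more explicit than the paper's: the paper's second paragraph shows that for any feasible labeling $\mu$ and any cycle $C$ with a path to $\sink$ one has $\gamma(C)=\gamma^\mu(C)\le 1$, which is really just the forward implication rephrased via duality. You instead \emph{construct} a dual-feasible $\mu$ (via highest-gain walks to $\sink$) under the hypothesis that no bad pair exists, and then invoke weak duality; this is the content that the converse direction actually needs, so your route is the more complete one.

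There is, however, a concrete flaw in your cycle-extraction step. You assert that an extracted simple cycle $C$ ``shares a vertex $u$ with the spine $P_k$ by construction,'' but iterated cycle-excision does not guarantee this: removing one cycle can later cause the attachment vertex of an earlier-removed cycle to be excised from the walk, so that the final simple path $P_k$ is vertex-disjoint from $C$. A small example: the walk $a\to b\to c\to d\to e\to c\to b\to \sink$, processed by removing first $c\to d\to e\to c$ and then $b\to c\to b$, leaves the spine $a\to b\to \sink$, which avoids $\{c,d,e\}$. The fix is easy and you should state it: every vertex of $C$ lies on the walk $W_k$, and the suffix of $W_k$ from that vertex already contains a directed path to $\sink$ in $E$; the spine $P_k$ is not needed. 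Note the same observation is also needed to justify your earlier sentence ``Clearly $\mu_\sink=1$ via the empty walk'': the empty walk only gives $\mu_\sink\le 1$, and ruling out closed walks at $\sink$ of gain $>1$ again uses that every cycle in such a walk has all its vertices reachable to $\sink$, so $\gamma(C)\le 1$ by hypothesis. With these two patches the argument goes through.
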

\begin{proof}
If such a cycle exists, then we can increase the flow value in $t$
arbitrarily by generating flow on $C$ and sending it to $t$ via $P$.
For the converse direction, consider the dual program (\ref{dual});
recall that the labels $\mu_i$ are simply the inverses of the dual
variables. 
Since (\ref{primal}) is feasible according to  (\ref{cond:init}), the
objective is bounded if and only if (\ref{dual}) is feasible.
Assume $C\subseteq E$ is a cycle such that a path $P\subseteq E$
connects a node incident to $C$ to $t$.
Using the condition $\gamma_{pq}\mu_p/\mu_q\le 1$ on every arc $pq\in P$, it follows that in every
feasible labeling, $\mu_i$ is finite for every node $i$ incident to $C$.
 Therefore $\gamma(C)=\gamma^\mu(C)\le 1$, completing the proof.
\end{proof}

Let $V'$ denote the set of nodes $i$ such that there exists an $i-t$
path in $E$. This set $V'$ can be found by a simple search algorithm.
Boundedness can be decided by checking for a flow generating
cycle in the restriction of $G$ to $V'$. This is equivalent to finding a negative
cycle for the cost function $c_{ij}=-\log \gamma_{ij}$ and can be solved
by any
negative cycle detection algorithm, see e.g. \cite[Chapter 5.5]{amo}. 
Computations with logarithms can be avoided by devising a multiplicative analogue
of these algorithms working directly with the $\gamma_{ij}$'s.

\medskip

After removing the arc capacities, we run this algorithm to decide boundedness. If the problem is
unbounded, we terminate with optimum value $\infty$. Otherwise, we can
assume the validity of  (\ref{cond:bounded}). (Note that since all
secondary nodes have only two incoming arcs incident, all arcs used in
$C$ and $P$ are necessarily from $E_\infty$. Therefore, the same
subroutine could also be performed before the transformation.)

\subsubsection*{Auxiliary arcs}

To satisfy (\ref{cond:root}), for every node $i\in V-\sink$ for which
$it\notin E$, 
let us further add an arc $it$ to $E$ with $\gamma_{it}:=1/\bar B$.  Let
us call these  {\em auxiliary arcs}. 

The following lemma justifies our transformation.
\begin{lemma}\label{lem:transform-prop}
The transformed instance satisfies assumptions
(\ref{cond:root}), (\ref{cond:init}) and (\ref{cond:bounded}), and $\bar B$ satisfies the  assumptions on the
 encoding sizes in  Section~\ref{sec:encoding}.
 An optimal solution $f$ to the modified problem can be transformed
to an optimal solution $f'$ to the original problem in $O(m')$ time.
\end{lemma}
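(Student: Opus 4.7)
The plan is to verify each of the three assumptions in turn, then check the encoding-size conditions, and finally describe the back-transformation $f\mapsto f'$. Assumption (\ref{cond:root}) is immediate by construction of the auxiliary arcs in the third step of the transformation. For (\ref{cond:init}), I would compute $e_i(\bar f)$ case by case: at a primary node $i$, the only nonzero incident flow values are on outgoing arcs $i\,a_{qi}$ for $qi\in E'_u$, each carrying $\gamma'_{qi}u'_{qi}$ units, whose total sum exactly matches $-b_i$; at a secondary node $k=a_{ij}$ only $\bar f_{jk}=\gamma'_{ij}u'_{ij}=b_k$ is nonzero, so $e_k=0$; auxiliary arcs carry no initial flow. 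These are routine checks.

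The hard part will be verifying (\ref{cond:bounded}) after the auxiliary arcs are added, since the boundedness subroutine of step 2 only applies to the uncapacitated graph \emph{before} those arcs. I would argue as follows: suppose for contradiction that after the auxiliary arcs are added there is a cycle $C'$ with $\gamma(C')>1$ connected to $\sink$ by a path. Then $C'$ must use an auxiliary arc (else it would already exist in the pre-auxiliary graph, contradicting Lemma~\ref{lem:boundedness} applied in step 2); every auxiliary arc is incident to $\sink$, so $C'$ passes through $\sink$ and decomposes as a single auxiliary arc $it$ plus a walk $W$ from $\sink$ to $i$ entirely within the pre-auxiliary edge set $E$. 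Every simple cycle appearing in a decomposition of $W$ is connected to $\sink$ by a subpath of $W$, so Lemma~\ref{lem:boundedness} applied pre-auxiliarily forces its gain to be $\le 1$; hence $\gamma(W)\le \gamma(P)$, where $P$ is the simple $\sink$-to-$i$ path underlying $W$. The key estimate is then that $\gamma(P)$ is a quotient $N/D$ of two positive integers, each a product of numerators or denominators of $\gamma'$ values along $P$, and therefore each dividing $\bar B$ by the definition of $\bar B$ as twice the product of all input numerators and denominators; in particular $\gamma(P)\le N\le \bar B$, so $\gamma(C')=\gamma(W)/\bar B\le 1$, the desired contradiction.

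For the encoding-size conditions I would verify each bullet in turn: by construction of $\bar B$ as twice a product over all input numerators and denominators, every regular-arc gain (which equals either a $\gamma'_{ij}$ or $1$), every demand $b_i$ (a signed sum of terms of the form $\gamma'_{ji}u'_{ji}$), and every initial flow value $\bar f_{jk}=\gamma'_{ij}u'_{ij}$ is an integer multiple of $1/\bar B$ and bounded in absolute value by $\bar B$; auxiliary arcs have $\gamma_{it}=1/\bar B$ by definition. Finally, for the back-transformation, given optimal $f$ to the transformed instance set $f'_{ij}:=f_{ij}$ for $ij\in E'_\infty$ and $f'_{ij}:=f_{ik}$ for $ij\in E'_u$ with $k=a_{ij}$. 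Optimality of $f$ together with Theorem~\ref{thm:genflow-opt}(i) forces $e_k(f)=0$ at every secondary node, i.e.\ $\gamma'_{ij}f_{ik}+f_{jk}=\gamma'_{ij}u'_{ij}$, which gives both $0\le f'_{ij}\le u'_{ij}$ and, at each primary $j$, a one-line cancellation between the incoming $\gamma'_{ij}f_{ik}=\gamma'_{ij}u'_{ij}-f_{jk}$ terms and the demand $b_j=-\sum_{i:ij\in E'_u}\gamma'_{ij}u'_{ij}$, yielding $e'_j(f')=e_j(f)$ and $e'_\sink(f')=e_\sink(f)$. Symmetry of this correspondence (applied to any hypothetical better solution in the original instance) proves optimality, and the construction costs a single pass through the $m'$ arcs.
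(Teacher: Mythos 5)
Your routine verifications of $(\star)$, $(\star\star)$ and the encoding-size conditions are fine, and your explicit check that adding auxiliary arcs cannot create a flow-generating cycle reachable from $\sink$ is more detailed than what the paper offers (the paper only argues about boundedness in the pre-auxiliary graph and then silently assumes it persists). That part is a small improvement.

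The back-transformation argument, however, has a genuine gap, and the gap is exactly where you depart from the paper's route. Your key identity at a secondary node $k=a_{ij}$, namely $\gamma'_{ij}f_{ik}+f_{jk}=\gamma'_{ij}u'_{ij}$, ignores the auxiliary arc $kt$, which \emph{must} be present because $k$ has no other outgoing arc (otherwise $(\star)$ fails). The correct expansion of $e_k(f)=0$ is $\gamma'_{ij}f_{ik}+f_{jk}-f_{kt}=\gamma'_{ij}u'_{ij}$. You have not shown $f_{kt}=0$, so neither $f'_{ij}\le u'_{ij}$ nor your ``one-line cancellation'' $e'_j(f')=e_j(f)$ follows. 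The same omission kills $e'_\sink(f')=e_\sink(f)$: each auxiliary arc contributes $\gamma_{it}f_{it}=f_{it}/\bar B$ to $e_\sink(f)$, a term with no counterpart on the original side. In fact since the transformed optimum has $e_i(f)=0$ for \emph{every} $i\neq\sink$ (by Theorem~\ref{thm:genflow-opt}(i) under $(\star)$), a positive auxiliary flow $f_{it}>0$ at a primary $i$ yields $e'_i(f')=f_{it}>0$ after discarding the auxiliary arcs --- so your two maps are not mutually inverse, and ``symmetry'' alone will not close the argument.

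This is not an incidental slip: it is the substantive content of the paper's proof. The paper does not argue by symmetry at all. It takes an optimal $\mu$, defines $S_0=\{i:\mu_i=\bar B\}$ (the nodes where the auxiliary arc is tight, i.e.\ exactly where auxiliary flow can be positive), grows $S$ along residual arcs, restricts to the primary part $S'$, and then sets $\mu'_i=\infty$ on $S'$ before verifying the KKT conditions of Theorem~\ref{thm:genflow-opt}(ii) arc-class by arc-class. The nodes where $\mu'_i=\infty$ are precisely those where the original primal optimal may have $e'_i>0$ and may saturate $f'_{ij}=u'_{ij}$ on outgoing arcs --- exactly the phenomena your symmetry argument implicitly rules out. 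To salvage your approach you would need a proof that all auxiliary arcs carry zero flow at the chosen optimum, but that is not true in general (it hinges on whether $\mu_k=\bar B$), which is why the paper resorts to the infinite-label machinery. I would rewrite the optimality part of your proof along the paper's lines rather than try to patch the symmetry argument.
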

\begin{proof}
The first part is straightforward. For the second statement, let $f$ be an optimal solution to the modified problem with
an optimal labeling $\mu$ as in Theorem~\ref{thm:genflow-opt}(i).
For a secondary node $k=a_{ij}$, 
let us set $f'_{ij}:=f_{ik}$. 
Let $S_0\subseteq V$ denote the set of nodes $i\in V$ 
for which $\gamma_{it}^\mu=1$, that is $\mu_i=\bar B$.
Let $S\subseteq V$ denote the
set of nodes that can be reached from $S_0$ on a residual path
$P\subseteq E_f$.

Let $S'\subseteq V'$ denote the set of primary nodes in $S$.
Let us set $\mu'_i:=\mu_i$ if $i\in V'\setminus S'$ and 
$\mu'_i:=\infty$ if $i\in S'$.
In what follows, we shall verify the optimality conditions in  Theorem~\ref{thm:genflow-opt}(ii) for $f'$ and $\mu'$.

We first claim that $f'_{ij}\le u'_{ij}$ for all arcs $ij\in
E'_u$. This follows since for the secondary
node $k=a_{ij}$ we have $b_k=\gamma'_{ij}u'_{ij}$, and  $e_k(f)=0$ due
to the optimality of $f$.
Next, we claim that $\sink\notin S$ and therefore $\mu'_\sink=1$. Indeed, assume for a
contradiction there exists a path $P\subseteq E_f$ from a node $i\in
S_0$ to \sink.  Then $\mu_i\le 1/\gamma(P)< \bar B$ by the definition
of $\bar B$, a contradiction to $\mu_i=\bar B$. 

The condition on arcs $ij\in E'[S']$ is straightforward since
$\mu'_i=\mu'_j=\infty$. Consider an arc $ij\in E'$ with $i\in S'$,
$j\in V'\setminus S'$. If $ij\in E'_\infty$, then $ij\in E_f$,
contradicting the definition of $S$. Hence $ij\in E'_u$; 
 let $k=a_{ij}$ be the corresponding
secondary node. By definition, $ik\in E\subseteq E_f$.  By the
definition of $S'$, we must have $kj\notin E_f$, that is,
$f_{jk}=0$ and therefore $f'_{ij}=u'_{ij}$ due to the constraint $e_k(f)= b_k$.  Then
$\gamma_{ij}\mu_i=\infty>\mu_j$, as required. It follows similarly that $f_{ij}=0$
for all arcs $ij\in E'$ with $i\in V'\setminus S'$,
$j\in  S'$, and they satisfy $\gamma_{ij}\mu_i<\infty=\mu_j$. 

Let us focus on arcs $ij\in E'[V'\setminus S']$; assume $ij\in E'_u$ and
$0<f'_{ij}<u'_{ij}$. This means that for the corresponding secondary
node $k=a_{ij}$, we had $f_{ik},f_{jk}>0$, and thus
$\gamma_{ij}\mu_i=\mu_k$, and $\mu_k=\mu_j$, implying
$\gamma_{ij}\mu'_i=\mu'_j$. 
Note that $k\notin S_0$ and $e_k(f)=0$ implies that $f_{ij}\le u'_{ij}$, therefore 
$f'_{ij}=f_{ij}$ on all such arcs. The other cases, including the case
of arcs in $E'_\infty$, follow similarly. 

It is left to prove that $e_i(f')=0$ whenever $i\in V'\setminus S'$.
By definition, $i\notin S_0$ and hence $f_{it}=0$. 
For every  incoming arc $ji$ 
with secondary node $k=a_{ji}$, we have
$f_{jk}=\gamma'_{ji}(u'_{ji}-f'_{ji})$.
Together with $e_i(f)=0$ and
the definition of $b_i$, this implies $e_i(f')=0$. 
\end{proof}

\subsection{Linear programs with two nonzeros per column}\label{sec:lp2}
In this Section, we show how our algorithm can be used to solve
arbitrary linear feasibility problems of the form \eqref{lp2}.

The main part of this argument was given by Hochbaum
\cite{hochbaum04}, showing how an arbitrary instance of \eqref{lp2}
can be transformed to another one where every column of the matrix
$A$ contains exactly one
positive entry and a $-1$ entry. 
With the rows corresponding
to nodes and the columns to arcs, let us use $\gamma_{ij}>0$ to denote
the
positive entry in row $j$ and column $ij$.
 (The construction creates two copies
of the vertex set, and columns with two positive or two negative
entries are represented by two arcs crossing between the copies,
whereas columns with two different signs are represented by two arcs,
one in each copy.)
 The transformed version
may contain upper capacities on the arcs. These can be removed using
the same construction as in Section~\ref{sec:transform}, at the cost
of increasing the number of nodes to $O(m)$.
After removing the arc capacities, we can
write the system in the form
\begin{align}
\sum_{j:ji\in E}\gamma_{ji}f_{ji}-\sum_{j:ij\in E}f_{ij}&= b_i\quad
\forall i\in V\tag{$LP2M$}\label{lp2m}\\
f\ge 0&\notag.
\end{align}
Given an instance of  \eqref{lp2m}, let the value $\bar B$ be chosen as an integer multiple
of the products of all numerators and denominators of the
$\gamma_{ij}$ values, and furthermore, assume $|b_i|\le \bar B$ and
$b_i$ is an integer multiple of $1/\bar B$ for all $i\in V$.

\eqref{lp2m} is an uncapacitated generalized
flow feasibility problem, where all node demands must be exactly
met ($M$ stands for monotone, following Hochbaum's terminology.)  Compared to the formulation \eqref{primal}, the differences are as
follows: {\em (i)} \eqref{lp2m} is a feasibility problem and does not have
a distinguished sink node, in contrast to the
optimization problem \eqref{primal}; {\em (ii)} the node demands must be
exactly met in \eqref{lp2m}, whereas in
\eqref{primal}, nodes are allowed to have excess.
For this reason, we introduce two relaxations of \eqref{lp2m} with
inequalities.
\begin{align}
\sum_{j:ji\in E}\gamma_{ji}f_{ji}-\sum_{j:ij\in E}f_{ij}&\ge b_i\quad
\forall i\in V\tag{$LP2M_\ge$}\label{lp2mg}\\
f\ge 0&\notag\\
~\notag\\
\sum_{j:ji\in E}\gamma_{ji}f_{ji}-\sum_{j:ij\in E}f_{ij}&\le b_i\quad
\forall i\in V\tag{$LP2M_\le$}\label{lp2ml}\\
f\ge 0&\notag
\end{align}
Our main insight (Lemma~\ref{lem:both-feas} below) is that if both
these relaxations are feasible, then \eqref{lp2m} is also feasible,
and a solution can be found efficiently provided the solutions to the
relaxed instances.

The second relaxation  \eqref{lp2ml}
can be reduced to \eqref{lp2mg} by reversing all arcs in $E$, setting
$\gamma_{ji}=1/\gamma_{ij}$ on the reverse arc $ji$ of $ij\in E$, and
changing the node demands to $-b_i$. We show that \eqref{lp2mg} -- and
consequently, \eqref{lp2ml} -- can be solved using our algorithm for \eqref{primal}.

\subsubsection*{Solving  \eqref{lp2mg}}
As a preprocessing step, we identify the set $Z$ of nodes
that can be reached via a path in $E$ from a flow generating cycle in
$E$. That is, $i\in Z$ if there exists a cycle $C\subseteq E$,
$\gamma(C)>1$, and a path $P\subseteq E$ connecting a node of $C$ to
$i$. This set $Z$ can be found efficiently using algorithms for
negative cycle detection, similarly as in
Section~\ref{sec:transform}. Using the flow generating cycles, arbitrary
demands $b_i$ for $i\in Z$ can be met. This solves \eqref{lp2mg} if
$Z=V$; in the sequel let us assume $V\setminus Z\neq\emptyset$.
By the definition of $Z$, there
is no arc in $E$ between $Z$ and $V\setminus Z$. If \eqref{lp2mg}  is
feasible, then there is a feasible solution with no arc carrying flow from
$V\setminus Z$ to $Z$.

Thus we can reduce the problem to solving \eqref{lp2mg}  on $V\setminus Z$. 
Let us add an artifical sink node $t$ to $V$. For every
$i\in V\setminus Z$ with $b_i>0$, 
add a new arc $ti$  with gain factor
$\gamma_{ti}=1$. For every $i\in V\setminus Z$, add an $it$ arc with $\gamma_{it}=1/\bar B$.
This gives an instance of \eqref{primal} with sink
$t$. The condition  \eqref{cond:root} is guaranteed by the $it$ arcs;
for \eqref{cond:init}, we have a simple feasible solution:
send $b_i$
units of flow on $\gamma_{ti}$ for every $i$ with $b_i>0$, and set the
flow to 0 on all other arcs. The boundedness condition
\eqref{cond:bounded} 
is guaranteed by Lemma~\ref{lem:boundedness}; note that by the definition
of $Z$, there are no flow generating cycles in $E[V\setminus Z]$.

\begin{lemma}
Let $f$ be an optimal solution to the  \eqref{primal} instance as
constructed above.
Then \eqref{lp2mg} is feasible if and only if $f_{ti}=0$ for all $i\in
V\setminus Z$.
\end{lemma}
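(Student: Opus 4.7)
The proof splits into the two directions of the biconditional, with the converse direction relying on a generalized-flow decomposition and the forward direction on a direct restriction argument.

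For the direction ``\eqref{lp2mg} feasible $\Rightarrow f_{ti}=0$ for all $i$,'' I would exploit the special asymmetric gains $\gamma_{ti}=1$ and $\gamma_{jt}=1/\bar B$ on the auxiliary arcs. Since $f$ is optimal, Theorem~\ref{thm:genflow-opt}(i) gives $e_i(f)=0$ for every $i\in V\setminus Z$, so the only node with nonzero excess is $t$. Hence any arc $ti_0$ carrying positive flow in $f$ must, in the generalized-flow decomposition of $f$, participate in a cycle $C$ of the form $t\to i_0\to\cdots\to j\to t$, carrying some positive amount $\alpha$ at $t$. The gain of such a cycle is
\[
\gamma(C)\;=\;\gamma_{ti_0}\cdot\gamma(\text{int})\cdot\gamma_{jt}\;=\;\frac{\gamma(\text{int})}{\bar B},
\]
where $\gamma(\text{int})$ is the gain product along the internal $i_0$-to-$j$ path. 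By the choice of $\bar B$ as an integer multiple of the product of numerators and denominators of all $\gamma_{ij}$, we obtain $\gamma(\text{int})\le\bar B$, whence $\gamma(C)\le 1$. If $\gamma(C)<1$, removing a positive amount along $C$ increases $e_t(f)$ by $(1-\gamma(C))\alpha$, contradicting optimality. In the borderline case $\gamma(C)=1$, the removal is objective-neutral, so iteratively removing such cycles produces an equivalent optimum with $f_{ti}=0$ for every $i$; under the typical encoding choice that makes $\bar B$ strictly larger than any internal path-gain numerator, this borderline case does not arise and every optimum already satisfies $f_{ti}=0$.

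For the reverse direction ``$f_{ti}=0$ for all $i\Rightarrow$ \eqref{lp2mg} feasible,'' I would exhibit a feasible solution to \eqref{lp2mg} directly. Let $g:=f|_E$ be the restriction of $f$ to the original arc set. From $e_i(f)=0$ and $f_{ti}=0$, the node equality at $i\in V\setminus Z$ becomes
\[
\sum_{j:ji\in E}\gamma_{ji}\,g_{ji}-\sum_{j:ij\in E}g_{ij}=b_i+f_{it}\;\ge\; b_i,
\]
so $g$ is a feasible solution to \eqref{lp2mg} restricted to $V\setminus Z$. To extend $g$ to a feasible solution on all of $V$, I would use the defining property of $Z$: every $i\in Z$ is reachable by a directed path in $E$ from some flow-generating cycle in $E$, so arbitrarily large excess can be created at $i$ by circulating flow along such a cycle and routing it to $i$. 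This meets the demands on $Z$ without disturbing $g$ on $V\setminus Z$, because, by the construction of $Z$ in the preprocessing, there are no arcs from $V\setminus Z$ to $Z$, so the extension can be combined additively.

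The principal difficulty is making the cycle-decomposition argument of the forward direction fully rigorous in the generalized-flow setting, since each ``unit'' of flow along a cycle is rescaled by the running gain product and the naive non-generalized cycle-cancellation intuition does not directly apply. I would invoke the generalized-flow decomposition employed in the proof of Theorem~\ref{thm:close-opt} (given in the paper's appendix) to formally identify the cycle $C$ containing a given $ti_0$ arc and to compute the objective change from its removal. A secondary subtlety is verifying $\gamma(\text{int})\le\bar B$, which, while essentially a direct consequence of the encoding choice in Section~\ref{sec:encoding}, requires careful numerator/denominator bookkeeping along the potentially long internal path from $i_0$ to $j$.
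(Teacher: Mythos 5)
Your reverse direction ($f_{ti}=0$ for all $i \Rightarrow$ \eqref{lp2mg} feasible) is essentially the paper's argument: restrict $f$ to the original arcs and read off feasibility on $V\setminus Z$, then use the preprocessing on $Z$. That part is fine.

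The forward direction, however, has a genuine gap. You assert that because $e_i(f)=0$ for every $i\in V\setminus Z$, any arc $ti_0$ carrying positive flow must lie on a cycle through $t$ in a generalized-flow decomposition. This is false: the optimality condition $e_i(f)=0$ means $\sum_{j}\gamma_{ji}f_{ji}-\sum_j f_{ij}=b_i$, not $=0$, so nodes with $b_i>0$ act as flow \emph{absorbers}. Flow leaving $t$ on $ti_0$ can simply terminate at $i_0$ (or downstream) to satisfy demand without ever returning to $t$. Concretely, take $V\setminus Z=\{i_0\}$ with $b_{i_0}>0$ and no original arcs; the unique optimum has $f_{ti_0}=b_{i_0}>0$, $f_{i_0t}=0$, and there is no cycle. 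Indeed, notice that your argument never actually uses the hypothesis that \eqref{lp2mg} is feasible, so if it worked it would prove $f_{ti}=0$ unconditionally, contradicting the example. (The ``$\gamma(C)=1$ borderline case'' remark is also a hand-wave that the paper's encoding choice does not quite license, but that is secondary.)

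The paper's proof of this direction is dual, not primal: it proves the contrapositive. Assuming $f_{tj}>0$ for some $j$, it takes the optimal labeling $\mu$, defines $S_0$ as the nodes with $\mu_i=\bar B$, $S$ as those reachable from $S_0$ by a residual path in $E_f$, and $X=V\setminus(Z\cup S)$. It shows $j\in X$ (using $\mu_j=1$ and the magnitude gap to $\bar B$), that no arc enters $X$, that $e_i(f)=0$ and $f_{it}=0$ on $X$, and then exhibits $y_i:=1/\mu_i$ on $X$ (zero elsewhere) as a Farkas certificate of infeasibility for \eqref{lp2mg}. If you want a primal/decomposition route to the same conclusion, you would have to use the hypothesis that \eqref{lp2mg} has some feasible solution $g$ and decompose the difference $f-g$ in the residual network, rather than decomposing $f$ itself; as written, your argument cannot detect infeasibility at all.
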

\begin{proof}
Consider an optimal solution $f$ to the instance of \eqref{primal}
with an optimal labeling $\mu$. If $f_{ti}=0$ for 
 $i\in V\setminus Z$, then $f$ restricted to $V\setminus Z$ is a
 feasible solution \eqref{lp2mg}. Conversely, assume $f_{tj}>0$ for a
 certain node $j\in V\setminus Z$; we show that  \eqref{lp2mg} is infeasible.

As in the proof of Lemma~\ref{lem:transform-prop}, we let $S_0$ denote
the set of nodes $i\in V\setminus Z$ with $\mu_i=\bar B$, and let $S$
be the set of nodes that can be reached from $S_0$ on a residual path
in $E_f$. We claim that $j\notin S$. To see this, first observe that
$\mu_j=1$ because of $f_{tj}>0$.
If there were a path $P\subseteq E_f$ from a node $i\in S$
to $j$, then $1\ge \gamma^\mu(P)=\gamma(P)\mu_i/\mu_j=\gamma(P)\bar B$
gives a contradiction to the choice of $\bar B$.

Therefore $X=V\setminus (Z\cup S)$ contains $j$, and there is no arc
entering this set. Further,
 $e_i(f)=0$  and  $f_{it}=0$ for every $i\in X$. Then
 $y_i:=1/\mu_i$ for  $i\in X$ and $y_i:=0$ for $i\notin X$ gives a Farkas certificate
of infeasibility for \eqref{lp2mg}.\footnote{The Farkas
certificate is described after the proof of Lemma~\ref{lem:both-feas}.}
 Indeed, $y\ge 0$,  $y_i-y_j\gamma_{ij}\ge 0$ holds for every arc
$ij\in E$, and $\sum_{i\in V}b_iy_i>0$ because
\[
\sum_{i\in V}b_iy_i=\sum_{i\in X}b_i^\mu=
\sum_{i\in X} \sum_{j\in  V\cup\{t\}:ji\in E} \gamma_{ji}^\mu f_{ji}^\mu-\sum_{j\in
  V:ji\in E} f_{ji}^\mu=\sum_{i\in V} f_{ti}^\mu>0,
\]
 completing the proof. Here we used that $\gamma_{ji}^\mu=1$ whenever $f_{ji}>0$.
\end{proof}

\subsubsection*{Solving  \eqref{lp2m}}
We solve \eqref{lp2mg} as described above, and \eqref{lp2ml} the same
way, after reversing the arcs. If either of the two problems is
infeasible, then \eqref{lp2m} is also infeasible. Assume now that $f$
is a feasible solution to 
\eqref{lp2mg}, and $g$ is a feasible solution to \eqref{lp2ml}. We
show that in this case the equality version \eqref{lp2m} is also
feasible. To prove this, we use
a flow decomposition of the difference of the two solutions $f$ and
$g$ to transform $g$ to a solution of \eqref{lp2ml}. 

\begin{lemma}\label{lem:both-feas} 
Given feasible solutions to \eqref{lp2mg} and \eqref{lp2ml}, a
feasible solution to  \eqref{lp2m} can be found in strongly polynomial time.
\end{lemma}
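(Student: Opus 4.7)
My plan is to reduce Lemma~\ref{lem:both-feas} to one additional capacitated generalized flow computation, using $f$ and $g$ jointly as a starting certificate.

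Set $\delta_i := e_i(f) - b_i \ge 0$ and $\epsilon_i := b_i - e_i(g) \ge 0$. The target $h$ must have excess exactly $b_i$, which lies between $e_i(g)$ and $e_i(f)$ at every node. The key object is the pseudo-flow $\phi := f-g$ viewed on the residual graph $E_g$ (so that $\phi_{ij}<0$ corresponds to a forward flow of $|\phi_{ij}|$ on the reverse arc $ji \in E_g$, with gain $1/\gamma_{ij}$). Linearity of the excess operator gives $e_i(\phi) = e_i(f) - e_i(g) = \delta_i + \epsilon_i$ at every $i$.

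The plan is to find a pseudo-flow $\psi$ on $E_g$ that is pointwise bounded by $\phi$ in magnitude and agrees with $\phi$ in sign on every arc, and has $e_i(\psi) = \epsilon_i$ for each $i$. Then I would set $h := g + \psi$, interpreting $\psi$ on reverse arcs as a decrease of $g$. Because $\psi$ is ``between $0$ and $\phi$'' on the residual graph and $g, g+\phi = f \ge 0$, a direct check gives $h \ge 0$; and $e_i(h) = e_i(g) + e_i(\psi) = e_i(g) + \epsilon_i = b_i$, exactly the required equalities.

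Finding $\psi$ is itself a capacitated generalized flow problem: one wants a flow on $E_g$ with arc capacities $|\phi_{ij}|$ and prescribed node excesses $\epsilon_i$. Equivalently, letting $\eta := \phi - \psi$, one wants $0 \le \eta \le \phi$ (componentwise on $E_g$) with $e_i(\eta) = \delta_i$. Using the arc-capacity-removal transformation of Section~\ref{sec:transform}, this becomes an instance of the uncapacitated formulation \eqref{primal}; attach an auxiliary sink $t^*$ with an arc $i\to t^*$ of small gain for each node so that $(\star)$ is satisfied, and take $\phi$ itself as the initial feasible solution for $(\star\star)$. Boundedness $(\star\star\star)$ follows because $E_g$ contains no flow-generating cycle reaching $t^*$ (any such cycle together with $\phi$ would let us increase flow beyond $g$'s residual capacity, contradicting $g \ge 0$). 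Then invoke Theorem~\ref{thm:main}: the resulting optimum either saturates the demand $\delta_i$ at every node (yielding $\eta$, hence $\psi$, hence $h$), or it does not.

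The main obstacle is verifying that the auxiliary instance is actually feasible, i.e., that the optimal $\eta$ indeed meets all demands $\delta_i$. This is where the existence of both $f$ and $g$ is used: were the auxiliary problem infeasible, Theorem~\ref{thm:tight-flow} together with the infeasibility certificate would yield a partition of $V$ and a labeling from which, via the sign-decomposition $y = y^+ - y^-$ argument (splitting on the nodes where the certificate is positive vs.\ negative and using crucially that each column of $A$ has exactly one $-1$ and one $\gamma_{ij}>0$), one extracts a Farkas certificate for infeasibility of either \eqref{lp2mg} or \eqref{lp2ml}, contradicting our assumption. The overall running time is dominated by the single call to Theorem~\ref{thm:main}, hence strongly polynomial.
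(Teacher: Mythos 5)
Your proposal takes a genuinely different route from the paper: you reduce the construction of $h$ to a single additional call to Theorem~\ref{thm:main} on a capacitated auxiliary instance over the residual graph $E_g$, whereas the paper computes the flow decomposition of $h=f-g$ into elementary flows directly (in $O(nm)$ time), observes that only Type~II flows carry excess, and greedily accumulates those path flows at each node until the equalities are met. The decomposition step is both simpler and cheaper than a full run of the main algorithm.

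However, your reduction has a genuine gap at the point where you translate the target system $0\le\eta\le\phi$, $e_i(\eta)=\delta_i$ for \emph{all} $i$, into an instance of \eqref{primal}. Any such instance requires designating a sink $t^*$, and \eqref{primal} enforces $e_i\ge 0$ only at non-sink nodes (with $e_i=0$ at optimality, by Theorem~\ref{thm:genflow-opt}(i)); nothing is enforced at $t^*$. If $t^*$ is a freshly added auxiliary node, the optimal $\eta$ must route its excess into $t^*$ through the artificial $i\to t^*$ arcs, and restricting $\eta$ to $E_g$ then undercounts the balance $N_i(\eta)$ at precisely the nodes $i$ carrying flow on $i\to t^*$, breaking the equality $e_i(\eta)=\delta_i$ on $E_g$. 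If $t^*$ is instead chosen among the nodes of $E_g$, then after the run you still have no guarantee that the demand at $t^*$ itself is met. In effect you are solving an inequality relaxation of the very equality problem you need, which is exactly the difficulty that motivated passing through \eqref{lp2mg} and \eqref{lp2ml} in the first place. Relatedly, your Farkas argument is aimed at the wrong obstruction: the auxiliary \eqref{primal} instance is trivially feasible since $\phi$ itself is a feasible solution, so there is no infeasibility certificate to extract; what you would actually need to rule out is a slack excess at the sink, which the Farkas machinery you invoke does not address. (The paper does give a Farkas-style \emph{existence} argument after the constructive proof, but it proceeds by splitting a certificate for \eqref{lp2m}, not by solving an auxiliary flow problem, and it is explicitly non-algorithmic.)
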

\begin{proof}
Let  $f$ be a feasible solution to 
\eqref{lp2mg}, and $g$  a feasible solution to \eqref{lp2ml}. Then for
every $i\in V$, $e_i(f)\ge 0\ge e_i(g)$ holds.
Let us define the flow $h$ as
\[
h_{ij}:=
\begin{cases}
f_{ij}-g_{ij} &\mbox{ if }ij\in E,\ f_{ij}> g_{ij}\\
\gamma_{ji}(f_{ji}-g_{ji}) &\mbox{ if }ji\in E,\ f_{ji}>g_{ji}.
\end{cases}
\]
Let $H\subseteq \ole E$ denote the support of $h$; clearly, $h_{ij}>0$
for every $ij\in H$. With the convention $h_{ij}=-\gamma_{ji}h_{ji}$,
we have
$f=g+h$.
Since $e_i(f)\ge 0\ge e_i(g)$, the inequality
$\sum_{j:ji\in  H}\gamma_{ji}h_{ji}\ge \sum_{j:ij\in H}h_{ij}$ holds for
every $i\in V.$

We apply the standard generalized flow decomposition for $h$ as in
e.g.  \cite{Gondran84,Goldberg91}: every generalized flow can be
written as the sum of five types of  elementary flows. Such a decomposition can be
found in $O(nm)$ time, and the number of terms is at
most the number of arcs with positive flow.

Among the five types of elementary
flows listed in \cite{Goldberg91}, Types I and III cannot be present
the decomposition of $h$, as there are no deficit nodes (more outgoing than
incoming flow). Type IV are unit gain cycles, and Type V are pairs of
flow generating and flow absorbing cycles connected by a path (``bicycles''); these
do not generate any excess or deficit and are not needed for out
argument.
The important one is Type II: a flow generating cycle and a path
connecting it to an excess node (more incoming than outgoing flow).

We now describe how to modify $g$ to a feasible solution $g'$ to
\eqref{lp2m} using the decomposition of $h$.
Consider a node $i$ with $e_i(f)>e_i(g)$; this is an excess node for
$h$. We could add all Type II flows in the decomposition terminating at $i$ to increase
$e_i(g)$ to $e_i(f)$. Since we want achieve the equality $e_i(g')=0$,
we only use some of the Type II flows. We add them one-by-one until
$e_i(g')$ becomes nonnegative. Then for the last flow, we add only a
fractional amount to set precisely $e_i(g')=0$.
Repeating this for every $i$ with $e_i(f)>e_i(g)$, we obtain a
feasible solution $g'$ to \eqref{lp2m}.
\end{proof}

We also present a second proof of the claim that
if both \eqref{lp2mg} and \eqref{lp2ml} are feasible, then
\eqref{lp2m} is also feasible. The proof is based on Farkas's lemma
and is not algorithmic, but may contribute to a better understanding
of the claim.

We show that if \eqref{lp2m} is infeasible, then either \eqref{lp2mg}
or \eqref{lp2ml} is also infeasible. A Farkas-certificate to the
infeasibility of \eqref{lp2m} can be written as
\begin{align*}
y_i-y_{j}\gamma_{ij}&\ge 0\quad \forall ij\in E\\
\sum_{i\in V} b_iy_i &> 0
\end{align*}
A Farkas-certificate to the infeasibility \eqref{lp2mg} is the same with the additional
constraint $y\ge 0$, whereas the certificate to the infeasibility of
\eqref{lp2ml} is with $y\le 0$. In the case of \eqref{lp2mg},
$\mu_i=1/y_i$ gives the usual labeling.

Let us define the sets $Y^+:=\{i\in V: y_i>0\}$ and $Y^-:=\{i\in V:
y_i<0\}$. Further, let $y^+_i:=y_i$ if $i\in Y^+$ and $y^+_i:=0$
otherwise; similarly, let $y^-_i:=y_i$ if $i\in Y^-$ and $0$ outside
$Y^-$.

We claim that the $y^+_i-y^+_j\gamma_{ij}\ge 0$ and
$y^-_i-y^-_j\gamma_{ij}\ge 0$ hold for every $ij\in E$.
We only verify this for $y^+$; the proof is the same for $y^-$. If
$i,j\in Y^+$, then this holds because $y^+$ is identical to $y$ inside
$Y^+$. If $i,j\in V\setminus Y^+$, then $y^+_i=y^+_j=0$ and thus the
claim is trivial. Next, let $i\in Y^+$ and $j\in V\setminus Y^+$. The
claim follows by $y^+_i>0$,  and $y^+_j=0$. Finally, we claim that
there is no $ij\in E$ with 
$i\in V\setminus Y^+$, $j\in Y^+$. Indeed, this would
mean $y_i-y_j\gamma_{ij}<0$, contradicting the choice of $y$.

 Since
$0<\sum_{i\in V}b_iy_i=\sum_{i\in V}b_iy^+_i+\sum_{i\in V}b_iy^-_i$, either
$\sum_{i\in V}b_iy^+_i>0$ or $\sum_{i\in V}b_iy^-_i>0$. In the first
case, $y^+$ is an infeasibility certificate for \eqref{lp2mg}, and in
the second case, $y^-$ is an infeasibility certificate for \eqref{lp2ml}.

\section{Conclusion}\label{sec:final}

We have given a strongly polynomial algorithm for the generalized flow
maximization problem, and also for solving feasibility LPs with at
most two nonzero entries in every column of the constraint matrix.
A natural next question is to address the
minimum cost generalized flows, or equivalently, finding optimal
solutions to LPs with two nonzero entries per column.

In contrast to the vast
literature on the flow maximization problem, there is only one weakly
polynomial combinatorial algorithm known for this setting, the one by Wayne \cite{Wayne02}.
This setting is more challenging since the dual structure cannot be
characterized via the convenient relabeling framework, and thereby most tools
for minimum cost circulations, including the scaling approach also
used in this paper, become difficult if not impossible to apply.

\medskip

Another possible line of research would be to extend the flow maximization
algorithm to nonlinear settings. The paper \cite{Vegh11} gave a simple
scaling algorithm for {\em concave generalized flows}, where instead
of the gain factors $\gamma_{e}$, there is a concave increasing
function $\Gamma_{e}(.)$ associated to every arc $e$. In
\cite{Vegh11b}, a strongly polynomial algorithm is given to the
analogous problem of minimum
cost circulations with separable convex cost functions satisfying
certain assumptions. One could combine the techniques of \cite{Vegh11}
and \cite{Vegh11b} with the ideas of the current paper to obtain strongly
polynomial algorithms for some special classes of concave generalized
flow problems. This could also lead to strongly polynomial algorithms for
certain market equilibrium computation problems, see \cite{Vegh11}.

\subsection*{Acknowledgment}

The author is grateful to Joseph Cheriyan, Ian Post, and the anonymous
referees for several suggestions that helped to improve the presentation.

\bibliographystyle{abbrv}
\bibliography{strongly}

\section*{Appendix}\label{sec:intermed}

{
\renewcommand{\thetheorem}{\ref{thm:close-opt}}
\begin{theorem}
Let $(f,\mu)$ be a $\Delta$-feasible pair. Then there exists an optimal solution
$f^*$ such that 
\[
||f^\mu-{f^*}^\mu||_\infty \le Ex^\mu(f)+(|F^\mu|+1)\Delta.
\]
\end{theorem}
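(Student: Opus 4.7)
The plan is to reduce to the conservative case and then invoke a standard generalized flow decomposition argument.

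\textbf{Step 1 (Reduction via Lemma~\ref{lem:make-conservative}).} I would first apply Lemma~\ref{lem:make-conservative} to obtain the flow $\tilde f$ (obtained from $f$ by zeroing out all non-tight arcs) together with the labeling $\mu$, which is now conservative for $\tilde f$ and satisfies $Ex^\mu(\tilde f)\le Ex^\mu(f)+|F^\mu|\Delta$. By construction, $\tilde f^\mu_{ij}=f^\mu_{ij}$ on every tight arc, while on every arc $ij\in F^\mu$ we have $\tilde f^\mu_{ij}=0$ and $f^\mu_{ij}\le \Delta$ by $\Delta$-feasibility. Hence $\|f^\mu-\tilde f^\mu\|_\infty\le \Delta$, and it suffices to exhibit an optimum $f^*$ with $\|\tilde f^\mu-{f^*}^\mu\|_\infty\le Ex^\mu(\tilde f)$; the triangle inequality will yield the desired bound $Ex^\mu(f)+(|F^\mu|+1)\Delta$.

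\textbf{Step 2 (Work in the relabeled graph).} Pass to the relabeled instance, where every tight arc has gain exactly $1$. The flow $\tilde f^\mu$ is supported only on tight arcs (since $\mu$ is conservative for $\tilde f$), and every residual arc in $E_{\tilde f}$ has relabeled gain at most $1$: forward arcs satisfy $\gamma^\mu_{ij}\le 1$ because $\mu$ is a dual feasible labeling, while reverse arcs exist only for tight forward arcs and hence also have relabeled gain $1$.

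\textbf{Step 3 (Difference flow and decomposition).} Pick any optimum $f^*_0$ and form the signed ``difference flow'' $h$ on $\overleftrightarrow{E}$ in relabeled units, where $h_{ij}=({f^*_0})^\mu_{ij}-\tilde f^\mu_{ij}$ on $ij\in E$, viewed as a nonnegative amount on whichever of the arcs $ij$ or $ji$ gives the correct sign. Then $h$ is a (generalized) flow on $\overleftrightarrow{E}$ whose node excesses are $-e^\mu_i(\tilde f)\le 0$ for $i\neq t$ and a compensating surplus at $t$. Apply the standard generalized flow decomposition (see the five elementary types listed in \cite{Goldberg91,Gondran84}); since $h$ has no excess nodes outside $t$, the decomposition consists only of: paths from deficit nodes (the positive-excess nodes of $\tilde f$) to $t$, unit-gain cycles, flow-generating/absorbing cycles, and bicycles.

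\textbf{Step 4 (Strip the cycles, bound the paths).} Cancelling any cycle or bicycle from $f^*_0$ preserves node excesses and hence preserves both feasibility (the contributions are nonnegative by construction, so subtracting them keeps $f\ge 0$) and primal optimality. Cancelling all such components yields an optimal $f^*$ whose difference from $\tilde f$ is supported only on the path-components of the decomposition. Each such path $P_k$ originates at some $v_k$ with excess $\alpha_k:=e^\mu_{v_k}(\tilde f)$, and since every residual arc has relabeled gain $\le 1$, the relabeled flow along $P_k$ is at most $\alpha_k$ on every arc. Summing over all paths, the total change on any arc is bounded by $\sum_k\alpha_k\le Ex^\mu(\tilde f)$, yielding $\|\tilde f^\mu-{f^*}^\mu\|_\infty\le Ex^\mu(\tilde f)$, which combined with Step 1 completes the proof.

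\textbf{Main obstacle.} The technical worry is Step 4: one must verify that removing flow-generating/absorbing cycles and bicycles from $f^*_0$ actually produces a feasible, and still optimal, generalized flow. The nonnegativity of each decomposition piece makes feasibility routine, and since cycles and bicycles have zero net effect on all node excesses, optimality is preserved. The slightly subtle point is making sure that after path contributions are summed with possible cancellations between paths traversing a common arc in opposite directions, the bound $Ex^\mu(\tilde f)$ is still valid; this follows because cancellations only decrease the $\infty$-norm.
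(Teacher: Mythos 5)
Your overall plan is the right one and matches the paper's: replace $f$ by the conservative $\tilde f$ via Lemma~\ref{lem:make-conservative}, bound $\|\tilde f^\mu-{f^*}^\mu\|_\infty$ by $Ex^\mu(\tilde f)$ through a flow decomposition of the difference, and put the pieces together with the triangle inequality. But Step~4 has a genuine gap. Of the five elementary components of a generalized flow decomposition, only the unit-gain cycles (Type~IV) and the bicycles (Type~V) have ``zero net effect on all node excesses.'' A Type~II component (flow-generating cycle plus a path to an excess node, here necessarily $t$) changes $e_t$ when you subtract it, and a Type~III component (a path from a deficit node to a flow-absorbing cycle) changes the excess at its starting node. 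Subtracting either one from $f^*_0$ destroys optimality -- and in the Type~III case it even creates a node with positive excess. So the assertion ``cancelling any cycle or bicycle preserves node excesses'' is simply false, and your proof does not establish that these problematic components are absent from the decomposition.

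That absence is precisely where the paper uses an optimal \emph{dual} solution $\mu^*$, which you never invoke. The observation that $H\subseteq E_{\tilde f}$ and $\mu$ is conservative gives $\gamma(C)\le 1$ for every cycle $C\subseteq H$, which rules out flow-generating cycles and hence Type~II; but to rule out flow-absorbing cycles you need $\gamma(C)\ge 1$, which requires using $\overleftarrow{C}\subseteq E_{f^*_0}$ together with the conservativeness of some optimal $\mu^*$ for $f^*_0$ (equivalently, the absence of flow-generating cycles in the residual graph of an optimal flow). The paper combines exactly these two conservativeness facts and, additionally, chooses $f^*$ minimizing $\|\tilde f-f^*\|_1$; that minimality then kills even the unit-gain cycles, so $H$ is acyclic outright and no ``cancellation of cycle components'' step is needed at all. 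If you keep your approach of taking an arbitrary optimum $f^*_0$ and stripping cycles, you must supplement Step~4 with: (a) an argument using $\mu^*$ that Types~II, III, and V are absent, leaving only unit-gain cycles among the cyclic components, and (b) the observation that only unit-gain cycles can be removed without disturbing excesses. With that patch your argument goes through and matches the bound in the paper; as written it does not.

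Two smaller remarks. First, your $\alpha_k$ is the amount the $k$-th path draws from $v_k$, not the full excess $e^\mu_{v_k}(\tilde f)$; the sum $\sum_k \alpha_k = Ex^\mu(\tilde f)$ still holds, but the notation as written is off. Second, the ``cancellations only decrease the $\infty$-norm'' remark in your main-obstacle paragraph conflates two distinct things: signed cancellation between path contributions on a shared arc (harmless, as you say) versus removal of cycle components from $f^*_0$ (not harmless, as above).
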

\addtocounter{theorem}{-1}
}

\begin{proof}
First, let us modify $(f,\mu)$ to a conservative pair $(\tilde f,\mu)$
by setting the flow values on non-tight arcs to 0, as in
Lemma~\ref{lem:make-conservative}. We shall prove the existence of an
optimal $f^*$ such that
\begin{equation}
||\tilde f^\mu-{f^*}^\mu||_\infty \le Ex^\mu(\tilde f).\label{close-opt}
\end{equation}
This implies the claim, since Lemma~\ref{lem:make-conservative} asserts 
$Ex^\mu(\tilde f)\le Ex^\mu(f)+|F^\mu_f|\Delta$, and $||\tilde f^\mu-f^\mu||_\infty\le \Delta$ as the two flows differ only on non-tight arcs.

Let us pick an optimal solution $f^*$ to (\ref{primal}) such that
$||\tilde f-f^*||_1$ is minimal, and let $\mu^*$ be an optimal
solution to (\ref{dual}). Note that because of (\ref{cond:root}),
all values of $\mu$ and $\mu^*$ are finite.
 We use a similar argument as in the proof of
 Lemma~\ref{lem:both-feas}. Let us define
\[
h_{ij}:=
\begin{cases}
f^*_{ij}-\tf_{ij} &\mbox{ if }ij\in E,\ f^*_{ij}> \tf_{ij}\\
\gamma_{ji}(f^*_{ji}-\tf_{ji}) &\mbox{ if }ji\in E,\ f^*_{ji}>\tf_{ji}.
\end{cases}
\]
Let $H\subseteq \ole E$ denote the support of $h$; clearly, $h>0$
and $H\subseteq E_\tf$ whereas $\obe H\subseteq E_{f^*}$. Again, with the
convention $h_{ij}=-\gamma_{ji}h_{ji}$, we have $f^*=\tf+h$. 

\begin{claim}\label{cl:nocycle}
The arc set $H$ does not contain any directed cycles.
\end{claim}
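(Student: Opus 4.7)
The plan is to argue by contradiction with the choice of $f^*$ as an optimal solution minimizing $\|\tf-f^*\|_1$. Suppose $H$ contains a directed cycle $C$. I will first establish that $\gamma(C)=1$, and then cancel a small amount of flow around $C$ to produce another optimal $f'$ that is strictly closer to $\tf$ in $\ell_1$-norm, contradicting the minimality.

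For the identity $\gamma(C)=1$, I plan to exploit that two conservative labelings are available: $\mu$ is conservative for $\tf$ by construction, and by Theorem~\ref{thm:genflow-opt}(i) there is an optimal labeling $\mu^*$ conservative for $f^*$; both are finite under~\eqref{cond:root}. On each arc of $C$ I will extract tightness from whichever labeling matches its role in $H$. For a forward arc $ij\in H$, $f^*_{ij}>\tf_{ij}\ge 0$, so $f^*_{ij}>0$, which forces $\gamma^{\mu^*}_{ij}=1$; meanwhile, feasibility of $\mu$ in~\eqref{dual} gives $\gamma^\mu_{ij}\le 1$. For a reverse arc $ij\in H$ (so $ji\in E$) the underlying arc has $\tf_{ji}>0$, so conservativity of $\mu$ for $\tf$ gives $\gamma^\mu_{ji}=1$, i.e.\ $\gamma^\mu_{ij}=1$; dually, $\gamma^{\mu^*}_{ji}\le 1$ yields $\gamma^{\mu^*}_{ij}\ge 1$. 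Taking the product around $C$ and using that the relabeled product around any cycle equals $\gamma(C)$, the $\mu$-bounds give $\gamma(C)\le 1$ and the $\mu^*$-bounds give $\gamma(C)\ge 1$, hence $\gamma(C)=1$.

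Having $\gamma(C)=1$, I can send a small amount of flow around $C$ in the reverse direction: decrease $f^*_{ij}$ on forward arcs of $C$, and increase $f^*_{ji}$ on reverse arcs (those with $ji\in E$), with multipliers scaled consistently along the cycle. Because $\gamma(C)=1$, the excess at every node is unchanged, so the resulting $f'$ is still feasible and has the same objective, hence is optimal. For a forward arc $ij\in C$ we have $f^*_{ij}>\tf_{ij}$ and the perturbation pushes $f^*_{ij}$ toward $\tf_{ij}$; for a reverse arc $ij\in C$ we have $\tf_{ji}>f^*_{ji}$ and the perturbation pushes $f^*_{ji}$ toward $\tf_{ji}$. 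Choosing the perturbation maximal subject to $h\ge 0$ on all of $C$ zeros out at least one coordinate, so $\|f'-\tf\|_1<\|f^*-\tf\|_1$, the desired contradiction.

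The subtle point I expect to need care with is the very first tightness step: the two one-sided bounds $\gamma(C)\le 1$ and $\gamma(C)\ge 1$ only match up because I am allowed to use conservativity of $\mu$ on the \emph{reduced} flow $\tf$ rather than on $f$ itself — this is exactly what upgrades $\gamma^\mu_{ij}\le 1$ to $\gamma^\mu_{ij}=1$ on every reverse arc of $H$, closing the sandwich. Without this upgrade, the $\mu$-side of the argument would not give $\gamma(C)\le 1$ at all.
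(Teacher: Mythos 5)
Your proof is correct and follows essentially the same route as the paper: both establish $\gamma(C)=1$ by combining conservativity of $\mu$ for $\tf$ (giving $\gamma(C)\le 1$) with conservativity of $\mu^*$ for $f^*$ on the reversed cycle (giving $\gamma(C)\ge 1$), then cancel flow around $C$ to contradict the choice of $f^*$ minimizing $\|\tf-f^*\|_1$. Your explicit arc-by-arc case analysis is a minor reorganization of the paper's more compact use of $C\subseteq E_{\tf}$ and $\obe C\subseteq E_{f^*}$, and your closing observation about needing conservativity on the reduced flow $\tf$ rather than on $f$ is exactly right.
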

\begin{proof}
First, let $C\subseteq H$ be a cycle. Since $\mu$ is a conservative labeling for $\tf$ and $C\subseteq E_\tf$, we have $\gamma(C)=\gamma^\mu(C)\le 1$. On the other hand, $\mu^*$ is conservative for $f^*$ and $\obe C\subseteq E_{f^*}$. Therefore 
$\gamma(\obe C)=1/\gamma(C)=1/\gamma^{\mu^*}(C)\le 1$. These together
give $\gamma(C)=\gamma(\obe C)=1$, and also $\gamma^{\mu^*}_e=1$ for
every $e\in C$.
Hence we can modify $f^*$ to another optimal solution by decreasing every ${f^*_e}^{\mu^*}$ value
by a small $\varepsilon>0$. This gives a contradiction to our extremal choice of $f^*$ as the optimal solution minimizing $||\tilde f-f^*||_1$.
\end{proof}

Observe that
\[
e_i(\tf)-e_i(f^*)=\sum_{j:ij\in H}h_{ij}-\sum_{j:ji\in H} \gamma_{ji}h_{ji}
\]
By the optimality of $f^*$, the left hand side is $\le 0$ for
$i=\sink$ and is equal to $e_i(\tf)\ge 0$ otherwise.
The above claim guarantees that $H$, the support of $h$, is acyclic. Consequently, we can easily decompose $h$ to the form
\[
h=\sum_{1\le \ell \le k} h^\ell,
\]
where each $h^\ell$ is a path flow with support $P^\ell$ from a node
$p^\ell$ with $e_{p^\ell}(\tf)>0$ to $\sink$, and $k\le m$.

Such a decomposition is easy to construct by using a topological order
of the nodes for $H$. It is also a special case of the flow
decomposition argument used in Lemma~\ref{lem:both-feas}, see also
\cite{Gondran84,Goldberg91}. 
(The difference is that according to Claim~\ref{cl:nocycle}, four out
of the five types of elementary flows, Types II-V cannot exist as they
contain cycles.)

Let $\lambda^\ell$ denote the value of $h^\ell$ on the first arc of $P^\ell$. 
Since $\mu$ is a conservative labeling and $P^\ell\subseteq H\subseteq E_\tf$, we have $\gamma_{ij}^\mu\le 1$ for all arcs of $P^\ell$ and therefore the relabeled flow $({h^\ell})^\mu$ is monotone decreasing along $P^\ell$. Hence it follows that for every arc $ij$,
\[
h^\mu_{ij}= \sum_{1\le\ell\le k} ({h^\ell_{ij}})^\mu\le  \sum_{1\le\ell\le k}
\frac{\lambda^\ell}{\mu_{p^\ell}}=\sum_{i: V-\sink}e^\mu_i(\tf)=Ex^\mu(\tf).
\]
This completes the proof, since $||\tilde
f^\mu-{f^*}^\mu||_\infty=\max_{ij\in E} h^\mu_{ij}$ (note that if
$f_{ij}^*<f_{ij}$, then $\gamma_{ij}^\mu=1$ must hold).
\end{proof}
\end{document}